\newtheorem{definition}{Definition}[section]
\newtheorem{theorem}{Theorem}[section]
\newtheorem{corollary}[theorem]{Corollary}
\newtheorem{lemma}[theorem]{Lemma}
\newcommand{\E}{\mathbb{E}}
\renewcommand{\O}{\mathcal{O}}
\DeclareMathOperator*{\argmax}{argmax}
\DeclareMathOperator*{\OPT}{OPT}
\DeclareMathOperator*{\ALG}{ALG}
\newcounter{note}[section]
\newcommand{\polylog}{\ensuremath{\text{polylog}}}
\renewcommand{\vec}[1]{\bm {#1}}
\newcommand*\samethanks[1][\value{footnote}]{\footnotemark[#1]}
\newcommand{\jinyan}[1]{\textcolor{Green}{(Jinyan: #1)}}
\newcommand{\xiangning}[1]{\textcolor{blue}{(Xiangning: #1)}}
\DeclareMathOperator*{\pdf}{Pdf}
\DeclareMathOperator*{\pr}{Pr}
\title{Learning Optimal Reserve Price against Non-myopic Bidders}
\author{
	Zhiyi Huang\thanks{The University of Hong Kong. Email: \{zhiyi, jyliu, xnwang\}@cs.hku.hk~.}
	\and
	Jinyan Liu\samethanks
	\and 
	Xiangning Wang\samethanks
}
\date{}
\begin{document}

\begin{titlepage}
	\thispagestyle{empty}
	
	\maketitle
	
	\begin{abstract}
	We consider the problem of learning optimal reserve price in repeated auctions against non-myopic bidders, who may bid strategically in order to gain in future rounds even if the single-round auctions are truthful.
	Previous algorithms, e.g., empirical pricing, do not provide non-trivial regret rounds in this setting in general.
	We introduce algorithms that obtain small regret against non-myopic bidders either when the market is large, i.e., no bidder appears in a constant fraction of the rounds, or when the bidders are impatient, i.e., they discount future utility by some factor mildly bounded away from one.
	Our approach carefully controls what information is revealed to each bidder, and builds on techniques from differentially private online learning as well as the recent line of works on jointly differentially private algorithms.
\end{abstract}

\end{titlepage}

\section{Introduction}
\label{sec:intro}

The problem of designing revenue-optimal auctions based on data has drawn much attention in the algorithmic game theory community lately.
Various models have been studied, notably, the sample complexity model~\cite{cole2014sample, dughmi2014sampling, huang2015making, morgenstern2015pseudo, morgenstern2016learning, roughgarden2016ironing, devanur2016sample, gonczarowski2017efficient} and, very recently, the online learning model~\cite{bubeck2017online}.
However, these existing works all implicitly assume that bidders are myopic in the sense that they will faithfully report their valuations as long as the mechanism used in each round is truthful, without considering how their current bids may affect 1) the choices of mechanisms and 2) the behaviors of other bidders in future rounds in which they may also participate in.
What happens in the presence of non-myopic bidders?

\begin{center}
	\fcolorbox{lightgray}{lightgray}{
		\parbox{0.96\textwidth}{%
			\textbf{Example.~}
			Suppose a seller has a fresh copy of the good for sale every day, where its value for any bidder is bounded between $0$ and $1$.
			The seller sets a price at the beginning of each day.
			Then, a bidder (say, randomly chosen from a large yet finite pool of potential bidders) arrives and submits a bid: If the bid is higher than the price of the day, he gets the item and pays the price.
			Further, suppose the seller adopts the solution proposed by the sample complexity literature and decides to set the price at $0.5$ on day $1$, and in each of the following days to use the empirical price, i.e., the best fixed price w.r.t.\ the bids in previous days.
		}
	}
\end{center}


If bidders are myopic, bidding truthfully is a dominant strategy since the mechanism on each day is effectively posting a take-it-or-leave-it price.
As a result, the seller will be able to converge to the optimal fixed price w.r.t.\ to the pool of potential bidders.

If bidders are non-myopic, however, their strategies are more intriguing. 
A bidder may underbid whenever the empirical price of the day (which is deterministic) is higher than his value, inducing the same result of not winning in the current round but leads to lower future prices compared with truthful bid.
By the same reasoning, even if the bidder wants to win the current copy of the good, he will not bid truthfully; instead, he will submit a bid that equals the current price.
Due to the strategic plays of non-myopic bidders, the seller may in fact converge to price close to $0$.

The take-away message of this example is that directly applying existing algorithms to scenarios where bidders are non-myopic could be a disaster in terms of revenue.
Hence, we ask:
%
\medskip
\begin{quote}
	\em
	Are there learning algorithms that work well even in the presence of non-myopic agents?
\end{quote}
\medskip
In particular, can we extend the online learning model to allow non-myopic bidders, and  design algorithms that provably have small regret against the best fixed auction?

\subsection{Our Results and Techniques}

Our main contribution is a positive answer to the above question, subject to one of the following two assumptions: 
Either the bidders are impatient in the sense that they discount utility in future rounds by some discount factor (mildly) bounded away from $1$ (\emph{impatient bidders}), or any bidder comes in only a small fraction of the rounds (\emph{large market}). 
The assumptions are relative natural, and are further necessary: If the same bidder appears everyday without discounting future utility, no algorithms can guarantee non-trivial regret (e.g., Theorem 3 of \citet{amin2013learning}). 


\paragraph{Single-bidder.}
Let us first consider the case in the example, where only a single bidder comes on each day.
We show the following:

\bigskip

\noindent
\textbf{Informal Theorem 1.~} 
{\em 
	For any $\alpha > 0$, our online pricing algorithm has regret at most  $\alpha T$ against non-myopic bidders when $T \ge \tilde{\O}(\alpha^{-4})$, and either impatient bidders or large market holds.
}

\bigskip

This is equivalent to a sub-linear $\tilde{\O}(T^{3/4})$ regret bound.
Here, we omit in the big-O notation a term that depends on either the discount factor or the maximum number of rounds that a bidder can appear.
The same remark also applies to the other informal theorem in this subsection.

Typical mechanism design approach may seek to design online learning mechanisms such that truthful biddings form an equilibrium (or even better, are dominating strategies) even if bidders are non-myopic, and to get small regret given truthful bids.
However, designing truthful mechanisms that learn in repeated auctions seems beyond the scope of existing techniques.

We take an alternative path by relaxing the incentive property: We aim to ensure that bidders would only submit ``reasonable'' bids within a small neighborhood of the true values.
Note that the notion of regret is robust to small deviations of bids. 
Applying an online learning algorithm on ``reasonable'' bids (instead of truthful bids) results in only a small increase in the regret.

To explain how to achieve the relaxed incentive property, first consider why a bidder would lie.
Since the single-round auctions are truthful, a bidder can never gain in the current round by lying.
Lying is preferable only if the future gain outweighs the current loss (if any).
The seller's algorithm in the example, i.e., using empirical prices, suffers on both ends.
On one hand, lying has no cost when the bid and true value are both greater (or both smaller) than the price.
On the other hand, the current bid has huge influence on future prices; in particular, the first bid dictates the price in the second round.

We will design online auctions such that (1) deviating too far from the true value in the current round is always costly, and (2) the influence of the current bid on future prices/utilities is bounded.
Achieving the first property turns out to be easy.
Note that lying has a cost whenever the price falls between the bid and true value.
On each day, with some small probability our mechanisms pick the price randomly to ensure that the price has a decent probability to fall between the value and any ``unreasonable'' bid that deviates a lot.

The second property may seem trivial through the following incorrect argument: 
Online learning algorithms, e.g., multiplicative weight, are intrinsically insensitive to the bid on any single day and, thus, satisfy the second property.
The argument incorrectly assumes subsequent bids will remain the same regardless of the current bid, omitting that they are controlled by strategic bidders and, thus, are affected by the current bid through its influence on subsequent prices.
We use an implementation of the follow-the-perturbed-leader algorithm based on the tree-aggregation technique~\cite{agarwal2017price} from differential privacy.
Note that due to the above reasoning, differential privacy does not imply the second property.
Nevertheless, we show that the algorithm in fact satisfies a slightly stronger guarantee than differential privacy, which is sufficient for our proof.

Our techniques further allow us to get essentially the same bound even in the bandit setting, i.e., seller observes if the bidder buys the copy but not his bid. We only make a sketch in the main text and the full proof 
is shown in Appendix~\ref{sec:single-bandit}. 

\paragraph{Multi-bidder.}
Our approach can be extended to obtain positive results with $n > 1$ bidders and $m > 1$ copies of the good per round, with some extra ingredients we shall explain shortly.

\bigskip

\noindent
\textbf{Informal Theorem 2.~} 
{\em 
	For any $\alpha > 0$, our algorithm runs an approximate version of Vickrey with an anonymous reserve price on each day with regret at most $\alpha m T$ against the best fixed reserve price if (1) $T \ge \tilde{O}(\frac{n}{m \alpha^{4.5}})$, (2) $m \ge \tilde{O}(\frac{\sqrt{n}}{\alpha^3})$, (3) either impatient bidders or large market holds.
}

\bigskip


Simply running a follow-the-perturbed-leader algorithm with tree-aggregation as in the single-bidder case does not work in the multi-bidder setting because a bidder's current bid can now affect other bidders' subsequent bid through the allocations and payments in the current round.
%

To make our approach work in the multi-bidder setting, we need two more ingredients.
First, we need to refine our model to control what information are revealed to the bidders in the single-round auctions.
At the end of each round, each bidder can observe his own outcome, i.e., whether he wins a copy of the good and his payment.
At any round, a bidder's (randomized) bid can depend on his own bids and outcomes in previous rounds, but not those of the other bidders.
The information structure plays a crucial role in the argument of bidders' incentives.

Then, it boils down to bounding the influence of a bidder's bid on other bidders' outcomes in the same round.
This is exactly the main feature of jointly differentially private mechanisms developed in a series of recent papers~\cite{kearns2014mechanism, hsu2016private, hsu2016jointly, huang2018better}.
After choosing the reserve price in each round, we run an approximate version of Vickrey with reserve as follows.
First, run the jointly private algorithm of \citet{hsu2016private} to get a set of roughly $m$ candidate bidders and an approximate Vickrey price.
Then, for each candidate bidder, offer a take-it-or-leave-it price that equals the chosen reserve price or the approximate Vickrey price, whichever is higher.
The joint privacy of single-round auctions together with the previous argument on the learning process bound how much a bidder's current bid can affect his future utility.
Finally, the approximation guarantees of the joint private algorithm ensure that the revenue loss is bounded compared with running Vickrey with the same reserve.

\subsection{Related Work}
\label{sec:related}

There is a vast literature on (revenue) optimal auction design.
We discuss only the most related works in the single-parameter setting. 
\citet{myerson1981optimal} showed that optimal auctions are (ironed) virtual surplus maximizers.
Hence, there are a continuum of possible Myerson auctions since each of them is characterized by $n$ mapping from values to virtual values.
\citet{devanur2016sample} observed that to get a $1 - \alpha$ approximation it suffices to consider a finite set of Myerson-type auctions, each of which is given by $n$ mappings from discretized values to discretized virtual values that are multiples of $\alpha$ between $0$ and $1$.
We take the same discretized viewpoint in this paper.
If the bidders' value distributions are i.i.d.\ and regular, the optimal auction has a simpler form of a Vickrey auction with a reserve price that equals the monopoly price of the distribution.
Further, even if distributions are not i.i.d.\ a Vickrey auction with a suitable reserve still gets a constant approximation~\cite{hartline2009simple}.



\citet{cole2014sample} proposed to study the sample complexity of optimal auctions, and showed upper and lower bounds polynomial in (the inverse of) the error term $\alpha$ and the number of bidders $n$ for single-parameter problems. 
\citet{huang2015making} characterized the sample complexity up to a log factor for the special case of selling to a single bidder.
\citet{morgenstern2015pseudo, morgenstern2016learning, roughgarden2016ironing, devanur2016sample, gonczarowski2017efficient} significantly improved the upper bounds for both Myerson-type auctions and simpler auction formats in the single-parameter setting.
\citet{bubeck2017online} revisited the problem in an online-learning model and introduced algorithms that simultaneously achieve near optimal regret against arbitrary bidder values, improving a series of previous results \cite{blum2004online, blum2005near, kleinberg2003value}, and near optimal sample complexity if values are drawn from a underlying distribution.
\citet{dughmi2014sampling} proved that exponentially many samples are needed in general for multi-parameter problems.
\citet{cai2017learning} considered multi-item auctions and showed that polynomially many samples is enough for getting a constant multiplicative approximation plus an $\alpha$ additive loss.
These works implicitly assume myopic bidders so either previous bids are truthful when previous auctions are truthful, or an approximation of the prior distribution can be estimated from the bids in non-truthful previous auctions through econometric techniques~\cite{nekipelov2015econometrics}.
This paper takes a more proactive approach of directly investigating how to design the learning process together with the auction format of each round so that we can extract meaningful information from the bids even if bidders are non-myopic.

Our results build on two lines of work in differential privacy, namely, differentially private online learning algorithms, and jointly differentially private algorithms.
\citet{agarwal2017price} introduced an $(\varepsilon,\delta)$-differentially private algorithm with regret $\tilde{O}(\sqrt{T}+{\sqrt{K}}/{\varepsilon})$ for the full information setting (a.k.a.\ the expert problem), and regret $\tilde{O}(\sqrt{TK}/\varepsilon)$ for the bandit setting, improving the results in a series of previous work \cite{tossou2016algorithms, thakurta2013nearly, jain2012differentially, jain2014near}. 
Here, $T$ is the number of rounds and $K$ is the number of experts/arms.
Independently, \citet{tossou2017achieving} discovered essentially the same result for the bandit setting.
\citet{kearns2014mechanism} revisited the notion of differential privacy by Dwork et al.~\cite{dwork2006calibrating, dwork2006differential} and gave a relaxation called joint differential privacy which can be applied to many combinatorial problems for which no differentially private algorithm gets non-trivial approximation.
\citet{hsu2016private} introduced the billboard lemma which serves as the cornerstone of subsequent works on joint differential privacy, and developed a jointly private algorithm for matching and combinatorial auctions with gross substitute valuations.
\citet{hsu2016jointly} proposed a general framework for designing jointly private algorithms for a large family of convex optimization problems.
Finally, \citet{huang2018better} introduced a jointly private algorithm that is optimal up to log factors for packing problems.

\paragraph{Previous Models of Repeated Auctions with Non-myopic Bidders.}
Alternative models of repeated auctions against non-myopic bidders have been considered in the literature.
\citet{devanur2015perfect} and \citet{immorlica2017repeated} studied the equilibria of repeated sales when seller cannot commit to a pricing strategy and, thus, must play according to a perfect Bayesian equilibrium.
In contrast, we adopt the standard assumption that the seller as a mechanism designer can commit to a strategy upfront.
Further, their models assume the same bidder comes every day with the same value drawn from a prior upfront, while our model assumes no prior, allows different bidders on different days, and allows the same bidder to have distinct values on different days.
\citet{amin2014repeated, amin2013learning} considered a stochastic version of the online learning model in this paper with the same bidder coming every day, and proposed algorithms with sub-linear regrets.
\citet{mohri2014optimal} showed how to get better regret bounds in the special case when bidder has the same value on different days. 
We stress that our model is more general as it assumes no prior and allows different bidders on different days.
Allowing different bidders, in particular, brings a lot of challenges.



%

\paragraph{Previous Applications of Differential Privacy in Mechanism Design.}
Although differential privacy has been applied to mechanism design before our work, its role in previous work (e.g, \cite{mcsherry2007mechanism, nissim2012approximately, epasto2018incentive, hsu2016jointly, hsu2016private, huang2018better}) is fundamentally different from that in ours.
First, following a proposal by \citet{mcsherry2007mechanism}, most previous work used differential privacy to achieve approximate incentive compatibility in the sense that misreporting cannot increases a bidder's utility by more than an $\epsilon$ amount.
Further, such mechanisms can be coupled with a strictly truthful mechanism to achieve exact incentive compatibility in some specific problems~\cite{nissim2012approximately}.
In contrast, our work makes use of some techniques from differential privacy (rather than the concept itself) to control the influence of a bidder's bid in any single round on his utility in future rounds.
Then, we build on this property to bound how much a bidder may deviate from his/her true value in equilibria (instead of the amount of incentive to deviate as in previous approaches).
Hence, our approach is conceptually different from the previous ones.

Second, previous work generally used differential privacy to design one-shot mechanisms, while our work considers repeated auctions.
Characterizing bidder's behaviors is notoriously hard, a single bidder's deviation in a single round may have the cascading effect of changing the bids of all bidders in subsequent rounds. 
To this end, we propose to use joint differential privacy as a mean to control information dissemination and, consequently, bidders' behaviors in future rounds in repeated auctions.
This is a novel application of joint differential privacy to our knowledge.

In concurrent and independent work, \citet{epasto2018incentive} considered incentive-aware learning and used differential privacy to control the amount of a bidder's deviation from his true value using an approach similar to ours. 
However, their work focused on a one-shot interaction environment while ours consider repeated auctions. 
The results are therefore  incomparable.

\section{Preliminary}

\subsection{Single-bidder Model}
Let there be a seller who has a fresh copy of the good for sale every day for a total of $T$ days.
Exactly one bidder comes on each day, but the same bidder may show up in multiple days.
We assume that a bidder can come on at most $\tau$ days for some $\tau \le T$.
Consider the following interactions between the seller and bidders.
On each day $t \in [T]$:
\begin{enumerate}[itemsep=0mm]
	\item Seller sets a price $p_t \in [0, 1]$ as a (randomized) function of previous bids $b_1, \dots, b_{t-1}$.
	\item A bidder arrives with value $v_t \in [0, 1]$ and submits a bid $b_t \in [0, 1]$ as a (randomized) function of his value $v_t$, his bids and auction outcomes in the previous rounds that he participates in.
	\item Seller observes the bid $b_t$ but not the value $v_t$.
	\item Bidder receives the good and pays $p_t$ if $b_t \ge p_t$; nothing happens otherwise.
\end{enumerate}
Here, it is crucial to assume that a bidder does not observe the bids and auction outcomes of the rounds in which he does not participate.

\paragraph{Rational Bidders.}
%
A bidder's utility in a single round is quasi-linear, namely, $v_t - p_t$ if he gets the good and $0$ otherwise.
For some discount factor $\gamma \in [0, 1]$, a bidder discounts future utility by at least $\gamma$ and seeks to maximize the sum of discounted utilities.
For example, suppose a bidder comes on days $t_1$, $t_2$, and $t_3$.
When the bidder considers his strategy on day $t_1$, he would sum up his utilities from all three days, discounting future utility on day $t_2$ by at least $\gamma$, and that on day $t_3$ by at least $\gamma^2$.
If $\gamma = 0$, it becomes the model with myopic bidders.
If $\gamma = 1$, bidders simply seek to maximize the sum of their utilities.
Note that we do not assume that the values of the same bidder must be the same on different days (although they could be).

\paragraph{Seller.}
As a mechanism designer, the seller can commit to a mechanism, i.e., fixing the (randomized) pricing functions $p_1, \dots, p_T$ upfront.
Hence, we shall interpret the $T$-round interactions as a game among the bidders, with the seller designing (part of) the rules.
The seller aims to maximize revenue, i.e., the sum of the prices payed by the bidders over all $T$ rounds, denoted as $\ALG$.

We adopt the standard regret analysis of online learning and compare $\ALG$ with the optimal fixed price in hindsight, namely, $\max_{p \in [0, 1]} p \cdot \sum_{t\in[T]} \mathbf{1}_{v_t \ge p}$.\footnote{Note that a bidder's best strategy against a fixed price is truthful bidding.}
We denote by $\OPT(\{v_t\}_t)$ the revenue of the best fixed price w.r.t.\ a given sequence of values $\{v_t\}_t$.
The \emph{regret} of the algorithm is therefore %
\[
\OPT(\{v_t\}_t) - \ALG ~.
\]
We will further split the regret into two parts as follows in our analysis:
\[
\OPT(\{v_t\}_t) - \ALG = \underbrace{\OPT(\{v_t\}_t) - \OPT(\{b_t\}_t)}_{\text{game-theoretic regret}} + \underbrace{\OPT(\{b_t\}_t) - \ALG}_{\text{learning regret}}
\]

\paragraph{Assumptions.}
We consider instances that satisfy one of the following two assumptions.
The hardness result by~\citet{amin2013learning} implies that no non-trivial regret is possible if neither holds.
\begin{itemize}[itemsep=0mm, parsep=0mm, topsep=1mm]
	\item \textbf{Large-market:~} No bidder participates in a significant portion of rounds, i.e., $\tau = o(T)$.
	\item \textbf{Impatient bidder:~} $\gamma$ is (mildly) bounded away from $1$, i.e., $\frac{1}{1 - \gamma} = o(T)$.
\end{itemize}
%




\subsection{Multi-bidder Model}

The model extends straightforwardly to the setting when multiple bidders come and the seller has multiple copies for sale every day.
We sketch the model below and highlight a few key assumptions.
The seller has $m$ fresh copies of the good for sale every day for a total of $T$ days.
$n$ buyers come on each day and a bidder can show up on at most $\tau \le T$ days.
On each day $t \in [T]$:
\begin{enumerate}[itemsep=0mm]
	\item Seller chooses an auction $M_t$ based on previous bid vectors $\vec{b}_1, \dots, \vec{b}_{t-1}$.
	\item $n$ bidders arrive with values $\vec{v}_t \in [0, 1]^n$ and observe $M_t$.
	\item Each bidder $i$ submits a bid $b_{ti} \in [0, 1]$ as a (randomized) function of the auction $M_t$, his value $v_{ti}$, and his values, bids, and auction outcomes in the previous rounds that he participates in.
	\item Seller observes the bids $\vec{b}_t$ but not the values $\vec{v}_t$, and runs $M_t$ with the bids.
	\item Each bidder observes his own outcome given by $M_t$.
\end{enumerate}
Again, a bidder cannot observe the auction outcomes of the rounds in which he does not participate.
Further, a bidder cannot observe the auction outcomes of the other bidders, i.e., who gets a copy of the good and how much they pay, even if he participates in that round.
Both assumptions on the information structure are crucial for our incentive argument.

Bidders are rational and seek to maximize the sum of their discounted utilities.
Seller aims to maximize the total revenue of all $T$ rounds, denoted as $\ALG$.
The benchmark, however, is not the revenue of the best fixed arbitrary auction.
Instead, we will compare $\ALG$ with the revenue of the best fixed auction within a certain family, in this paper, the family of Vickrey auctions with an anonymous reserve price.
We denote this benchmark as $\OPT( \{ \vec{v}_t \}_t )$. 

In online learning, the algorithm usually has the same strategy space as the offline benchmark. 
Our model, however, allows the seller to use auctions outside the family of benchmark auctions.
We stress that our algorithm uses this flexibility only to implement approximate versions of Vickrey auctions with reserves.
Hence, the benchmark is still meaningful.
It is an interesting open question if we can achieve positive results using only Vickrey auctions with reserves.

One can ask the same learning question about other families of auctions, e.g., learning the best anonymous Myerson-type auctions as in \citet{roughgarden2016ironing} or the best Myerson-type auctions as in \citet{devanur2016sample}.
Extending the techniques in this paper to handle these more complicated auction formats against non-myopic bidders is another interesting future direction.

\subsection{Differential Privacy Preliminaries}
\label{sec:dp-definitions}
Our techniques rely on the notion of differential privacy by Dwork et al.~\cite{dwork2006calibrating, dwork2006differential}, and its relaxation called joint differential privacy by \citet{kearns2014mechanism}.

\begin{definition}[Differential Privacy~\cite{dwork2006calibrating, dwork2006differential}]	
	An algorithm $A: \mathcal{C}^n \mapsto \mathcal{R}$ is $(\epsilon,\delta)$-differentially private if for all $S \subseteq \mathcal{R}$ and for all neighboring datasets $D,D'\in \mathcal{C}^n$ that differ in one entry, there is $\Pr[A(D) \in S] \le e^\epsilon \Pr[A(D') \in S]+\delta$.
	%
\end{definition}
\begin{definition}[Joint Differential Privacy~\cite{kearns2014mechanism}]
	An algorithm $A: \mathcal{C}^n \to \mathcal{R}^n$ is $(\varepsilon,\delta)$-\textnormal{jointly differentially private} if for any $i$, any $D,D'\in \mathcal{C}^n$ differ only in the $i$-th entry, and any $S\in \mathcal{R}^{n-1}$, there is $\pr [A(D)_{-i}\in S]\le e^{\varepsilon}\pr [A(D')_{-i}\in S]+\delta$.
\end{definition}
Differential privacy is immune to post-processing. 
Running an algorithm on the output of a differentially private algorithm without further access to the data is still differentially private.
\begin{lemma}[Post-processing~\cite{dwork2014algorithmic}]\label{lemma:post-processing}
	Let $f:\mathcal{C} \to \mathcal{R}$ be a randomized algorithm that is $(\epsilon,\delta)$-differentially private. Let $g:\mathcal{R} \to \mathcal{R}'$ be an arbitrary randomized mapping. Then $g \circ f : \mathcal{C} \to \mathcal{R}'$ is $(\epsilon,\delta)$-differentially private.
\end{lemma}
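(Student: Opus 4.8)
The plan is to reduce to the case of a deterministic post-processing map and then invoke the definition of differential privacy directly. First I would treat deterministic $g$. Fix any neighboring $D, D' \in \mathcal{C}$ and any (measurable) target set $S' \subseteq \mathcal{R}'$, and set $S = g^{-1}(S') = \{ r \in \mathcal{R} : g(r) \in S' \}$. Since $g$ is deterministic, the events $\{ g(f(D)) \in S' \}$ and $\{ f(D) \in S \}$ coincide, so $\Pr[g(f(D)) \in S'] = \Pr[f(D) \in S]$. Applying the $(\epsilon,\delta)$-differential privacy of $f$ to the set $S$ gives $\Pr[f(D) \in S] \le e^{\epsilon}\Pr[f(D') \in S] + \delta$, and rewriting the right-hand side back in terms of $g$ yields $\Pr[g(f(D)) \in S'] \le e^{\epsilon}\Pr[g(f(D')) \in S'] + \delta$. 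As $S'$ and the neighboring pair were arbitrary, $g \circ f$ is $(\epsilon,\delta)$-differentially private whenever $g$ is deterministic.

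Next I would remove the assumption that $g$ is deterministic. The idea is to model the internal randomness of $g$ by an auxiliary seed $\omega$ drawn from some distribution $\mu$, independently of the randomness of $f$, so that $g(r)$ is distributed as $\hat{g}(r, \omega)$ for a deterministic map $\hat{g}$. For each fixed value of $\omega$ the map $\hat{g}(\cdot, \omega)$ is deterministic, so by the previous paragraph $\hat{g}(f(\cdot), \omega)$ satisfies the $(\epsilon,\delta)$ inequality. Integrating this inequality over $\omega \sim \mu$ and using independence, $\Pr[g(f(D)) \in S'] = \int \Pr[\hat{g}(f(D), \omega) \in S']\, d\mu(\omega) \le \int \big( e^{\epsilon}\Pr[\hat{g}(f(D'), \omega) \in S'] + \delta \big)\, d\mu(\omega) = e^{\epsilon}\Pr[g(f(D')) \in S'] + \delta$, which is exactly the required guarantee.

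The proof is short, and the only point requiring care is the reduction from randomized to deterministic $g$: one must ensure that the coins of $g$ are independent of those of $f$ (so that conditioning on $\omega$ leaves the distribution of $f(D)$ unchanged), and that the preimage $g^{-1}(S')$ is measurable so that the privacy guarantee of $f$ may legitimately be invoked on it. Both hold under the standard convention that $f$ and $g$ draw independent randomness and that all maps are measurable, so I expect no genuine obstacle; the content of the lemma is simply that composing with a data-independent map can only coarsen the output distribution and therefore cannot degrade privacy.
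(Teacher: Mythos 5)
Your proof is correct. The paper itself does not prove this lemma---it imports it directly from the cited reference (Dwork and Roth, \emph{The Algorithmic Foundations of Differential Privacy})---and your argument (deterministic case via preimages, then reduction of randomized $g$ to a mixture of deterministic maps with independent seed, integrating the $(\epsilon,\delta)$ inequality over the seed) is precisely the standard proof given in that reference, with the independence and measurability caveats correctly identified.
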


It also guarantees that privacy parameter scales gracefully when integrating different private subroutines into a single algorithm.
This is formalized as the following composition theorem.
\begin{lemma}[Composition Theorem~\cite{dwork2010boosting}]
	\label{lemma:composition-theorem}
	Let $f_i:\mathcal{C} \to \mathcal{R}$ be an $(\epsilon_i,\delta_i)$-differentially private algorithm for $i \in [m].$ Then $f_{[m]} = (f_1(D), \cdots ,f_m(D))$ is $(\sum_{i=1}^m\epsilon_i, \sum_{i=1}^m\delta_i)$-differentially private.
\end{lemma}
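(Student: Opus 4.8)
The plan is to reduce the $m$-fold statement to the two-fold case by a routine induction on $m$: if $(f_1,\dots,f_{m-1})$ is already $(\sum_{i<m}\epsilon_i,\sum_{i<m}\delta_i)$-private and $f_m$ is $(\epsilon_m,\delta_m)$-private, then viewing the tuple $(f_1,\dots,f_{m-1})$ as a single private algorithm and composing it with $f_m$ yields the claim. So the entire content lies in composing one $(\epsilon_1,\delta_1)$-private algorithm with one $(\epsilon_2,\delta_2)$-private algorithm. Since each $f_i$ uses independent internal randomness, the joint output on input $D$ is the product measure $P_1\times P_2$, where $P_i$ denotes the distribution of $f_i(D)$ and $Q_i$ that of $f_i(D')$.

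The pure case $\delta_1=\delta_2=0$ is immediate: the likelihood ratio of a product factorizes, so for every outcome $(r_1,r_2)$ we have $P_1(r_1)P_2(r_2)\le e^{\epsilon_1}e^{\epsilon_2}Q_1(r_1)Q_2(r_2)$, and summing over any event gives the $e^{\epsilon_1+\epsilon_2}$ bound. The job of the rest of the argument is to carry the $\delta_i$'s through this factorization additively, rather than with a spurious $e^{\epsilon}$ weight or a cross term $\delta_1\delta_2$.

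To that end I would use the excess-mass decomposition of approximate privacy. For each $i$, let $B_i=\{r:P_i(r)>e^{\epsilon_i}Q_i(r)\}$ be the region where the likelihood ratio is too large, and split $P_i=P_i^{\mathrm{good}}+\mu_i$ with $P_i^{\mathrm{good}}(r)=\min\{P_i(r),e^{\epsilon_i}Q_i(r)\}$ and $\mu_i(r)=\max\{0,P_i(r)-e^{\epsilon_i}Q_i(r)\}$. By construction $P_i^{\mathrm{good}}\le e^{\epsilon_i}Q_i$ pointwise, and applying the privacy definition to the event $B_i$ gives $\|\mu_i\|=P_i(B_i)-e^{\epsilon_i}Q_i(B_i)\le\delta_i$, while $\|P_i^{\mathrm{good}}\|\le 1$ and $\|P_i\|=1$. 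Expanding the product and regrouping as $P_1\times P_2=P_1^{\mathrm{good}}\times P_2^{\mathrm{good}}+\mu_1\times P_2^{\mathrm{good}}+P_1\times\mu_2$, the first term is bounded on any event $S$ by $e^{\epsilon_1+\epsilon_2}(Q_1\times Q_2)(S)$ via the pointwise inequalities, while the two remaining terms contribute at most $\|\mu_1\|\,\|P_2^{\mathrm{good}}\|+\|P_1\|\,\|\mu_2\|\le\delta_1+\delta_2$ in total. This yields $(P_1\times P_2)(S)\le e^{\epsilon_1+\epsilon_2}(Q_1\times Q_2)(S)+\delta_1+\delta_2$, which is exactly $(\epsilon_1+\epsilon_2,\delta_1+\delta_2)$-privacy.

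The main obstacle is precisely this last regrouping: the naive expansion into four terms symmetrizes the excess mass to $e^{\epsilon_2}\delta_1+e^{\epsilon_1}\delta_2+\delta_1\delta_2$, losing the clean additive bound. Merging $\mu_1$ with the full mass-one measure $P_2^{\mathrm{good}}$ and $\mu_2$ with the full measure $P_1$ --- rather than pairing each $\mu_i$ with the other good part --- is what removes both the $e^{\epsilon}$ factors and the $\delta_1\delta_2$ cross term. The only other points requiring care are that for a continuous range $\mathcal{R}$ the sums above should be read as integrals of densities against a common dominating measure, and that in the adaptive formulation (where $f_2$ may depend on the output of $f_1$) the same decomposition is applied to the conditional law of $f_2$ and then integrated over the outcome of $f_1$; neither changes the bookkeeping.
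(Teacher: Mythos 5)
Your proof is correct. Note that the paper itself offers no proof of this lemma --- it is imported verbatim from the literature (cited to Dwork et al.'s boosting paper) and used as a black box --- so there is no in-paper argument to compare against; what you have written is a valid self-contained proof of the cited result. Your route is the standard one: reduce to two-fold composition by induction (legitimate here, since the lemma is non-adaptive and each $f_i$ uses independent randomness on the same input $D$), then handle the two-fold case by the excess-mass decomposition $P_i = \min\{P_i, e^{\epsilon_i}Q_i\} + \mu_i$ with $\lVert\mu_i\rVert \le \delta_i$. The one genuinely delicate step, which you identify and resolve correctly, is the asymmetric regrouping $P_1\times P_2 = P_1^{\mathrm{good}}\times P_2^{\mathrm{good}} + \mu_1\times P_2^{\mathrm{good}} + P_1\times\mu_2$ (pairing each excess measure with a full, mass-at-most-one measure rather than with the other good part), which yields the clean additive $\delta_1+\delta_2$ instead of a bound inflated by $e^{\epsilon}$ factors or a $\delta_1\delta_2$ cross term; this is essentially the argument in the standard references. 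Your closing remarks --- that continuous ranges require densities against a dominating measure, and that adaptivity would be handled by applying the decomposition to the conditional law of $f_2$ --- are accurate, and the second is not even needed for the lemma as stated, since the $f_i$ here do not depend on each other's outputs.
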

There are some basic mechanisms to achieve differential privacy. 
In particular, we make use of the Gaussian mechanism, which can be applied to numerical problems. 
\begin{lemma}[Gaussian Mechanism~\cite{dwork2006calibrating}]
	\label{lemma:gaussian-mechanism}
	For a function $f:\mathcal{C} \to R^m$ with $\ell_2$ sensitivity $\Delta_2(f)=\max \lVert f(\mathcal{C})-f(\mathcal{C}') \rVert_2$, Gaussian mechanism outputs $f(\mathcal{C}) + Z$, where $Z \sim {N}(0,\sigma^2)^m$ with $\sigma \ge \sqrt{2 \ln(1.25/\delta)} \Delta_2(f)/\epsilon$ is $(\epsilon,\delta)$-differentially private.
\end{lemma}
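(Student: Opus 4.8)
The plan is to analyze the \emph{privacy loss random variable} and reduce the claim to a one-dimensional Gaussian tail estimate. Fix neighboring inputs $\mathcal{C},\mathcal{C}'$, write $\mu = f(\mathcal{C})$ and $\mu' = f(\mathcal{C}')$, and set $v = \mu - \mu'$, so $\|v\|_2 \le \Delta_2(f)$. For an output $x$, define the privacy loss $L(x) = \ln \frac{p_\mu(x)}{p_{\mu'}(x)}$, where $p_\mu$ denotes the density of $\mathcal{N}(\mu,\sigma^2 I_m)$. The first step I would establish is a helper fact: if $\Pr_{x \sim \mu + Z}[L(x) > \epsilon] \le \delta$ for every pair of neighbors, then the mechanism is $(\epsilon,\delta)$-differentially private. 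This follows by splitting any event $S$ along the ``bad'' set $B = \{x : L(x) > \epsilon\}$: the mass on $B$ is at most $\delta$, while off $B$ the density ratio is at most $e^\epsilon$ pointwise, yielding $\Pr[\mu + Z \in S] \le \delta + e^\epsilon \Pr[\mu' + Z \in S]$.

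Next I would compute the law of $L$ explicitly. Writing $x = \mu + Z$ with $Z \sim \mathcal{N}(0,\sigma^2 I_m)$ and expanding the two Gaussian densities,
\[
L(x) = \frac{\|x-\mu'\|_2^2 - \|x-\mu\|_2^2}{2\sigma^2} = \frac{2\, Z\cdot v + \|v\|_2^2}{2\sigma^2} = \frac{Z\cdot v}{\sigma^2} + \frac{\|v\|_2^2}{2\sigma^2}.
\]
Since $Z\cdot v \sim \mathcal{N}(0,\sigma^2\|v\|_2^2)$, the privacy loss is itself Gaussian, $L \sim \mathcal{N}\big(b^2/2\sigma^2,\ b^2/\sigma^2\big)$ with $b = \|v\|_2 \le \Delta_2(f)$. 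After standardizing, the threshold governing the tail is $\epsilon\sigma/b - b/(2\sigma)$, which is decreasing in $b$; hence $\Pr[L > \epsilon]$ is maximized at $b = \Delta_2(f)$, and I would take $b = \Delta_2(f) =: \Delta$ for the worst case.

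It then remains to bound the one-dimensional tail. With $b = \Delta$,
\[
\Pr[L > \epsilon] = \Pr\Big[\mathcal{N}(0,1) > \tfrac{\epsilon\sigma}{\Delta} - \tfrac{\Delta}{2\sigma}\Big],
\]
and I would invoke the elementary estimate $\Pr[\mathcal{N}(0,1) > t] \le \frac{1}{t\sqrt{2\pi}}e^{-t^2/2}$ together with the hypothesis $\sigma \ge \sqrt{2\ln(1.25/\delta)}\,\Delta/\epsilon$. Substituting $\sigma = c\Delta/\epsilon$ with $c = \sqrt{2\ln(1.25/\delta)}$ gives threshold $t = c - \epsilon/(2c)$ and $e^{-t^2/2} = \tfrac{\delta}{1.25}\,e^{\epsilon/2 - \epsilon^2/(8c^2)}$, so the remaining task is to check that the prefactor $\tfrac{1}{t\sqrt{2\pi}}$ and the exponential correction together stay below $1.25$, driving the bound below $\delta$ and closing the argument via the helper fact.

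I expect this final arithmetic to be the main obstacle: the constant $1.25$ is tuned precisely so the tail bound closes, and making it go through cleanly relies on the standard regime $\epsilon \le 1$ (so that the $-\Delta/(2\sigma)$ correction and the $\ln t$ slack in the tail inequality can be absorbed). I would therefore state this mild assumption explicitly if it is not already in force; all the reduction steps preceding it are routine.
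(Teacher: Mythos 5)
The paper offers no proof of this lemma for you to be compared against: it is quoted as a preliminary, with the citation to Dwork et al.\ standing in for the argument. Your proposal reconstructs the standard proof of the Gaussian mechanism (the one in the Dwork--Roth monograph): the privacy-loss random variable, the observation that $L$ is itself Gaussian with law $\mathcal{N}\big(b^2/2\sigma^2,\, b^2/\sigma^2\big)$, monotonicity forcing the worst case at $b=\Delta_2(f)$, the reduction of $(\epsilon,\delta)$-DP to the one-sided tail event $\{L>\epsilon\}$, and a Mills-ratio estimate tuned to the constant $1.25$. That reconstruction is sound, and the caveat you flag is exactly the right one: the hypothesis $\epsilon\le 1$ is not merely convenient for the arithmetic --- with $\sigma=\sqrt{2\ln(1.25/\delta)}\,\Delta_2(f)/\epsilon$ the guarantee itself (not just this proof) fails for sufficiently large $\epsilon$, so the lemma as stated in the paper is missing a hypothesis that your writeup correctly makes explicit. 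One additional minor point: your closing step bounds $\Pr[\mathcal{N}(0,1)>t]$ by $\frac{1}{t\sqrt{2\pi}}e^{-t^2/2}$ with $t=c-\epsilon/(2c)$, and this prefactor is too lossy when $t$ is small, i.e., when $\delta$ is a large constant (at $\epsilon=1$ the inequality stops closing around $\delta\approx 0.75$); in that regime the lemma is still true (the exact Gaussian tail is below $\delta$), but your chain of inequalities does not establish it. This is immaterial for the paper, which only ever invokes the lemma with $\delta=\epsilon/T$, but for a self-contained proof you should either assume $\delta\le 1/2$, say, or handle small $t$ with the exact tail rather than the Mills bound.
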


\section{Single-bidder Case}
\label{sec:single-sketch}
Following the treatment of \citet{bubeck2017online}, we restrict our attentions to prices that are multiples of $\alpha$ and treat each of such prices as an expert in an online learning problem.
Let $K = \frac{1}{\alpha} + 1$ denote the number of decretized prices.
Consider an expert problem with $K$ experts with the $i$-th expert corresponding to price $(i-1)\alpha$.
We will assume without loss that bids fall into the discretized price set.
The bid on day $t$, $b_t$, induces a gain vector $\vec{g}_t$ such that the gain of the $i$-th expert is $(i-1)\alpha$ if $b_t \ge (i-1)\alpha$ and $0$ otherwise.
That is,
%
$
\vec{g}_t = \big( 0, \alpha, 2\alpha, \dots, \lfloor \frac{b_t}{\alpha} \rfloor \alpha, 0, \dots, 0 \big)
$.
Further denote the accumulative gain vector up to time $t$ as $\vec{G}_t = \sum_{j \in [t]} \vec{g}_j$.

\begin{theorem}
	\label{thm:single-bidder}
	For any $\alpha > 0$, there is an online algorithm with regret $\O(\alpha T)$ when $T \ge \tilde{\O} \big( \tau \alpha^{-4} \big)$ under the large market assumption, or $T \ge \tilde{\O} \big( \frac{\alpha^{-4}}{1-\gamma}  \big)$ under the impatient bidder assumption.
\end{theorem}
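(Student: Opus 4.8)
The plan is to split the regret exactly as in the model, $\OPT(\{v_t\}_t)-\ALG = \big(\OPT(\{v_t\}_t)-\OPT(\{b_t\}_t)\big) + \big(\OPT(\{b_t\}_t)-\ALG\big)$, and to run a single algorithm that keeps both pieces small. The algorithm I would use posts, with a small probability $\mu$, a price drawn uniformly from the $K$-point grid (an \emph{exploration} step), and otherwise posts the grid price $(i^{*}-1)\alpha$ with $i^{*}=\argmax_i (\hat{\vec G}_{t-1})_i$, where the cumulative-gain estimate $\hat{\vec G}_{t-1}$ is maintained by the tree-aggregation mechanism of \citet{agarwal2017price} (an \emph{exploitation} step). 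Exploration is what punishes large misreports in the current round; tree-aggregation is what keeps the current bid from moving future prices. The two assumptions enter only through one quantity $S := \min\{\tau,\,1/(1-\gamma)\}$, the (discounted) number of future rounds in which a given bidder can still participate, which is $O(\tau)$ under large market and $O(1/(1-\gamma))$ under impatient bidders.

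For the learning regret I would feed the gain vectors $\vec g_t$ induced by the bids into the $(\varepsilon,\delta)$-private expert bound of \citet{agarwal2017price}, obtaining $\OPT(\{b_t\}_t)-\ALG \le \tilde{\O}(\sqrt T + \sqrt K/\varepsilon) + \mu T$, where the last term pays for the exploration rounds on which we do not play the learner's price. For the game-theoretic regret I would argue that rational bids stay close to values. Since each single-round auction is a posted price and hence truthful, any misreport only loses in the current round; conditioned on exploring, the uniform price lands strictly between $v_t$ and $b_t$ with probability $\Theta(|b_t-v_t|)$ and, when it does, costs $\Theta(|b_t-v_t|)$, so a misreport of size $\Delta=|b_t-v_t|$ costs at least $\Omega(\mu\Delta^2)$ now. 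If I can also show that changing $b_t$ alters a bidder's total discounted future utility by at most $\tilde{\O}(S\varepsilon)$, then no rational bidder deviates by more than $\beta := \tilde{\O}\big(\sqrt{S\varepsilon/\mu}\big)$, and shifting the optimal fixed price down by $\beta$ gives $\OPT(\{v_t\}_t)-\OPT(\{b_t\}_t) \le \beta T$.

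The crux is the future-utility bound, and here plain differential privacy is not enough: a bidder's own later bids, and through the learner every later price seen by every bidder, react to the current deviation, so one cannot treat the future as post-processing of a fixed private output. I would instead establish a coupling that is slightly stronger than $(\varepsilon,\delta)$-privacy. Coupling the tree-aggregation noise across the truthful run and the deviating run, every later cumulative-gain estimate differs by the \emph{same fixed} vector $\vec g_t(v_t)-\vec g_t(b_t)$, so the posted price on each later day is identical in the two runs unless this perturbation flips an $\argmax$, an event of probability $\tilde{\O}(\varepsilon)$ per day by the anti-concentration of the aggregated Gaussian noise. The information structure is what stops this from cascading: since a bidder sees only his own outcomes on his own days, as long as no price has flipped every bidder observes an identical history and therefore bids identically, so the two trajectories agree off a low-probability event. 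As a bidder's discounted future utility lies in $[0,S]$, its expected change is at most $S$ times the total flip probability over the bidder's remaining rounds, i.e.\ $\tilde{\O}(S\varepsilon)$. Proving this coupling carefully --- in particular handling the bidder's own adaptive continuation and the other bidders interleaved between his rounds --- is where I expect the real difficulty to lie.

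It remains to balance the three competing terms $\mu T$, $T\sqrt{S\varepsilon/\mu}$, and $\tilde{\O}(\sqrt K/\varepsilon)$ with $K=\Theta(1/\alpha)$. Taking $\mu=\Theta(\alpha)$ makes the exploration cost $O(\alpha T)$; choosing $\varepsilon$ at the largest value for which the game-theoretic term stays $O(\alpha T)$ and then requiring the private learning term to also be $O(\alpha T)$ leaves a nonempty window for $\varepsilon$ precisely once $T = \tilde{\Omega}(S\alpha^{-4})$, which is $\tilde{\O}(\tau\alpha^{-4})$ under large market and $\tilde{\O}(\alpha^{-4}/(1-\gamma))$ under impatient bidders, as claimed. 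I would treat these final substitutions as routine once the coupling lemma is in hand.
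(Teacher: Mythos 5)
Your proposal follows the paper's overall architecture (exploration to make lying costly now, tree-aggregation to insulate the future, the same regret decomposition), but it has two genuine gaps, and either one alone prevents it from reaching the stated $T \ge \tilde{\O}(\tau\alpha^{-4})$ threshold. The first is in your stability (coupling) lemma. Under your identity coupling of the tree-aggregation noise, a flip of the $\argmax$ on \emph{any} day $t''>t$ --- including days on which the deviating bidder is absent --- changes the outcome of whoever bids on day $t''$, hence that bidder's own later bids, hence the gain vectors and all subsequent prices; from that point on the cumulative gains no longer differ by the fixed vector $\vec{g}_t(v_t)-\vec{g}_t(b_t)$ and the coupling is dead. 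So the divergence probability you must pay is the flip probability summed over all $T-t$ remaining rounds, not over the deviator's $\tilde{\O}(S)$ remaining rounds, and your future-utility bound degrades from $\tilde{\O}(S\varepsilon)$ to $\tilde{\O}(S T\varepsilon)$, which forces $\varepsilon \lesssim \alpha^3/(ST)$ and makes the learner's $\sqrt{K}/\varepsilon$ term swamp $\alpha T$. The paper avoids trajectory coupling entirely (Lemma~\ref{lem:single-bidder-stability}): fix the deviator's continuation strategy as a map from his post-$t$ observations to bids; by the information structure the \emph{strategy maps} of all other bidders are unchanged by the deviation; then the whole future execution (prices and realized bids together) is a post-processing of the internal states of tree-aggregation after day $t$, which by Lemma~\ref{lem:tree-aggregation-privacy} are $(\varepsilon,\delta)$-differentially private in the day-$t$ bid. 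This compares distributions rather than realized trajectories, so it yields the $e^{\varepsilon}$-multiplicative-plus-$\delta T$-additive bound with no $T$-fold union bound, and the equilibrium condition transfers it to optimal continuations.

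The second gap is the final balancing, which is off by $\sqrt{K}=\alpha^{-1/2}$ even if the $\tilde{\O}(S\varepsilon)$ stability bound is granted. Your game-theoretic constraint forces $\varepsilon = \O(\alpha^3/S)$, so the Agarwal--Singh learning term is $\sqrt{K}/\varepsilon = \Omega(S\alpha^{-3.5})$, and requiring it to be $\O(\alpha T)$ gives $T = \tilde{\Omega}(S\alpha^{-4.5})$, not $S\alpha^{-4}$; this is exactly the paper's \emph{simplified} algorithm (Subsections~\ref{subsection:normal_tree}--\ref{subsection:full-one-fold-regret}). To reach $\alpha^{-4}$ the paper changes the learner: it exploits the fact that each round's gain vector is determined by the single bid $b_t$ and has prefix-sum structure, and runs tree-aggregation over both the time and the expert horizons (Algorithm~\ref{alg:tree-aggregation-2d}), cutting the noise scale from $\tilde{\O}(\sqrt{K}/\varepsilon)$ to $\tilde{\O}(1/\varepsilon)$. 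Because that noise is correlated across experts, the standard follow-the-perturbed-leader bound you invoke no longer applies, so the learning regret is re-derived through the GBPA framework with each per-round Bregman divergence bounded by $2\varepsilon+2\delta$ via the privacy guarantee itself (Lemmas~\ref{lemma:gbpa}, \ref{lemma:bregman}, and \ref{lem:twofold-tree-regret}), giving learning regret $\O(\varepsilon T + \alpha T)+\tilde{\O}(1/\varepsilon)$ and hence the claimed $T \ge \tilde{\O}(\tau\alpha^{-4})$ (resp.\ $\tilde{\O}(\alpha^{-4}/(1-\gamma))$).
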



In this section, we first present a simplified algorithm that gets regrets bound under stronger assumptions of $T \ge \tilde{\O} \big( \tau \alpha^{-4.5} \big)$ and $T \ge \tilde{\O} \big( \frac{\alpha^{-4.5}}{1-\gamma}  \big)$ from Subsection~\ref{subsection:normal_tree} to Subsection~\ref{subsection:full-one-fold-regret}.
We then give a more complexed algorithm and prove the regret bounds from Subsection~\ref{subsection:single-autcion-2Dtree} to Subsection~\ref{subsection:full-two-fold-regret}.

We also extend the algorithm to posted pricing, i.e. the bandit setting, where the seller only knows whether the item is sold or not and the payment instead of the exact bid value. We present a sketch in Subsection \ref{subsection:bandit_sketch}, and the full proof is in the Appendix.

\subsection{(Simplified) Algorithm}
\label{subsection:normal_tree}
\paragraph{Tree-aggregation.}
The simplified algorithm is a privacy-preserving version of the followed-the-perturbed-leader algorithm based on the tree-aggregation technique \cite{dwork2010differential, chan2011private}.
Since our analysis needs to make use of the structure of the algorithm, 
it is worthwhile to devote a few paragraphs to formally define the tree-aggregation subroutine.

Suppose we have $T$ elements (the experts' gains) and need to calculate the cumulative sum of elements from $1$ to $t$ for any $t \in [T]$ in a differentially private manner.
The na\"ive approach simply calculates the cumulative sums and add, say, Gaussian noise, to each of them.
Since an element may appear in all $T$ cumulative sums, the noise scale is $\tilde{\O}(\frac{\sqrt{T}}{\epsilon})$ by a standard argument.
Instead, the tree-aggregation technique calculates $T$ partial sums such that (1) each element appears in at most $\log T$ partial sums, and (2) each cumulative sum is the sum of at most $\log T$ partial sums.
This technique significantly reduces the noise scale to $\tilde{\O}(\frac{1}{\epsilon})$.

Next, we explain how to design the partial sums.
Consider any $t \in [T]$ with binary representation $(t_{\log T} \dots t_1 t_0)_2$, i.e., $t = \sum_{j=0}^{\log T} t_j \cdot 2^j$. 
Let $j_t$ be the lowest non-zero bit.
The $t$-th sum is over 
\[
\Lambda_t = \big\{ t - 2^{j_t} + 1, t - 2^{j_t} + 2, \dots, t-1, t \big\} ~.
\]
To compute the sum of the first $t$ elements, it suffices to sum up the following sets of partial sum obtained by removing the non-zero bits of $t$ one by one from lowest to highest:
\[
\textstyle
\Gamma_t = \big\{ t' \ne 0 : t' = t - \sum_{j = 0}^{h-1} t_j 2^j, h = 0, 1, \dots, \log T \big\}
\]
Then, we have $[t] = \cup_{j \in \Gamma_t} \Lambda_j$.

For example, suppose $t = 14 = 1 \cdot 2^1 + 1 \cdot 2^2 + 1 \cdot 2^3$.
Then, we have $\Lambda_t = \{13, 14\}$ and $\Gamma_t = \{14,12,8\}$.
The tree-aggregation subroutine is given in Algorithm~\ref{alg:tree-aggregation}.

\begin{algorithm}[t]
	\caption{~Tree-aggregation}
	\label{alg:tree-aggregation}
	\begin{algorithmic}[1]
		\STATE \textbf{input:~} 
		dimension $K$, gain vector $\mathbf{g}_t \in [0, 1]^K$ of each round $t$, noise scale $\sigma$
		\STATE \textbf{internal states:~} noisy partial sum $\vec{A}_t$ for $t \in [T]$
		\STATE \textbf{initialize:~} $\vec{A}_t = \vec{\mu}_t$ for all $t \in T$, with $\mu_{tj}$'s i.i.d.\ from normal distribution $N(0, \sigma^2)$.
		\FOR{t = 1, 2, \dots, T} 
		\STATE Receive $\vec{g}_t$ as input.
		\STATE Let $\vec{A}_j = \vec{A}_j + \vec{g}_t$ for all $j$ s.t.\ $t \in \Lambda(j)$. 
		\STATE \textbf{Output} $\tilde{\vec{G}}_t = \sum_{j \in \Gamma_t} \vec{A}_j + \vec{\nu}_t$, with $\nu_{tj}$'s i.i.d.\ from $N \big(0, (\log T + 1 - |\Gamma_t|) \sigma^2 \big)$. \label{step:first}
		\ENDFOR
	\end{algorithmic}
\end{algorithm}

The usual description of tree-aggregation (e.g., \cite{agarwal2017price}) computes the partial sum $\vec{A}_t$ in one-shot at step $t$ while ours considers $\vec{A}_t$'s as internal states that are maintained throughout the algorithm. 
Both descriptions result in the same algorithm but ours is more convenient for our proof.

\begin{lemma}[\citet{jain2012differentially}]
	\label{lem:tree-aggregation-privacy-previous}	
	The {final values} of the internal states $\vec{A}_t$'s are $(\epsilon, \delta = \frac{\epsilon}{T})$-differentially private with $\sigma = \frac{8 \sqrt{K}}{\varepsilon} \log T \sqrt{\ln \frac{\log T}{\delta}}$.
\end{lemma}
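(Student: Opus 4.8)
The plan is to view the vector of final internal states $(\vec{A}_1, \dots, \vec{A}_T)$ as the output of a short composition of Gaussian mechanisms. Observe first that, by the update rule in Algorithm~\ref{alg:tree-aggregation}, the final value of each state is exactly a noiseless partial sum perturbed by an independent Gaussian, namely $\vec{A}_j = \vec{\mu}_j + \sum_{t \in \Lambda_j} \vec{g}_t$ with $\vec{\mu}_j \sim N(0,\sigma^2)^K$. Thus each $\vec{A}_j$ is precisely the Gaussian mechanism applied to the block sum $\sum_{t \in \Lambda_j}\vec{g}_t$, and the whole lemma reduces to controlling the privacy loss incurred across all such blocks when a single input gain vector is changed.

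The first step is to bound how many states a single round can influence. The blocks $\{\Lambda_j\}_{j\in[T]}$ are the canonical intervals (the node-intervals of the binary tree over $[T]$) and therefore form a laminar family; a fixed index $t$ lies in $\Lambda_j$ exactly for those $j$ on the root-to-leaf path through leaf $t$. Hence each $t$ belongs to at most $\log T + 1$ blocks, so replacing the gain vector $\vec{g}_{t^*}$ of a single round perturbs at most $\log T + 1$ of the partial sums. The second step is the per-state sensitivity: since every gain vector lies in $[0,1]^K$, any affected block sum changes by at most $\lVert \vec{g}_{t^*}-\vec{g}'_{t^*}\rVert_2 \le \sqrt{K}$ in $\ell_2$. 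By the Gaussian mechanism (Lemma~\ref{lemma:gaussian-mechanism}), each affected $\vec{A}_j$ is then $(\epsilon_0,\delta_0)$-differentially private whenever $\sigma \ge \sqrt{2\ln(1.25/\delta_0)}\,\sqrt{K}/\epsilon_0$. Allocating a per-state budget $\epsilon_0 = \epsilon/(\log T + 1)$ and $\delta_0 = \delta/(\log T + 1)$ and invoking the composition theorem (Lemma~\ref{lemma:composition-theorem}) over the at most $\log T + 1$ affected states yields overall $(\epsilon,\delta)$-differential privacy. Substituting these budgets into the Gaussian requirement gives $\sigma \ge \sqrt{2\ln\!\big(1.25(\log T+1)/\delta\big)}\cdot \sqrt{K}(\log T+1)/\epsilon$, which the stated choice $\sigma = \tfrac{8\sqrt{K}}{\epsilon}\log T\sqrt{\ln(\log T/\delta)}$ dominates for all $T \ge 2$.

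The one genuinely delicate point is the bookkeeping that lets me ignore the unaffected states. I must argue that jointly releasing the states $\vec{A}_j$ with $t^* \notin \Lambda_j$ leaks nothing extra about the $t^*$-th entry. This holds because those states are deterministic functions of the common part $D_{-t^*}$ perturbed by independent noise, so they are distributed identically under the two neighboring datasets and independent of the noise in the affected states. The joint law of the full output therefore factors as the composition over the affected states followed by appending coordinates whose distribution does not depend on the differing entry, which is a valid randomized post-processing (Lemma~\ref{lemma:post-processing}); hence the DP inequality for the full vector reduces to the one for the affected states. Pinning the constant in $\sigma$ to the statement is then routine: it amounts to absorbing the $\sqrt 2$, the $1.25$, and $\log T + 1 \le 2\log T$ into the leading constant $8$.
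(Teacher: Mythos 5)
Your proof is correct and follows essentially the same route as the cited argument of \citet{jain2012differentially}, which the paper does not reproduce for this lemma but does spell out for its two-fold analogue (Lemma~\ref{lem:tree-aggregation-privacy-previous-2d}): each round's gain vector affects at most $\log T + 1$ partial sums, each with $\ell_2$ sensitivity $\sqrt{K}$, and the bound follows from the Gaussian mechanism (Lemma~\ref{lemma:gaussian-mechanism}) plus the composition theorem (Lemma~\ref{lemma:composition-theorem}). Your additional care about the unaffected states (that they are distributed identically under neighboring inputs, so the joint release reduces to composition over the affected blocks) is a point the paper leaves implicit, and your verification that the constant $8$ absorbs the $\sqrt{2}$, $1.25$, and $\log T + 1 \le 2\log T$ factors checks out.
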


We need a slightly stronger version that is a simple corollary.

\begin{lemma}
	\label{lem:tree-aggregation-privacy}
	Fixed any $t_0 \in [T]$, the values of the internal states $\vec{A}_t$'s after round $t_0$ are $(\epsilon, \delta = \frac{\epsilon}{T})$-differentially private for bids on or before day $t_0$ with $\sigma = \frac{8 \sqrt{K}}{\varepsilon} \log T \sqrt{\ln \frac{\log T}{\delta}}$.
\end{lemma}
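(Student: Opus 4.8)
The plan is to reduce the statement directly to Lemma~\ref{lem:tree-aggregation-privacy-previous} via a zero-padding argument, exploiting the structure of the gain vectors together with the fact that the noise $\vec{\mu}_j$ is drawn at initialization and is therefore independent of the bids. First I would make the value of each internal state after round $t_0$ explicit: by the update rule in Algorithm~\ref{alg:tree-aggregation}, immediately after round $t_0$ we have
\[
\vec{A}_j = \vec{\mu}_j + \sum_{t \in \Lambda_j,\, t \le t_0} \vec{g}_t ,
\]
so the snapshot $\{\vec{A}_j\}_j$ after round $t_0$ is a function (randomized only through $\vec{\mu}$) of the bids $b_1, \dots, b_{t_0}$ alone. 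Note the output noise $\vec{\nu}_t$ is irrelevant here, since the lemma concerns the internal states $\vec{A}_j$ and not the outputs $\tilde{\vec{G}}_t$.

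The key observation I would use is that a zero bid induces the zero gain vector: if $b_t = 0$ then $\vec{g}_t = \vec{0}$, because the $i$-th coordinate equals $(i-1)\alpha$ only when $b_t \ge (i-1)\alpha$, which fails for every $i \ge 2$ and contributes $0$ for $i = 1$. Consequently, consider the padded input that keeps $b_1, \dots, b_{t_0}$ and sets $b_t = 0$ for all $t > t_0$. Running the full algorithm on this padded input, the final internal states are
\[
\vec{A}_j = \vec{\mu}_j + \sum_{t \in \Lambda_j} \vec{g}_t = \vec{\mu}_j + \sum_{t \in \Lambda_j,\, t \le t_0} \vec{g}_t ,
\]
which is the same random variable as the snapshot after round $t_0$ on the original input, since the two runs share the same noise $\vec{\mu}$ and the padded gains vanish for $t > t_0$.

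With this identification in hand, it suffices to invoke Lemma~\ref{lem:tree-aggregation-privacy-previous}. Two neighboring inputs for the present statement differ in a single bid $b_s$ with $s \le t_0$; their zero-paddings are neighboring full inputs differing only in entry $s$. The cited lemma then guarantees that the final internal states are $(\epsilon, \delta = \tfrac{\epsilon}{T})$-differentially private with $\sigma = \frac{8\sqrt{K}}{\varepsilon}\log T \sqrt{\ln \frac{\log T}{\delta}}$, and by the identification above the snapshot after round $t_0$ inherits exactly this guarantee with respect to the bids on or before day $t_0$. In particular, no new sensitivity computation is needed, since the padding reuses the same tree structure and noise scale.

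The only delicate point I expect is the exact identification in the second paragraph: one must verify that the snapshot and the padded final state agree as random variables (not merely in distribution up to relabeling), which hinges on (i) $b_t = 0 \Rightarrow \vec{g}_t = \vec{0}$ and (ii) the noise $\vec{\mu}_j$ being sampled independently of the bids at initialization so that both runs can be coupled on the same $\vec{\mu}$. Once these two facts are pinned down, the remainder is a direct appeal to Lemma~\ref{lem:tree-aggregation-privacy-previous}.
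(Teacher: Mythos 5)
Your proposal is correct and follows essentially the same route as the paper: the paper's own (two-line) proof observes that the internal states after round $t_0$ coincide with the final states when all subsequent gain vectors are zero, and then cites Lemma~\ref{lem:tree-aggregation-privacy-previous}. Your zero-bid padding and noise-coupling argument is just an explicit, more careful spelling-out of that same identification.
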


\begin{proof}
	The values of $\vec{A}_t$'s after time $t_0$ are effectively the values if subsequent gain vectors are all zero.
	Hence, the lemma follows as a corollary of Lemma~\ref{lem:tree-aggregation-privacy-previous}.
\end{proof}

\begin{lemma}
	\label{lem:tree-aggregation-noise}
	$\tilde{G}_{tj} - G_{tj}$ follows ${N}(0, (\log T + 1) \sigma^2)$ for any $t \in [T]$ and any $j \in [K]$.
\end{lemma}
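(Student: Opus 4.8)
The plan is to track exactly which Gaussian noise variables enter the output $\tilde{G}_{tj}$ and then invoke the stability of the normal distribution under independent summation. First I would unwind the definition of the output in Step~\ref{step:first} of Algorithm~\ref{alg:tree-aggregation}: at round $t$ the output is $\tilde{\vec{G}}_t = \sum_{j' \in \Gamma_t} \vec{A}_{j'} + \vec{\nu}_t$, so the task reduces to determining the value of each internal state $\vec{A}_{j'}$ with $j' \in \Gamma_t$ at the moment this sum is formed. Each such state was initialized to $\vec{\mu}_{j'}$ and subsequently incremented by $\vec{g}_s$ for every round $s$ with $s \in \Lambda(j')$ that has already occurred.

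The key structural observation I would establish is that by round $t$ every state $\vec{A}_{j'}$ with $j' \in \Gamma_t$ has already received all of its contributions, i.e.\ $\vec{A}_{j'} = \vec{\mu}_{j'} + \sum_{s \in \Lambda_{j'}} \vec{g}_s$. This holds because the largest index in $\Lambda_{j'}$ is $j'$ itself, and every $j' \in \Gamma_t$ satisfies $j' \le t$ (each element of $\Gamma_t$ is obtained from $t$ by deleting low-order nonzero bits). Combining this with the disjoint decomposition $[t] = \cup_{j' \in \Gamma_t} \Lambda_{j'}$ stated above the algorithm gives, coordinate-wise,
\[
\tilde{G}_{tj} = \sum_{j' \in \Gamma_t} \Big( \mu_{j'j} + \sum_{s \in \Lambda_{j'}} g_{sj} \Big) + \nu_{tj} = G_{tj} + \sum_{j' \in \Gamma_t} \mu_{j'j} + \nu_{tj},
\]
so that the noise is $\tilde{G}_{tj} - G_{tj} = \sum_{j' \in \Gamma_t} \mu_{j'j} + \nu_{tj}$.

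It then remains to identify the distribution of this sum. The variables $\{\mu_{j'j}\}_{j'}$ are i.i.d.\ $N(0,\sigma^2)$ and $\nu_{tj}$ is drawn independently from $N\big(0,(\log T + 1 - |\Gamma_t|)\sigma^2\big)$, so by independence the total is normal with mean zero and variance $|\Gamma_t|\,\sigma^2 + (\log T + 1 - |\Gamma_t|)\,\sigma^2 = (\log T + 1)\sigma^2$, as claimed.

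The only delicate point is the timing bookkeeping I flagged above: I must verify that by round $t$ each state $\vec{A}_{j'}$ with $j' \in \Gamma_t$ has accumulated all (and only) the gains indexed by $\Lambda_{j'}$, which rests on $\max \Lambda_{j'} = j' \le t$ together with the fact that rounds $s > t$ have simply not been processed yet. Everything else is a routine Gaussian computation; note in particular that the variance of $\nu_t$ is engineered precisely so that the $|\Gamma_t|$ initialization noises and $\nu_{tj}$ combine into a total variance $(\log T + 1)\sigma^2$ that is independent of $t$.
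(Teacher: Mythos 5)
Your proof is correct and follows exactly the reasoning the paper intends: the paper states Lemma~\ref{lem:tree-aggregation-noise} without proof, treating it as immediate from the construction of Algorithm~\ref{alg:tree-aggregation}, where the variance of $\vec{\nu}_t$ is engineered to top up the $|\Gamma_t|$ independent initialization noises $\mu_{j'j}$ to a total of $(\log T + 1)\sigma^2$. Your decomposition $\tilde{G}_{tj} - G_{tj} = \sum_{j' \in \Gamma_t} \mu_{j'j} + \nu_{tj}$, together with the timing argument that every $\vec{A}_{j'}$ with $j' \in \Gamma_t$ is fully accumulated by round $t$ (since $\max \Lambda_{j'} = j' \le t$) and the disjointness of $[t] = \cup_{j' \in \Gamma_t} \Lambda_{j'}$, is precisely the omitted verification.
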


\paragraph{Online Pricing Algorithm.}
The algorithm (Algorithm~\ref{alg:single-bidder-sketch}) is a variant of the privacy-preserving online learning algorithm of \citet{agarwal2017price}.
It uses tree-aggregation as a subroutine for maintaining an noisy version of the cumulative gains of each price.\footnote{Note that the different expression of setting price for two-fold tree-aggregation is due to we consider prices/experts in decreasing order in this case.}
On each day $t$, with some small probability it picks the price randomly; otherwise, it picks the price with the largest noisy cumulative gain in previous days, i.e, the largest entry of $\tilde{\vec{G}}_{t-1}$.

\begin{algorithm}[t]
	\caption{~Online Pricing (Single-bidder Case)}
	\label{alg:single-bidder-sketch}
	\begin{algorithmic}[1]
		\STATE \textbf{parameters:~} 
		regret parameter $\alpha$, $K = \frac{1}{\alpha} + 1$, privacy parameter $\epsilon$, $\delta = \frac{\epsilon}{T}$.
		\STATE \textbf{initialize} tree-aggregation (Alg.~\ref{alg:tree-aggregation}) with $\sigma = \frac{8 \sqrt{K}}{\varepsilon} \log T \sqrt{\ln \frac{\log T}{\delta}}$.
		
		(Or two-fold tree-aggregation (Alg.~\ref{alg:tree-aggregation-2d}) with $\sigma = \frac{8\log T \log K}{\varepsilon}  \sqrt{\ln(\frac{ \log K \log T}{\delta})}$.)
		\FOR {$t=1,\ldots,T$}
		\STATE With probability $\alpha$, pick $j \in [K]$ uniformly at random.\label{step:add-uniform} 
		\STATE Otherwise, pick $j$ that maximizes $\tilde{\vec{G}}_{(t-1)j}$.
		\STATE Set price $(j-1) \alpha$. (Or set price $(K-j) \alpha$ when using two-fold tree-aggregation (Alg.~\ref{alg:tree-aggregation-2d}).)
		\STATE Observe bid $b_t$ and, thus, the gain vector $\vec{g}_t$; update tree-aggregation.
		\ENDFOR
	\end{algorithmic}
\end{algorithm}

\subsection{Bound Learning Regret}
\label{sec:single-autcion-regret-2d}
The next lemma follows from our Lemma~\ref{lem:tree-aggregation-noise} and Theorem 8 of \citet{abernethy2014online}.

\begin{lemma}
	\label{lem:ftpl-regret}
	Consider running Algorithm~\ref{alg:single-bidder-sketch} without step~\ref{step:add-uniform}.
	Then, the learning regret w.r.t.\ the best fixed discretized price is at most $\O \big(\sqrt{\log K} (\sigma \sqrt{\log T} + \frac{T}{\sigma \sqrt{\log T}} ) \big)$.
\end{lemma}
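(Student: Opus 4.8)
The plan is to reduce the statement to a known regret bound for Follow-the-Perturbed-Leader (FTPL) with Gaussian perturbations, namely Theorem~8 of \citet{abernethy2014online}, by observing that Algorithm~\ref{alg:single-bidder-sketch} \emph{without} step~\ref{step:add-uniform} is exactly an instance of FTPL. The key conceptual point is that maintaining the noisy cumulative gains $\tilde{\vec{G}}_{t-1}$ via tree-aggregation and then picking $\argmax_j \tilde G_{(t-1)j}$ is the same decision rule as perturbing the true cumulative gains $\vec{G}_{t-1}$ by an independent Gaussian vector and taking the leader. By Lemma~\ref{lem:tree-aggregation-noise}, the perturbation $\tilde G_{tj} - G_{tj}$ is distributed as $N(0,(\log T + 1)\sigma^2)$ in each coordinate, so the effective perturbation scale in the FTPL analysis is $\sigma' = \sigma\sqrt{\log T + 1} = \Theta(\sigma\sqrt{\log T})$.

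First I would state precisely the FTPL regret guarantee I am importing: for an expert problem with $K$ experts, gains in $[0,1]$ per round, and i.i.d.\ Gaussian perturbations of standard deviation $\sigma'$ added to the cumulative gains, the expected regret against the best fixed expert is $\O\big(\sigma'\sqrt{\log K} + \frac{T\sqrt{\log K}}{\sigma'}\big)$. The first term bounds the cost of the perturbation itself (the leader can be displaced by the noise, and a standard maximal-inequality / Gaussian-tail argument shows the expected displacement over $K$ experts scales like $\sigma'\sqrt{\log K}$), while the second term is the familiar stability/``be-the-leader'' term controlling how often the perturbed leader switches. Substituting $\sigma' = \Theta(\sigma\sqrt{\log T})$ into both terms yields $\O\big(\sqrt{\log K}\,(\sigma\sqrt{\log T} + \frac{T}{\sigma\sqrt{\log T}})\big)$, which is exactly the claimed bound.

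The one subtlety I would flag — and which I expect to be the main obstacle — is that the perturbations used across rounds are \emph{not} independent fresh draws: the same internal noise variables $\vec{\mu}_t$ (and the structure of the partial sums) are reused across rounds, so the noise added to $\tilde{\vec G}_{t-1}$ and to $\tilde{\vec G}_t$ is correlated. The cited FTPL analysis in its simplest form assumes either a single perturbation drawn once or fresh independent perturbations each round. I would therefore need to check that the version of Theorem~8 of \citet{abernethy2014online} being invoked is the ``single-shot'' perturbation formulation (one perturbation vector reused across all rounds), whose regret bound depends only on the marginal distribution of the perturbation and not on inter-round independence; the tree-aggregation scheme is engineered precisely so that the marginal of $\tilde G_{tj} - G_{tj}$ is the stated Gaussian (Lemma~\ref{lem:tree-aggregation-noise}), which is all that such a bound requires. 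Confirming that the correlation structure introduced by tree-aggregation is compatible with the hypotheses of the imported theorem is the crux; once that is verified, the remaining computation is the routine substitution described above.
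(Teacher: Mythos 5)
Your proposal is correct and follows essentially the same route as the paper, whose entire argument for this lemma is precisely the citation you reconstruct: combine Lemma~\ref{lem:tree-aggregation-noise} (marginal noise $N(0,(\log T+1)\sigma^2)$ per coordinate) with Theorem~8 of \citet{abernethy2014online} and substitute the effective perturbation scale $\sigma' = \Theta(\sigma\sqrt{\log T})$. The correlation caveat you flag is real but benign here for exactly the reason you give — the imported bound depends only on the per-round marginal of the perturbation — and it is the same issue the paper later confronts explicitly (via the GBPA potential-function argument of Lemmas~\ref{lemma:gbpa}--\ref{lem:twofold-tree-regret}) when the two-fold tree-aggregation introduces correlations across the expert horizon that do break the direct appeal to \citet{abernethy2014online}.
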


\begin{corollary}
	\label{cor:single-bidder-learning-regret-sketch}
	The learning regret of Algorithm~\ref{alg:single-bidder-sketch} is 
	$\O \big( \sqrt{\log K} (\sigma \sqrt{\log T} + \frac{T}{\sigma \sqrt{\log T}}) + \alpha T \big)$.
\end{corollary}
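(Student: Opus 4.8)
The plan is to reduce the corollary to Lemma~\ref{lem:ftpl-regret} by isolating the revenue cost of the random exploration in step~\ref{step:add-uniform}. The key observation is that the learning regret is analyzed in the full-information setting: the gain vector $\vec{g}_t$ is a function of the observed bid $b_t$ alone (it records the gain of every discretized price) and does not depend on which price the algorithm actually charges. Consequently, if I run Algorithm~\ref{alg:single-bidder-sketch} and its variant without step~\ref{step:add-uniform} on the same fixed bid sequence $\{b_t\}_t$, while coupling the two runs to share the same tree-aggregation noise, then the internal states $\tilde{\vec{G}}_t$ evolve identically, and hence the follow-the-perturbed-leader choice $\argmax_j \tilde{G}_{(t-1)j}$ coincides on every day. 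The only difference between the two runs is the price charged on the exploration days, which affects the collected revenue but not the learned state.

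First I would bound the per-day revenue loss caused by exploration. Writing $\ALG'$ for the revenue of the variant without step~\ref{step:add-uniform}, on each day $t$ the full algorithm charges the same price as $\ALG'$ with probability $1-\alpha$ (gaining the same amount) and a uniformly random price with probability $\alpha$. Since every entry of $\vec{g}_t$ lies in $[0,1]$, the gain on an exploration day is nonnegative and differs from the perturbed-leader gain by at most $1$. Hence the full algorithm loses at most $\alpha$ in expectation per day relative to $\ALG'$, which gives $\E[\ALG] \ge \E[\ALG'] - \alpha T$ after summing over the $T$ rounds.

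Finally I would assemble the bound. By definition the learning regret is $\OPT(\{b_t\}_t) - \ALG$, so
\[
\OPT(\{b_t\}_t) - \E[\ALG] \;\le\; \big( \OPT(\{b_t\}_t) - \E[\ALG'] \big) + \alpha T .
\]
Lemma~\ref{lem:ftpl-regret} bounds the first term on the right by $\O\big( \sqrt{\log K}\,(\sigma \sqrt{\log T} + \frac{T}{\sigma \sqrt{\log T}}) \big)$, and adding the $\alpha T$ exploration term yields exactly the claimed bound.

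The main subtlety to get right is the coupling in the first paragraph: the whole argument hinges on the exploration step not perturbing the follow-the-perturbed-leader decisions through the learned cumulative gains. This is legitimate precisely because the gain vector depends only on the observed bid and not on the realized price, so the full-information updates are identical across the two coupled runs, and the exploration step contributes only an additive $\alpha T$ loss. In the bandit version this reasoning fails (the algorithm only learns the gain of the price it actually charges), which is why the bandit setting requires a separate treatment.
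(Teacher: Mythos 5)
Your proposal is correct and takes essentially the same approach as the paper: the exploration step contributes at most $\alpha T$ in expectation (your coupling argument simply makes explicit the paper's one-line claim that adding step~\ref{step:add-uniform} increases the regret by at most $\alpha T$), and the remaining term comes from Lemma~\ref{lem:ftpl-regret}. The one point you gloss over is that Lemma~\ref{lem:ftpl-regret} bounds the regret against the best fixed \emph{discretized} price rather than $\OPT(\{b_t\}_t)$ over all of $[0,1]$ — the paper's proof devotes its second sentence to this extra (at most) $\alpha T$ discrepancy, whereas in your write-up the two benchmarks coincide only because of the standing assumption that bids lie on the discretized grid; either way the claimed $\O(\alpha T)$ slack absorbs it, so the conclusion stands.
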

\begin{proof}
	Running Algorithm~\ref{alg:single-bidder-sketch} with step~\ref{step:add-uniform} increases the regret by at most $\alpha T$.
	Further note that the regret w.r.t.\ to the best fixed discretized price differs from the actual regret by at most $\alpha T$.
\end{proof}

\subsection{Bounding Game-theoretic Regret: Stability of Future Utility}
\begin{lemma}[Stability of Future Utility]
	\label{lem:single-bidder-stability}
	For any bidder and any day $t$ on which he comes, the bidder's equilibria utilities in subsequent rounds in the subgames induced by different bids on day $t$ differ by at most an $e^\epsilon$ multiplicative factor plus a $\delta T$ additive factor.
\end{lemma}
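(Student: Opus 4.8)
The plan is to reduce the stability of the bidder's future utility to the differential privacy of the seller's internal state. The key structural observation is that the day-$t$ bid $b_t$ can influence anything happening after day $t$ \emph{only} through the tree-aggregation internal states $\vec{A}_1,\dots,\vec{A}_T$ as they stand at the end of round $t$. Indeed, every price $p_{t'}$ with $t'>t$ is computed from outputs of the form $\tilde{\vec{G}}_{s}=\sum_{j\in\Gamma_{s}}\vec{A}_j+\vec{\nu}_{s}$ with $s\ge t$, and $b_t$ enters these only via its one-time contribution to the $\vec{A}_j$, which is added during round $t$ and never modified afterwards; the fresh output noise $\vec{\nu}_s$ and every gain contributed on days after $t$ are independent of $b_t$. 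I would therefore write the bidder's expected discounted future utility, under either bid, as $\E[\,W(\vec{A})\,]$, where $\vec{A}$ denotes the internal states after round $t$ and $W$ is his equilibrium continuation value as a function of that state.

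The core step is to argue that $W$ is a bona fide function of $\vec{A}$ alone, i.e.\ that conditioned on $\vec{A}$ the entire continuation game --- future prices, the responses of whoever arrives on later days, and hence the bidder's own best responses --- is independent of the actual bid that produced $\vec{A}$. Here the information structure of the model does the work: no player other than the deviating bidder ever observes $b_t$, and the seller's algorithm touches $b_t$ only through $\vec{A}$, so given $\vec{A}$ the stochastic environment faced by the bidder in the continuation is a fixed randomized map that does not reference $b_t$. Consequently $b_t$ is payoff-irrelevant given $\vec{A}$, the bidder's own memory of it cannot improve his continuation value, and the equilibrium continuation value depends on $b_t$ only through $\vec{A}$. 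The realized future utility is thus a post-processing of $\vec{A}$ using fresh randomness independent of $b_t$.

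Then I would invoke privacy. Applying Lemma~\ref{lem:tree-aggregation-privacy} with $t_0=t$, the distribution of the internal states after round $t$ is $(\epsilon,\delta)$-differentially private with respect to $b_t$; since the two scenarios share the identical realized history $b_1,\dots,b_{t-1}$ and differ only in the day-$t$ entry, the input datasets are neighboring, so the laws of $\vec{A}$ under $b_t$ and under $b_t'$ are $(\epsilon,\delta)$-close. Post-processing (Lemma~\ref{lemma:post-processing}) transfers this to the induced distributions of the bidder's future utility $U$. Finally I would convert closeness of distributions into closeness of expectations for a bounded payoff: the discounted future utility lies in $[0,T]$ (indeed in $[0,\tau]$, as the bidder appears on at most $\tau$ days and gains at most $1$ per day), so writing $\E[U]=\int_0^{T}\Pr[U>s]\,ds$ and applying the $(\epsilon,\delta)$ bound inside the integral yields $\E_{b_t}[U]\le e^{\epsilon}\,\E_{b_t'}[U]+\delta T$, and symmetrically with the roles of $b_t$ and $b_t'$ swapped --- exactly the claimed $e^\epsilon$ multiplicative and $\delta T$ additive guarantee.

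The step I expect to be the main obstacle is the middle one: justifying that the continuation value is genuinely a function of $\vec{A}$ rather than of $b_t$ directly. As the paper's introduction warns, the naive ``differential privacy plus post-processing'' reasoning is \emph{not} by itself sufficient, precisely because future bids are chosen by strategic players and the deviating bidder himself remembers $b_t$; the argument collapses unless one uses that, given the seller's state $\vec{A}$, the bid $b_t$ carries no further information about the continuation environment. This is where the assumption that a bidder never observes the rounds he is absent from --- and, in the multi-bidder extension, never observes others' bids or outcomes --- becomes indispensable. Making this payoff-irrelevance precise for the relevant equilibrium notion, so that post-processing applies despite the cascading effect of $b_t$ on later prices and bids, is the crux of the proof.
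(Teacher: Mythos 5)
Your overall architecture matches the paper's proof: both reduce the lemma to the $(\epsilon,\delta)$-privacy of the tree-aggregation internal states after round $t$ (Lemma~\ref{lem:tree-aggregation-privacy} with $t_0=t$), view the continuation after round $t$ as a post-processing of those states, and convert distributional closeness into the stated $e^\epsilon$ multiplicative and $\delta T$ additive bound using boundedness of utility. However, there is a genuine gap exactly at the step you yourself flag as the crux, and your proposed resolution of it does not hold. You claim the equilibrium continuation value can be written as $\E[W(\vec{A})]$ for a single, bid-independent function $W$ of the realized internal state, i.e., that ``the bidder's own memory of $b_t$ cannot improve his continuation value'' and that the realized future utility is a post-processing of $\vec{A}$. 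This is false as stated: the bidder never observes $\vec{A}$, so his continuation strategy cannot condition on it; it conditions on his own history, which contains $b_t$, and $b_t$ is genuinely informative about $\vec{A}$ (his day-$t$ gain vector was added into the partial sums), hence about future prices. Consequently the optimal continuation strategies in the two subgames generally differ, so the two realized utilities are post-processings of $\vec{A}$ by \emph{different} maps, and Lemma~\ref{lemma:post-processing} does not apply to them as a pair. The alternative reading, $W(\vec{A})=\max_s U_s(\vec{A})$ with the maximum over strategies taken inside the expectation, also fails: $\E[\max_s U_s(\vec{A})]$ only upper-bounds the achievable value $\max_s \E[U_s(\vec{A})]$, and this one-sided bound produces inequalities pointing the wrong way when you try to compare the two subgames.

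The paper's proof closes this gap with the right order of quantifiers. First fix an arbitrary continuation strategy of the deviating bidder as a (randomized) function of his post-$t$ observations only, and note that the other bidders' strategies are literally identical across the two subgames since they observe nothing about day $t$. For this \emph{fixed} strategy profile, the entire continuation---prices, subsequent bids, and the bidder's realized utility---is one and the same randomized post-processing of the internal states in both subgames, so its expected utility is within an $e^\epsilon$ factor plus $\delta T$ across the two bids. Only then is the equilibrium condition invoked: each subgame's equilibrium value is the supremum of these per-strategy values, and since for every strategy $s$ we have $\E_{b_t}[U_s] \le e^\epsilon \E_{b_t'}[U_s] + \delta T \le e^\epsilon \sup_{s'}\E_{b_t'}[U_{s'}] + \delta T$, the suprema themselves are close (and symmetrically with the bids swapped). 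Your tail-integration step $\E[U]=\int_0^T \Pr[U>s]\,ds$ is a fine way to pass from distributional closeness to closeness of expectations, but it must be applied per fixed strategy, not to the equilibrium utility directly.
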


This is the main argument of our approach.
First consider a seemingly intuitive yet incorrect proof. 
Since tree-aggregation is differentially private, the online pricing algorithm is also private treating the bid on each day as an entry of the dataset.
The lemma holds because changing the bid on a day leads to a neighboring dataset and, thus, the probability of any subset of future outcomes does not change much. 
This is incorrect because subsequent bids are controlled by strategic bidders.
Changing the bid on a day does not result in a neighboring dataset in general.\footnote{Although other bidders do not see what happens on day $t$ and, thus, will employ the same strategies in subsequent days, the actual bids are affected also by the seller's subsequent prices, which is affected by the bid on day $t$.}

\begin{proof}
	We shall abuse notation and refer to the bidder that comes on day $t$ as bidder $t$.
	Fixed any bidder $t$'s strategy for subsequent days (after round $t$).
	That is, fixed the (randomized) bidding function on any subsequent day $t'$ as a function only on his bids and auction outcomes between day $t$ and $t'$ (exclusive).
	Let us consider the resulting utilities for bidder $t$ in the subgames induced by two distinct bids on day $t$.
	Note that the other bidders' subsequent strategies will be the same in the subgames since they cannot observe what happens on day $t$.
	We shall interpret the execution of the online pricing algorithm, i.e., the algorithm together with the bidders' strategies in subsequent rounds, after round $t$ as a post-processing on the internal states of the tree-aggregation algorithm after day $t$ and, thus, is $(\epsilon, \delta)$-differentially private due to Lemma~\ref{lem:tree-aggregation-privacy}.
	Therefore, the utilities of any fixed subsequent strategy of bidder $t$ in the two subgames differ by at most an $e^\epsilon$ multiplicative factor plus a $\delta T$ additive factor.
	The lemma then follows by the equilibria condition that bidder $t$ employs the best subsequent strategy in any subgame.
\end{proof}

\begin{lemma}
	\label{lem:single-bidder-lying-loss}
	Consider any day $t$ and the bidder's utility on that day.
	The utility of bidding some $b_t$ such that $|b_t - v_t| > 2 \alpha$ is worse than that of truthful bidding by at least $\frac{\alpha^3}{2}$. 
\end{lemma}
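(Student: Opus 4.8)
The plan is to exploit two structural facts about the single-round interaction: it is exactly a posted-price mechanism, and the exploration step (Step~\ref{step:add-uniform}) guarantees that the day-$t$ price hits every discretized level with probability at least $\frac{\alpha}{K} \ge \frac{\alpha^2}{2}$ (using $K = \frac{1}{\alpha}+1$ and $\alpha \le 1$). First I would fix the realized history up to day $t$, so that the price $p_t$ is a random variable whose law does not depend on the current bid (the seller sets $p_t$ from $b_1, \dots, b_{t-1}$ only, hence $p_t \perp b_t$). Writing the bidder's day-$t$ expected utility under a bid $b$ as $\E_{p_t}\big[(v_t - p_t)\mathbf{1}_{b \ge p_t}\big]$, the gap between truthful bidding and bidding $b_t$ becomes $\E_{p_t}[\phi(p_t)]$, where $\phi(p) = (v_t - p)\big(\mathbf{1}_{v_t \ge p} - \mathbf{1}_{b_t \ge p}\big)$.

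Next I would check that $\phi$ is pointwise nonnegative and supported on the price interval between $b_t$ and $v_t$. If $b_t > v_t$ (overbidding), then $\phi(p) = (p - v_t)\,\mathbf{1}_{v_t < p \le b_t} \ge 0$, capturing the loss from winning at a price above the value; if $b_t < v_t$ (underbidding), then $\phi(p) = (v_t - p)\,\mathbf{1}_{b_t < p \le v_t} \ge 0$, capturing the forgone surplus from losing below the value. Since every term is nonnegative, it suffices to exhibit a single discretized price $p^\star$ in the loss window at which $\phi(p^\star) \ge \alpha$, and then conclude $\E_{p_t}[\phi(p_t)] \ge \phi(p^\star)\Pr[p_t = p^\star] \ge \alpha \cdot \frac{\alpha^2}{2} = \frac{\alpha^3}{2}$.

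The step requiring the most care is locating such a $p^\star$, which is exactly where the hypothesis $|b_t - v_t| > 2\alpha$ is consumed. In the overbid case I can simply take $p^\star = b_t$ (a multiple of $\alpha$ by the discretization convention): it lies in $(v_t, b_t]$ and gives $\phi(b_t) = b_t - v_t > 2\alpha$. In the underbid case I would take $p^\star = b_t + \alpha$, the next discretized level above $b_t$; it lies in $(b_t, v_t]$ because $b_t + \alpha < v_t - \alpha < v_t$, and it yields $\phi(p^\star) = v_t - b_t - \alpha > \alpha$. The underbid case is the tighter of the two and is precisely what pins down the constant $\frac{\alpha^3}{2}$; it also explains why a one-$\alpha$ gap would not suffice, since after stepping from $b_t$ to the next grid point one still needs a full $\alpha$ of surplus remaining.

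The only remaining bookkeeping is to note that bidding $v_t$ and bidding $\lfloor v_t/\alpha\rfloor \alpha$ induce identical win/lose outcomes against any discretized price, so the ``bids are multiples of $\alpha$'' convention is harmless for the truthful comparison, and that $p^\star$ is a legitimate price in $\{0,\alpha,\dots,1\}$ in both cases. I do not expect any genuine difficulty beyond verifying these containments; the conceptual content is entirely in recognizing that exploration forces a $\Theta(\alpha^2)$ mass on the single grid price that sits strictly inside the $(b_t,v_t)$ (or $(v_t,b_t)$) window and contributes a $\Theta(\alpha)$ per-price loss there.
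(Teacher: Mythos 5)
Your proof is correct and follows essentially the same route as the paper's: use the exploration step to guarantee probability at least $\frac{\alpha}{K} \ge \frac{\alpha^2}{2}$ on a grid price lying strictly between $b_t$ and $v_t$, where the lie costs at least $\alpha$, and use truthfulness of the single-round posted-price mechanism to ensure the loss is pointwise nonnegative elsewhere. If anything, you are slightly more complete than the paper, whose written proof only treats the underbidding case $b_t < v_t$, whereas you handle the overbidding case explicitly via $p^\star = b_t$.
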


\begin{proof}
	Suppose $b_t < v_t$.
	Then, with probability $\frac{\alpha}{K}$ the algorithm price at $b_t + \alpha$ on day $t$. 
	Then, the utility of bid $b_t$ is zero while that of truthful bid is at least $\alpha$.
	Further, truthful bid is never worse than bid $b_t$ in the current round.
	So the total utility loss on day $t$ of bid $b_t$ is at least $\alpha \cdot \frac{\alpha}{K} > \frac{\alpha^3}{2}$.
\end{proof}

\begin{lemma} 
	\label{lem:single-bidder-deviation}
	On any day $t$, we have $b_t \in [v_t - 2 \alpha, v_t + 2 \alpha]$ for
	\begin{itemize}[topsep=1mm, parsep=0mm, itemsep=0mm]
		\item $\alpha = (4\tau \epsilon)^{1/3}$ under the assumption of large market; or
		\item $\alpha = (\frac{4\epsilon}{1-\gamma})^{1/3}$ under the assumption of impatient bidders. 
	\end{itemize}
\end{lemma}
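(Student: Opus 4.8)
The plan is to show that no bid more than $2\alpha$ away from the true value can be a best response, by pitting the immediate loss from lying (Lemma~\ref{lem:single-bidder-lying-loss}) against the largest possible future gain, which is controlled by the stability of continuation utility (Lemma~\ref{lem:single-bidder-stability}). Fix a bidder arriving on day $t$ with value $v_t$, and suppose for contradiction that in equilibrium he submits $b_t$ with $|b_t - v_t| > 2\alpha$. I would compare his total (current plus discounted future) utility under $b_t$ against the alternative in which he bids truthfully on day $t$; write $u_t(\cdot)$ for the day-$t$ utility and $U(\cdot)$ for the bidder's optimized discounted continuation value as a function of his day-$t$ bid.

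First, by Lemma~\ref{lem:single-bidder-lying-loss}, switching from $b_t$ to the truthful bid raises the day-$t$ utility by at least $\frac{\alpha^3}{2}$, since the single-round mechanism is truthful and the random-price step makes any deviation exceeding $2\alpha$ strictly costly, so $u_t(v_t) - u_t(b_t) \ge \frac{\alpha^3}{2}$. Second, I would bound the corresponding drop in continuation value. By Lemma~\ref{lem:single-bidder-stability}, $U(b_t)$ and $U(v_t)$ differ by at most a multiplicative $e^\epsilon$ factor plus an additive $\delta T$, so writing $U_{\max}$ for an upper bound on the continuation value and using $U(v_t) \ge 0$ gives $U(b_t) - U(v_t) \le (e^\epsilon - 1)U_{\max} + \delta T$. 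Combining, the net change in total utility from bidding truthfully is at least
\[
\frac{\alpha^3}{2} - \big( (e^\epsilon - 1) U_{\max} + \delta T \big),
\]
and it suffices to choose $\alpha$ making this strictly positive, which contradicts $b_t$ being a best response and hence forces $|b_t - v_t| \le 2\alpha$.

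Finally, I would bound $U_{\max}$ in each regime and check the constants. Under large market the bidder appears on at most $\tau$ days, so at most $\tau-1$ of them are in the future and each contributes utility at most $1$, giving $U_{\max} \le \tau - 1$; under impatient bidders the discounted continuation value is at most the geometric sum $\sum_{k\ge 1}\gamma^k = \frac{\gamma}{1-\gamma}$. Recalling $\delta = \epsilon/T$ (so $\delta T = \epsilon$) and $e^\epsilon - 1 \le 2\epsilon$ for small $\epsilon$, the required inequality $\frac{\alpha^3}{2} \ge (e^\epsilon-1)U_{\max} + \delta T$ is met by $\alpha = (4\tau\epsilon)^{1/3}$ in the large-market case (where $2\epsilon(\tau-1)+\epsilon < 2\tau\epsilon = \frac{\alpha^3}{2}$) and by $\alpha = (\tfrac{4\epsilon}{1-\gamma})^{1/3}$ in the impatient case (where $2\epsilon\tfrac{\gamma}{1-\gamma}+\epsilon \le \tfrac{2\epsilon}{1-\gamma} = \frac{\alpha^3}{2}$ using $\gamma \le 1$).

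The main obstacle is the middle step: the stability guarantee must be read as a statement about the bidder's \emph{optimized} continuation value (not a single fixed strategy) and applied in the correct direction, so that what is controlled is the genuine future gain from deviating rather than merely the change under a frozen continuation. Once this is handled, the two regimes work out only because $U_{\max}$ is bounded by the \emph{tight} quantities $\tau-1$ and $\frac{\gamma}{1-\gamma}$ respectively — the looser bound $\frac{1}{1-\gamma}$ would leave no slack — so some care with these constants is needed, but the remaining reasoning is elementary bookkeeping built directly on the two preceding lemmas.
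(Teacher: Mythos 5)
Your proposal is correct and follows essentially the same route as the paper's proof: contradiction via Lemma~\ref{lem:single-bidder-lying-loss} for the current-round loss of at least $\frac{\alpha^3}{2}$, Lemma~\ref{lem:single-bidder-stability} for the future gain bound $(e^\epsilon-1)U_{\max}+\delta T$, and the same tight bounds $U_{\max}\le \tau-1$ and $U_{\max}\le \frac{\gamma}{1-\gamma}$ with $\delta T=\epsilon$. Your closing observations (that stability must be read as a statement about optimized continuation values, and that $\frac{\gamma}{1-\gamma}$ rather than $\frac{1}{1-\gamma}$ is needed for the constants to work) are exactly the points the paper's argument relies on.
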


\begin{proof}
	\emph{(Large market)~}  
	Suppose for contrary that $|b_t - v_t| > 2\alpha$.
	By Lemma~\ref{lem:single-bidder-lying-loss}, the utility in the current round is worse than that of bidding truthfully by at least $\frac{\alpha^3}{2}$.
	Further, Lemma~\ref{lem:single-bidder-stability}, the utility in subsequent round is better than that of truthful bidding by at most an $e^\epsilon < 1 + 2 \epsilon$ multiplicative factor plus an $\delta T$ additive factor. 
	Due to the large market assumption, the maximum utility in subsequent round is less than $\tau - 1$. 
	Further recall our choice of $\delta = \frac{\epsilon}{T}$.
	Hence, the above bound translates to having at most $2 \epsilon (\tau - 1) + \epsilon < 2 \epsilon \tau$ extra utility in subsequent rounds compared to truthful bidding.
	For $\alpha = (4\tau \epsilon)^{1/3}$, the loss in the current round is more than the gain in subsequent rounds, contradicting that $b_t$ is an equilibria bid.

	\emph{(Impatient bidders)~}
	The proof is almost identical, except that the maximum (discounted) utility in subsequent rounds is now bounded by $\gamma + \gamma^2 + \dots = \frac{\gamma}{1-\gamma} = \frac{1}{1-\gamma} - 1$ (instead of $\tau - 1$).
	%
\end{proof}

\begin{corollary} 
	\label{cor:single-bidder-gt-regret}
	The game-theoretic regret is bounded by $2 \alpha T$ for
	\begin{itemize}[topsep=1mm, parsep=0mm, itemsep=0mm]
		\item $\alpha = (4\tau \epsilon)^{1/3}$ under the assumption of large market; or
		\item $\alpha = (\frac{4\epsilon}{1-\gamma})^{1/3}$ under the assumption of impatient bidders. 
	\end{itemize}
\end{corollary}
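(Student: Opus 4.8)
The plan is to derive the corollary directly from the deviation bound in Lemma~\ref{lem:single-bidder-deviation}, which guarantees $b_t \in [v_t - 2\alpha, v_t + 2\alpha]$ for every day $t$ under the stated choice of $\alpha$ (identical argument for both assumptions, since they differ only in which value of $\alpha$ makes the lemma apply). Since the game-theoretic regret $\OPT(\{v_t\}_t) - \OPT(\{b_t\}_t)$ only needs an \emph{upper} bound, it suffices to exhibit a single fixed price on the bids whose revenue is not much smaller than $\OPT(\{v_t\}_t)$; for this only the one-sided inequality $b_t \ge v_t - 2\alpha$ is needed, and the upper deviation $b_t \le v_t + 2\alpha$ plays no role here.

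Concretely, I would let $p^*$ denote an optimal fixed price for the true values, so that $\OPT(\{v_t\}_t) = p^* \cdot \sum_{t} \mathbf{1}_{v_t \ge p^*}$. The key observation is that whenever $v_t \ge p^*$, the deviation bound gives $b_t \ge v_t - 2\alpha \ge p^* - 2\alpha$, so every bidder who clears the price $p^*$ on values also clears the shifted price $p^* - 2\alpha$ on bids. Evaluating $\OPT(\{b_t\}_t)$ at this shifted price therefore yields
\[
\OPT(\{b_t\}_t) \ge (p^* - 2\alpha) \sum_{t} \mathbf{1}_{b_t \ge p^* - 2\alpha} \ge (p^* - 2\alpha) \sum_{t} \mathbf{1}_{v_t \ge p^*} .
\]
Subtracting this from $\OPT(\{v_t\}_t)$ collapses the counting terms and leaves
\[
\OPT(\{v_t\}_t) - \OPT(\{b_t\}_t) \le 2\alpha \sum_{t} \mathbf{1}_{v_t \ge p^*} \le 2 \alpha T ,
\]
as desired.

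The only point requiring a moment of care is the boundary case $p^* < 2\alpha$, where the shifted price would be negative and hence not a valid price in $[0,1]$; there I would simply observe that $\OPT(\{v_t\}_t) = p^* \sum_t \mathbf{1}_{v_t \ge p^*} \le 2\alpha T$ already holds, so together with $\OPT(\{b_t\}_t) \ge 0$ the bound is immediate. I do not expect a substantive obstacle in this corollary: all of its content sits in Lemma~\ref{lem:single-bidder-deviation}, and the step above is just the routine translation of per-round closeness of bids to values into closeness of the two optimal-fixed-price revenues, exploiting the robustness of the fixed-price benchmark to small perturbations of the bids.
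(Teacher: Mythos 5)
Your proof is correct and follows essentially the same argument as the paper's: shift the optimal price $p^*$ down by $2\alpha$, observe that every sale at $p^*$ on the values becomes a sale at $p^* - 2\alpha$ on the bids (using Lemma~\ref{lem:single-bidder-deviation}), and conclude the revenue loss is at most $2\alpha$ per round. Your explicit treatment of the boundary case $p^* < 2\alpha$ is a small but welcome addition that the paper's terse proof omits.
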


\begin{proof}
	By Lemma~\ref{lem:single-bidder-deviation}, $|b_t - v_t| \le 2 \alpha$.
	Suppose $p^*$ is the optimal fixed price w.r.t.\ $\{v_t\}_t$. 
	Consider fixed price $p^* - 2\alpha$ w.r.t.\ $\{b_t\}_t$.
	Every time $p^*$ has a sale w.r.t.\ $\{v_t\}_t$, $p^* - 2\alpha$ also gets a sale w.r.t.\ $\{b_t\}_t$.
	So the revenue of price $p^* - 2\alpha$ w.r.t.\ $\{b_t\}_t$ is at least the optimal w.r.t.\ $\{v_t\}_t$ minus $2 \alpha T$.
\end{proof}

\subsection{Bounding regret} \label{subsection:full-one-fold-regret}
We prove the regret under the large market assumption. 
The case of impatient bidders is almost identical.
Putting together Corollary~\ref{cor:single-bidder-learning-regret-sketch} and Corollary~\ref{cor:single-bidder-gt-regret}, the regret of Algorithm~\ref{alg:single-bidder-sketch} is at most
%
$
\O \big( \sqrt{\log K} (\sigma \sqrt{\log T} + \frac{T}{\sigma \sqrt{\log T}}) + \alpha T \big)
$
for $\alpha = (4 \tau \epsilon)^{1/3}$.
This means that we shall set $\epsilon = \Theta( \frac{\alpha^3}{\tau} )$ and, thus,
%
$
\sigma 
= 
\tilde{\Theta} \big( \frac{\sqrt{K}}{\varepsilon} \big)
= 
\tilde{\Theta} \big( \tau \alpha^{-3.5} \big)
$.
So the 2nd term in the above regret bound is negligible compared to $\alpha T$.
The regret bound becomes
%
$
\tilde{\O} \big(  \tau \alpha^{-3.5} \big) + \O \big( \alpha T \big) \le \O \big( \alpha T \big)
$
%
if $T \ge \tau \alpha^{-4.5}$.

\subsection{Improved Algorithm}\label{subsection:single-autcion-2Dtree}

\paragraph{Two-fold Tree-aggregation}
We then give a two-fold tree-aggregation which reduces the scale of noise added to the cumulative gains from $\sqrt{K}$ to $\polylog K$ in terms of the dependence on $K$. 
We exploit the special structure of the gain vector $\vec{g}_t$ to further reduce this number to $\log K \log T$.

For convenience of notations in proofs, we consider prices in decreasing order in the analysis of two-fold tree-algorithm, i.e., $p_1 = 1 = (K-1)\alpha, \dots, p_{K-1}=\alpha, p_K=0$.
For any bid $b_t = p_{i_t} = (K - i_t) \alpha$, the gain vector is $\vec{g}_t = \big( 0, \dots, 0, (K - i_t) \alpha, (K - i_t - 1) \alpha, \dots, \alpha, 0 \big)$.
%
%
We denote the identification vector of $b_t$ as $\vec{i}(b_t)=(0,\dots,0,1,0,\dots,0)$, where only the $i_t$-th entry is $1$ and other entries are zero. 
Further denote as $\vec{c}(b_t)=(0,\dots,0,1,\dots,1)$ the prefix sum vector of $\vec{i}(b_t)$.
Then, the gain vector can be expressed as $\vec{g}_t = P \vec{c}(b_t)$,
%
%
where $P= \text{diag}(1, 1-\alpha, \dots, \alpha, 0)$.

In other words, the cumulative gains can be expressed as the prefix sum over both the time horizon and the expert horizon, i.e., $G_{ti} = p_i \cdot \sum_{t' = 1}^t \sum_{i' = 1}^{i} \vec{1}({b_{t'} = p_{i'}})$.
%
%
Hence, by using the tree-aggregation technique on both horizons, we can maintain $KT$ partial sums such that each $b_t$ contributes to at most $\log K \log T$ partial sums, and each $G_{ti}$ can be expressed as the sum of at most $\log K \log T$ partial sums.
Using the definitions of $\Gamma_i$ and $\Lambda_i$, we present the two-fold tree-aggregation in Algorithm~\ref{alg:tree-aggregation-2d}.
Then, our algorithm (Algorithm~\ref{alg:single-bidder-sketch}) uses the two-fold tree-aggregation to keep track of the cumulative gain.

\begin{algorithm}[t]
	\caption{~Two-fold Tree-aggregation}
	\label{alg:tree-aggregation-2d}
	\begin{algorithmic}[1]
		\STATE \textbf{input:~} dimension $K$, bid $b_t \in [0, 1]$ of each round $t$, noise scale $\sigma$.
		\STATE \textbf{internal states:~} noisy partial sums $A_{ti}$ for all $t \in [T]$ and all $i \in [K]$.
		\STATE \textbf{initialize:~} ${A}_{ti} = {\mu}_{ti}$ for all $t \in [T]$ and all $i\in[K]$, with $\mu_{ti}$'s i.i.d.\ from $N(0, \sigma^2)$.
		\FOR{$t = 1, 2, \dots, T$} 
		\STATE Receive bid $b_t = p_{i_t} = (K - i_t) \alpha$ as input.
		\STATE Let $A_{ji} = A_{ji} + 1$ for all $j$ s.t.\ $t \in \Lambda_j$ and for all $i$ s.t. $i_t \in \Lambda_i$. 
		\STATE \textbf{Output} $\tilde{\vec{G}}_t$ with $\tilde{G}_{ti} = p_i \big(\sum_{j \in \Gamma_t}\sum_{k \in \Gamma_i} A_{jk} + \nu_{ti}$\big), where $\nu_{ti}$'s are i.i.d.\ from normal distribution ${N} \big(0, \big( (\log K+1)(\log T + 1) - |\Gamma_t||\Gamma_i| \big) \sigma^2 \big)$. 
		\ENDFOR
	\end{algorithmic}
\end{algorithm}

\begin{lemma}
	\label{lem:tree-aggregation-privacy-previous-2d}	
	Fixed any $t_0 \in [T]$, the values of internal states $A_{ti}$'s after time $t_0$ are $(\epsilon, \delta = \frac{\epsilon}{T})$-differentially private with noise scale
	$\sigma \ge \frac{8\log T \log K}{\varepsilon}  \sqrt{\ln(\frac{ \log K \log T}{\delta})}$.
\end{lemma}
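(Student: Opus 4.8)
The plan is to mirror the one-fold privacy argument (Lemma~\ref{lem:tree-aggregation-privacy-previous} and its corollary Lemma~\ref{lem:tree-aggregation-privacy}) and reduce the claim to a single application of the Gaussian mechanism (Lemma~\ref{lemma:gaussian-mechanism}). First I would observe that the snapshot of the internal states $A_{ti}$ taken right after round $t_0$ depends only on the bids $b_1, \dots, b_{t_0}$: a bid processed in a round after $t_0$ has not yet been added, so future bids are irrelevant to this snapshot (this is the two-fold analogue of the ``subsequent gain vectors are all zero'' observation in the proof of Lemma~\ref{lem:tree-aggregation-privacy}). Writing $F(b_1,\dots,b_{t_0})$ for the vector of all noiseless partial sums $A_{ti}$, the released internal states are exactly $F(\cdot)$ plus the initialization perturbations $\mu_{ti}$, which are i.i.d.\ $N(0,\sigma^2)$ across all $KT$ coordinates. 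Hence the released snapshot is precisely the Gaussian mechanism applied to $F$, and it suffices to bound the $\ell_2$ sensitivity $\Delta_2(F)$ under a one-bid change.

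The crux is the sensitivity computation, and it is exactly here that the two-fold structure pays off. Consider two neighboring bid sequences that differ only in round $s \le t_0$, changing its expert index from $i_s$ to $i_s'$. By the update rule (adding $1$ to $A_{ji}$ for every $j$ with $s \in \Lambda_j$ and every $i$ with $i_s \in \Lambda_i$), the set of affected time indices is $\{j : s \in \Lambda_j\}$, of size at most $\log T + 1$, exactly as in the one-fold case. In the expert dimension, the increment pattern changes from the index set $\{i : i_s \in \Lambda_i\}$ to $\{i : i_s' \in \Lambda_i\}$; each of these has size at most $\log K + 1$, so their symmetric difference---the only expert coordinates whose value changes---has size at most $2(\log K + 1)$. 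Because each bid contributes a single $+1$ rather than a length-$K$ gain vector (this is the point of encoding $b_t$ through its identification and prefix-sum vectors $\vec{i}(b_t),\vec{c}(b_t)$), every affected coordinate changes by exactly $1$ in absolute value. Thus at most $(\log T + 1)\cdot 2(\log K + 1)$ coordinates change, each by $1$, giving $\Delta_2(F) \le \sqrt{2(\log T + 1)(\log K + 1)}$---a $\polylog K$ dependence, in contrast to the $\sqrt{K}$ sensitivity of the one-fold gain vectors.

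Finally I would invoke the Gaussian mechanism: $(\epsilon,\delta)$-differential privacy holds as soon as $\sigma \ge \sqrt{2\ln(1.25/\delta)}\,\Delta_2(F)/\epsilon$, and plugging in the sensitivity bound this is satisfied by $\sigma \ge \frac{2}{\epsilon}\sqrt{(\log T + 1)(\log K + 1)\ln(1.25/\delta)}$, which is dominated (for the relevant ranges of $T$ and $K$) by the stated $\sigma \ge \frac{8\log T\log K}{\varepsilon}\sqrt{\ln(\frac{\log K\log T}{\delta})}$. The main obstacle is getting the sensitivity right: one must account for the fact that perturbing a bid moves its expert coordinate $i_s$, so the expert-tree contribution is a symmetric difference of two root-to-leaf paths rather than a single path, and one must confirm that the single-$+1$ encoding keeps the per-coordinate change equal to $1$. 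Everything else---the reduction to the Gaussian mechanism and the ``snapshot after $t_0$'' observation---is a direct transcription of the one-fold argument.
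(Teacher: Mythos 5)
Your proof is correct, but it takes a genuinely different route from the paper's. The paper's (very terse) proof treats each affected partial sum as a separate sensitivity-$1$ Gaussian mechanism and invokes the composition theorem (Lemma~\ref{lemma:composition-theorem}) over the at most $\log K \log T$ partial sums a single bid touches; splitting $(\epsilon,\delta)$ across these sums is exactly what forces the noise scale to be \emph{linear} in $\log T \log K$, as in the lemma statement. You instead bundle the entire post-$t_0$ snapshot of partial sums into one vector-valued function $F$ and apply the Gaussian mechanism (Lemma~\ref{lemma:gaussian-mechanism}) once, bounding the global $\ell_2$ sensitivity by $\sqrt{2(\log T + 1)(\log K + 1)}$, since a one-bid change alters at most $2(\log T+1)(\log K+1)$ coordinates, each by exactly $1$. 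This avoids composition altogether and establishes privacy already at noise scale $\tilde{\O}\big(\sqrt{\log T \log K}/\epsilon\big)$, a quadratic improvement over what the lemma demands, so the stated $\sigma$ is comfortably sufficient and your argument in fact proves a slightly stronger statement. Two details you make explicit that the paper glosses over are worth keeping: first, the snapshot after round $t_0$ depends only on $b_1,\dots,b_{t_0}$, so the neighboring relation on bids up to $t_0$ is the right one (this mirrors the ``subsequent gain vectors are zero'' step in Lemma~\ref{lem:tree-aggregation-privacy}); second, under a bid change the affected expert coordinates form a \emph{symmetric difference} of two dyadic index sets, giving a factor of $2$ that the paper's count of ``$\log K \log T$ affected sums'' silently drops and absorbs into its constant $8$.
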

\begin{proof} 
	Note that each bid $b_t$ contributes at at most $\log K \log T$ of the partial sums with sensitivity $1$. 
	The lemma then follows by the privacy guarantee of Gaussian mechanism (Lemma~\ref{lemma:gaussian-mechanism}) and the composition theorem (Lemma~\ref{lemma:composition-theorem}).
\end{proof}

\subsection{Bounding learning regret (improved algorithm)}
We then analyze the learning regret of Algorithm \ref{alg:single-bidder-sketch} with two-fold tree-aggregation.
For any day $t$, we denote the error term in the cumulative gain up to time $t$, i.e., $\tilde{\vec{G}}_t - \vec{G}_t$, as $\vec{u}_t$. 
Note that $u_{ti}$ follows a normal distribution ${N}(0,\sigma^2p_i^2(\log K+1)(\log T+1))$, where $p_i$ is the price of the $i$-th expert. 
Hence, $\vec{u}_t$ follows a Gaussian distribution with correlations among entries due to tree-aggregation on the expert/price horizon. 
Further, all $\vec{u}_t$'s follow the same Gaussian, but are correlated over the time horizon.
Due to the correlations over the expert horizon, we cannot directly apply the results of follow-the-perturbed-leader with independent Gaussian noise (e.g., \cite{abernethy2014online}).

Instead, we analyze the regret using the more general framework of Gradient-based-prediction-algorithm (GBPA) by~\citet{abernethy2014online}. 
In each round $t$, the seller picks a suitable function $\Phi_t$, and then selects an expert/strategy from distribution $\nabla \Phi_t(\vec{G}_{t-1})$ (given that it is a distribution). 
Algorithm \ref{alg:single-bidder-sketch} effectively use a fixed (correlated) Gaussian smoothing potential function $\Phi_t=\Phi$ with $\Phi(\vec{G}_{t-1})=\mathbb{E}_{\vec{u}_{t-1}} \left[ \max_{i}\left\{{G}_{(t-1)i} + u_{(t-1)i}\right\} \right] $.


\begin{lemma}\label{lemma:gbpa}
	(Lemma 2 of \citet{abernethy2014online}) 
	The learning regret w.r.t.\ the best fixed discretized price is at most
	\begin{equation}\label{equ:gbpa}
		\textstyle
		\max_{i \in [K]} G_{Ti} - \Phi(\vec{G}_T) + \Phi(0) + \sum_{t=1}^T  D_{\Phi}(\vec{G}_{t},\vec{G}_{t-1})
	\end{equation}
	where the last term is the Bregman divergence defined as $D_{\Phi}(\vec{y},\vec{x})=\Phi(\vec{y})-\Phi(\vec{x})-\langle \nabla\Phi(\vec{x}),\vec{y}-\vec{x}\rangle$.
\end{lemma}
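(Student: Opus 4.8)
The plan is to prove the bound by a straightforward telescoping of the potential $\Phi$ along the trajectory $\vec{G}_0 = \vec{0}, \vec{G}_1, \dots, \vec{G}_T$, and to identify the linear terms that fall out with the algorithm's expected cumulative gain. The one structural fact I rely on is the defining property of the GBPA: in round $t$ the algorithm plays expert $i$ with probability equal to the $i$-th coordinate of $\nabla \Phi(\vec{G}_{t-1})$. This holds because $\Phi(\vec{x}) = \E_{\vec{u}}\big[\max_i \{x_i + u_i\}\big]$ is the Gaussian smoothing of the max: it is convex and (thanks to the continuous noise) differentiable everywhere, with $\nabla_i \Phi(\vec{x}) = \Pr_{\vec{u}}\big[i = \argmax_j \{x_j + u_j\}\big]$, an honest probability vector. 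Consequently the algorithm's expected gain in round $t$ is $\langle \nabla \Phi(\vec{G}_{t-1}), \vec{g}_t \rangle$, and since $\vec{g}_t = \vec{G}_t - \vec{G}_{t-1}$ this equals $\langle \nabla \Phi(\vec{G}_{t-1}), \vec{G}_t - \vec{G}_{t-1} \rangle$.

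Next I would rewrite each single-step increment of the potential using the definition of the Bregman divergence. For every $t$,
\[
\Phi(\vec{G}_t) - \Phi(\vec{G}_{t-1}) = \langle \nabla \Phi(\vec{G}_{t-1}), \vec{G}_t - \vec{G}_{t-1} \rangle + D_\Phi(\vec{G}_t, \vec{G}_{t-1}),
\]
which is merely the identity $D_\Phi(\vec{y}, \vec{x}) = \Phi(\vec{y}) - \Phi(\vec{x}) - \langle \nabla \Phi(\vec{x}), \vec{y} - \vec{x} \rangle$ rearranged. Summing over $t = 1, \dots, T$ telescopes the left-hand side to $\Phi(\vec{G}_T) - \Phi(\vec{G}_0) = \Phi(\vec{G}_T) - \Phi(\vec{0})$, while by the previous paragraph the first term on the right sums to the algorithm's expected cumulative gain $\sum_{t=1}^T \langle \nabla \Phi(\vec{G}_{t-1}), \vec{g}_t \rangle$.

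Rearranging the telescoped identity then expresses the algorithm's gain as $\Phi(\vec{G}_T) - \Phi(\vec{0}) - \sum_{t=1}^T D_\Phi(\vec{G}_t, \vec{G}_{t-1})$, and subtracting this from the best fixed expert's cumulative gain $\max_{i \in [K]} G_{Ti}$ yields exactly the claimed expression for the learning regret. In fact the derivation produces an equality, so the ``at most'' in the statement holds trivially; the point of writing it as an upper bound is that the subsequent parts of the paper control the Bregman-divergence terms (which are nonnegative by convexity of $\Phi$) together with the gap $\max_i G_{Ti} - \Phi(\vec{G}_T) + \Phi(\vec{0})$.

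There is no genuine obstacle here: the argument is purely algebraic telescoping once the gradient-sampling identity is available. The two points that deserve care are (i) checking that $\Phi$ is genuinely differentiable so that $\nabla \Phi(\vec{G}_{t-1})$ is well defined and is a bona fide distribution over experts — this is exactly the condition flagged as ``given that it is a distribution'' in the setup, and it is guaranteed by the continuous (correlated) Gaussian smoothing used in Algorithm~\ref{alg:single-bidder-sketch}; and (ii) observing that, because the same potential $\Phi_t = \Phi$ is used in every round, no cross-terms comparing consecutive potentials arise, which is why the general GBPA bound collapses to the clean form stated in the lemma.
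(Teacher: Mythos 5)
Your proof is correct: the paper itself gives no proof of this lemma (it imports it as Lemma 2 of \citet{abernethy2014online}), and your telescoping derivation—using $\nabla_i\Phi(\vec{G}_{t-1})$ as the play distribution, the Bregman identity per round, and summing—is exactly the standard argument behind the cited result, including the correct observations that the bound is in fact an equality and that differentiability of the (correlated) Gaussian-smoothed potential is what makes $\nabla\Phi$ a bona fide distribution. No gaps to report.
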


Note that Bregman divergence measures how fast the gradient changes, which can be bounded exploiting differential privacy of the algorithm. 

\begin{lemma}\label{lemma:bregman}
	For any $t \in [T]$, we have $D_{\Phi}(\vec{G}_t,\vec{G}_{t-1})\leq 2 \varepsilon + 2 \delta$.
\end{lemma}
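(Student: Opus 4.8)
The plan is to connect the Bregman divergence to the $(\epsilon,\delta)$-differential privacy of the argmax selection rule, via two applications of the convexity of $\Phi$. First I would record the standard fact that the smoothing potential $\Phi(\vec{x}) = \E_{\vec{u}}[\max_i \{x_i + u_i\}]$ is convex and that, since the tree-aggregation noise $\vec{u}$ is continuous (so ties have probability zero), its gradient is exactly the selection distribution: $\nabla_i \Phi(\vec{x}) = \Pr_{\vec{u}}[\argmax_j (x_j + u_j) = i]$. Writing $\vec{g}_t = \vec{G}_t - \vec{G}_{t-1}$ and $p(\vec{x}) = \nabla\Phi(\vec{x})$, convexity applied at $\vec{G}_t$ gives $\Phi(\vec{G}_{t-1}) \ge \Phi(\vec{G}_t) + \langle p(\vec{G}_t), -\vec{g}_t\rangle$, so combined with the definition of $D_\Phi$,
\[
D_\Phi(\vec{G}_t, \vec{G}_{t-1}) = \Phi(\vec{G}_t) - \Phi(\vec{G}_{t-1}) - \langle p(\vec{G}_{t-1}), \vec{g}_t\rangle \le \langle p(\vec{G}_t) - p(\vec{G}_{t-1}), \vec{g}_t\rangle .
\]
This reduces the analytic quantity to the change of the selection distribution when the single gain vector $\vec{g}_t$ is added.

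Next I would exploit the structure of $\vec{g}_t$, namely that every entry satisfies $0 \le g_{ti} \le 1$. Letting $S = \{ i : p_i(\vec{G}_t) > p_i(\vec{G}_{t-1})\}$, the inner product is maximized by putting all weight on the coordinates in $S$, so
\[
\langle p(\vec{G}_t) - p(\vec{G}_{t-1}), \vec{g}_t\rangle \le \sum_{i \in S} \big( p_i(\vec{G}_t) - p_i(\vec{G}_{t-1}) \big) = \Pr[\argmax \in S \mid \vec{G}_t] - \Pr[\argmax \in S \mid \vec{G}_{t-1}] .
\]
Thus it suffices to bound the difference of the probabilities that the chosen price lands in a fixed set $S$, under the cumulative gains with and without the day-$t$ bid.

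Finally I would invoke privacy. The key observation is that $\vec{G}_{t-1}$ and $\vec{G}_t$ arise from bid sequences differing by exactly one record (the bid $b_t$), and that the map from bids to the selection distribution $p(\cdot)$ is a post-processing (Lemma~\ref{lemma:post-processing}) of the internal partial sums, which are $(\epsilon,\delta)$-differentially private by Lemma~\ref{lem:tree-aggregation-privacy-previous-2d} (taking $t_0 = t$). Hence $\Pr[\argmax \in S \mid \vec{G}_t] \le e^\epsilon \Pr[\argmax \in S \mid \vec{G}_{t-1}] + \delta$, and since $\Pr[\cdot] \le 1$ and $e^\epsilon \le 1 + 2\epsilon$ for $\epsilon \le 1$,
\[
\Pr[\argmax \in S \mid \vec{G}_t] - \Pr[\argmax \in S \mid \vec{G}_{t-1}] \le (e^\epsilon - 1) + \delta \le 2\epsilon + \delta \le 2\epsilon + 2\delta .
\]
Chaining the three displays proves the claim.

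I expect the main obstacle to lie in the privacy step rather than the convexity manipulation: one must argue carefully that $\vec{G}_{t-1}$ and $\vec{G}_t$ genuinely form a neighboring pair for the two-fold tree-aggregation mechanism (the added bid perturbs at most $\log K \log T$ partial sums, each by one, which is precisely the sensitivity covered by Lemma~\ref{lem:tree-aggregation-privacy-previous-2d}), and that $p(\vec{x})$ is computed from exactly these privatized partial sums under a single fixed noise law, so that post-processing applies to both endpoints simultaneously. A secondary technical point is justifying $\nabla_i\Phi(\vec{x}) = \Pr[\argmax = i]$, which follows by differentiating under the expectation together with the almost-sure uniqueness of the maximizer.
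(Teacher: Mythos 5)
Your proof is correct and follows essentially the same route as the paper's: reduce the Bregman divergence via convexity to the change in the argmax-selection distribution $\nabla\Phi$, then treat $\vec{G}_{t-1}$ and $\vec{G}_t$ as neighboring datasets (identically distributed noise, privacy of the tree-aggregation internal states plus post-processing) to bound that change by the $(\varepsilon,\delta)$-privacy factors. The only cosmetic difference is that you bound the one-sided positive part of the change using $g_{ti}\in[0,1]$, where the paper bounds the full $\ell_1$ distance $\lVert\vec{q}_t-\vec{q}_{t-1}\rVert_1$ via H\"{o}lder's inequality, which is why you arrive at the marginally sharper $2\varepsilon+\delta$ before relaxing to the stated $2\varepsilon+2\delta$.
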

\begin{proof}
	From the definition of Bregman divergence, we have
	\begin{align*}
		D_{\Phi}(\vec{G}_t,\vec{G}_{t-1}) &= \Phi(\vec{G}_t) - \Phi(\vec{G}_{t-1}) -\langle \nabla \Phi(\vec{G}_{t-1}), \vec{G}_t-\vec{G}_{t-1} \rangle \\
		&\leq |\langle \vec{G}_t-\vec{G}_{t-1},\nabla\Phi(\vec{G}_t)-\nabla\Phi(\vec{G}_{t-1})\rangle| \\
		&=|\langle \vec{g}_t, \nabla\Phi(\vec{G}_t)-\nabla\Phi(\vec{G}_{t-1})\rangle|\\
		&\leq ||\vec{g}_t||_{\infty} ||\nabla\Phi(\vec{G}_t)-\nabla\Phi(\vec{G}_{t-1})||_{1}\\
		&\leq ||\nabla\Phi(\vec{G}_t)-\nabla\Phi(\vec{G}_{t-1})||_{1}
	\end{align*}
	where the first inequality is by convexity, the second inequality is by H\"{o}lder inequality, and the last inequality is because of $\lVert \vec{g}_t \rVert_{\infty} \le 1$.
	
	Let $\vec{q}_t = \nabla \Phi(\vec{G}_{t})$ and $\vec{q}_{t-1} = \nabla \Phi(\vec{G}_{t-1})$ be the pricing probabilities on day $t+1$ and day $t$ respectively. 
	For any $i \in [K]$, let $S_i=\{\vec{y}\in \mathbb{R}^K: i=\argmax_j{y}_j\}$.
	Then, by definition, we have
	\[
	q_{ti}= \pr \big[ \vec{G}_{t}+\vec{u}_{t}\in S_i \big] \quad\text{and}\quad 	
	q_{(t-1)i}= \pr \big[ \vec{G}_{t-1}+\vec{u}_{t-1}\in S_i \big] ~.
	\]
	Note that $\vec{u}_{t}$ and $\vec{u}_{t-1}$ are identically distributed (yet not independent).
	Further, the noise is sufficient for guaranteeing $(\varepsilon,\delta)$-differentially private.
	For any subset $L \subseteq [K]$, considering $\vec{G}_t$ and $\vec{G}_{t-1}$ as two neighboring datasets, we have
	\[ 
	\textstyle
	\sum_{i \in L} q_{ti} = \Pr \left[ \vec{G}_t+ \vec{u} \in \bigcup_{i \in L} S_i \right] \leq e^{\varepsilon} \Pr \left[ \vec{G}_{t-1}+ \vec{u}\in \bigcup_{i \in L}S_i \right] +\delta = e^{\varepsilon} \cdot \sum_{i \in L} q_{(t-1)i} + \delta ~,
	\]
	and similarly $\sum_{i \in L} q_{(t-1)i} \le e^{\varepsilon} \cdot \sum_{i \in L} q_{ti} + \delta$.
	Therefore, we have
	\begin{align*}
		\textstyle
		||\vec{q}_t-\vec{q}_{t-1}||_1 
		&   
		=
		\sum_{i : q_{ti} \geq q_{(t-1)i}} \big( q_{ti} - q_{(t-1)i} \big) 
		+
		\sum_{i : q_{ti} < q_{(t-1)i}} \big( q_{(t-1)i} - q_{ti} \big) \\
		&
		\leq 
		(e^\varepsilon-1) \cdot \sum_{i : q_{ti} \geq q_{(t-1)i}} q_{(t-1)i} + \delta 
		+
		(1 - e^{-\varepsilon}) \cdot \sum_{i : q_{ti} < q_{(t-1)i}} q_{(t-1)i} + \delta \\
		&
		\le (e^\varepsilon-1) \sum_i q_{(t-1)i} + 2 \delta \\
		&
		= (e^\varepsilon-1) + 2 \delta \le 2 \varepsilon + 2 \delta ~.
	\end{align*}
\end{proof}

\begin{lemma}\label{lem:twofold-tree-regret}
	The learning regret of Algorithm~\ref{alg:single-bidder-sketch} is at most $\O \big( \varepsilon T + \alpha T \big) + \tilde{\O} \big( \sigma \big)$.
\end{lemma}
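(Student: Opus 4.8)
The plan is to control the four terms in the GBPA regret expression of Lemma~\ref{lemma:gbpa} and then add back the cost of the uniform-exploration step, exactly as in Corollary~\ref{cor:single-bidder-learning-regret-sketch}. Applying Lemma~\ref{lemma:gbpa} to Algorithm~\ref{alg:single-bidder-sketch} run without step~\ref{step:add-uniform}, the learning regret w.r.t.\ the best fixed discretized price is at most $\max_{i} G_{Ti} - \Phi(\vec{G}_T) + \Phi(0) + \sum_{t=1}^T D_\Phi(\vec{G}_t, \vec{G}_{t-1})$, and I would bound each summand separately.

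For the Bregman term, Lemma~\ref{lemma:bregman} gives $D_\Phi(\vec{G}_t, \vec{G}_{t-1}) \le 2\varepsilon + 2\delta$ per round, so the sum over $t$ is at most $2\varepsilon T + 2\delta T$; recalling the choice $\delta = \frac{\epsilon}{T}$, this is $\O(\varepsilon T)$. For the smoothing gap $\max_i G_{Ti} - \Phi(\vec{G}_T)$, note that $\Phi(\vec{x}) = \E_{\vec{u}}[\max_i\{x_i + u_i\}] \ge \E_{\vec{u}}[x_{i^*} + u_{i^*}] = \max_i x_i$ for the maximizer $i^*$ of $\vec{x}$, since the perturbation $\vec{u}$ has zero mean; hence this term is at most $0$.

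The only nontrivial term is $\Phi(0) = \E_{\vec{u}}[\max_i u_i]$. By construction each $u_i$ is a zero-mean Gaussian with variance $\sigma^2 p_i^2 (\log K + 1)(\log T + 1) \le \sigma^2 (\log K + 1)(\log T + 1)$, so its standard deviation is at most $s := \sigma\sqrt{(\log K + 1)(\log T + 1)}$. The expected maximum of $K$ zero-mean Gaussians, each with standard deviation at most $s$, is $\O(s\sqrt{\log K})$ regardless of their correlations (a union bound over standard Gaussian tails suffices and does not use independence). Substituting $s$ yields $\Phi(0) = \tilde{\O}(\sigma)$.

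Combining the three bounds, the learning regret of the exploration-free algorithm is $\tilde{\O}(\sigma) + \O(\varepsilon T)$. Finally, running with step~\ref{step:add-uniform} adds at most $\alpha T$ to the regret (a loss of at most $1$ incurred with probability $\alpha$ in each of the $T$ rounds), and the gap between the best fixed discretized price and the continuous optimum contributes another $\alpha T$; together these give the claimed bound $\O(\varepsilon T + \alpha T) + \tilde{\O}(\sigma)$. The main obstacle is the estimate of $\Phi(0)$: one must handle the correlated (non-independent) Gaussian noise introduced by the two-fold tree-aggregation, but since the bound on the expected maximum only relies on marginal tail bounds, the correlations are harmless.
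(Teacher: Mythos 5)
Your proposal is correct and follows essentially the same route as the paper: the GBPA decomposition of Lemma~\ref{lemma:gbpa}, the per-round Bregman bound of Lemma~\ref{lemma:bregman} combined with $\delta = \epsilon/T$ to get $\O(\varepsilon T)$, the observation that $\max_i G_{Ti} - \Phi(\vec{G}_T) \le 0$, the estimate $\Phi(0) = \tilde{\O}(\sigma)$, and the final $\O(\alpha T)$ for discretization and exploration. The only cosmetic difference is that the paper bounds $\E[\max_i u_i]$ by an explicit moment-generating-function computation, which is precisely the standard correlation-free argument you invoke as a known fact about the expected maximum of (possibly dependent) Gaussians.
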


\begin{proof}
	We analyze the terms in Eq.~\eqref{equ:gbpa} of Lemma~\ref{lemma:gbpa}.
	First, consider $\Phi(0) = \mathbb{E}_{\vec{u}} [\max_iu_i]$.
	%
	%
	Recall that $u_i\sim{N}(0,\sigma^2p_i^2\log K\log T)$.  

	For some parameter $a > 0$ to be determined later, we have
	\begin{align*}
		\textstyle
		\exp \big( a \cdot \E[ \max_i u_i] \big) & \leq \E \big[ \exp(a \cdot \max_i u_i) \big] \leq \sum_i \E \big[ \exp(a \cdot u_i) \big] \\
		& =\sum_{i} \exp \big( a^2p_i^2\sigma^2\log K\log T/2 \big) \\
		& \leq K \cdot \exp(a^2 \sigma^2\log K\log T/2) ~,
	\end{align*}
	the equality follows by the moment generating function of Gaussian random variables. 
	Hence,
	\[
	\textstyle
	\E \big[ \max_i u_i \big] \leq \frac{a\sigma^2\log K\log T}{2}+\frac{\ln K}{a},\forall a ~.
	\]
	Setting $a=\frac{\sqrt{2}}{\sigma\sqrt{\log T}}$, we get that 
	$\Phi(0) = \E[\max_i u_i] \le \sigma\log K\sqrt{2\log T}= \tilde{\O}(\sigma)$.
	
	Next we bound $\max_i G_{Ti} - \Phi(\vec{G}_T)$. Note that $\max_i G_{Ti}$ is a convex function of $\vec{G}_T$, we have 
	\[
	\textstyle
	\Phi(\vec{G}_T)=\E_{\mathbf{u}_T}[\max_i\{G_{Ti}+u_{Ti}\}]\ge \max_i\{{G}_{Ti}+\E_{{u}_T}[u_{Ti}]\}=\max_i G_{Ti}.
	\]
	Further, Lemma~\ref{lemma:bregman} bounds each term in the summation.
	Putting together with our choice of $\delta = \frac{\epsilon}{T}$, the regret w.r.t.\ the best fixed discretized price is at most $\O (\varepsilon T) + \tilde{\O} ( \sigma)$.
	%
	%
	Finally, the regret w.r.t.\ discretized prices differs from the learning regret by at most $\alpha T$. 
	So the lemma follows.
\end{proof}


\subsection{Bounding regret (Proof of Theorem~\ref{thm:single-bidder})}\label{subsection:full-two-fold-regret}
The stability of future utility (Lemma~\ref{lem:single-bidder-stability}) and the bounds on the game-theoretic regret (Corollary~\ref{cor:single-bidder-gt-regret}) still hold, because the internal states' differential privacy with Lemma~\ref{lem:tree-aggregation-privacy-previous-2d} in place of Lemma~\ref{lem:tree-aggregation-privacy}.

Combine with the upper bound on learning regret in Lemma~\ref{lem:twofold-tree-regret}, the total regret is at most $\O \big(\epsilon T+\alpha T \big) + \tilde{\O} (\sigma)$,
with $\epsilon=\Theta(\frac{\alpha^3}{\tau})$ under large market assumption.
Further recall that $\sigma = \tilde{\O}(\frac{1}{\epsilon})$, the regret is at most $\O(\alpha T)$ if $T \ge \tilde{\O} (\tau\alpha^{-4})$.
The case of impatient bidders is similar.

\subsection{Online Single-buyer Posted Pricing (Bandit Setting)}
\label{subsection:bandit_sketch}
In posted pricing, the seller cannot access the exact bid value of the buyer, and thus cannot obtain the gain vector $\mathbf{g}_t$.
This results in the failure of normal FTPL. 
Here we follow a widely-used method in online learning when the learner only has partial information. In each round we calculate an unbiased estimated gain vector $\bar{\mathbf{g}}_t$, i.e. $\E[\bar{\mathbf{g}}_t]=\mathbf{g}_t$. The expectation is taken w.r.t. the price chosen in round $t$. 
Formally, we propose our algorithm in bandit setting as following.

\begin{algorithm}[H]
	\caption{Extended FTPL in bandit setting} \label{Alg:FTPL-bandit-setting}
	\begin{algorithmic}[1]
		\STATE \textbf{input:} regret parameter $\alpha$, $K=\frac{1}{\alpha}+1$, privacy parameter $\varepsilon,\delta=\frac{\varepsilon}{T}$
		\STATE \textbf{initialize:} tree-aggregation (Alg.~\ref{alg:tree-aggregation}) with noise scale $\sigma = \frac{8K}{\alpha\varepsilon} \log T \sqrt{\ln\frac{\log T}{\delta}}$.
		\FOR {$t=1,\ldots,T$}
		\STATE Let $\tilde{\mathbf{q}}_{t-1}=(1-\alpha)\mathbf{q}_{t-1}+\frac{\alpha}{K} \textbf{1}$, where ${q}_{(t-1)i}=\Pr[i=\argmax_{i'} \{\tilde{G}_{(t-1){i'}}\}]$ for each $i \in [K]$.
		
		\STATE Select ${i_t}$ from $\tilde{\mathbf{q}}_{t-1}$, observe the reward $g_{ti_t}$, i.e. the payment.
		\STATE Calculate $\bar{\mathbf{g}}_t$ 
		\begin{displaymath}
		\bar{{g}}_{ti} = \left\{ \begin{array}{ll}
		 g_{ti_t}/\tilde{q}_{(t-1)i} &\textrm{if $i=i_t$,} \\
		0 & \textrm{otherwise.}
		\end{array} \right. 
		\end{displaymath}
		\STATE Let $\bar{\mathbf{G}}_t=\bar{\mathbf{G}}_{t-1}+\bar{\mathbf{g}}_t$.
		\STATE Update tree-aggregation with $\bar{\vec{g}}_t$ and get $\tilde{\vec{G}}_t$.
		\ENDFOR
	\end{algorithmic}
\end{algorithm}

Note that $\bar{\mathbf{g}}_t$ doesn't have a special structure that can reduce the noise scale. We are using normal tree aggregation in this algorithm.

Follow the same routine in full information setting, we first bound the learning regret. Then we prove that Algorithm~\ref{Alg:FTPL-bandit-setting} also has \textit{Stability of Future Utility} (Lemma~\ref{lem:single-bidder-stability}), therefore we can bound the game-theoretic regret. Finally we have the formal theorem in bandit setting:
\begin{theorem}
\label{thm:bandit_total_regret}
	
	For any $\alpha>0$, there is an online algorithm in bandit setting with regret $\O(\alpha T)$ when $T\ge \tilde{\O}(\tau\alpha^{-4})$ under the large market assumption, or $T\ge \tilde{\O}(\frac{\alpha^{-4}}{1-\gamma})$ under the impatient bidder assumption.
\end{theorem}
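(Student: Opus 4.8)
The plan is to mirror the full-information proof of Theorem~\ref{thm:single-bidder}: work in expectation over the algorithm's randomness and split the regret as $\OPT(\{v_t\}_t)-\E[\ALG]=\big(\OPT(\{v_t\}_t)-\OPT(\{b_t\}_t)\big)+\big(\OPT(\{b_t\}_t)-\E[\ALG]\big)$, bounding the game-theoretic term and the (expected) learning term separately. Almost all of the machinery already in place transfers: the game-theoretic half is driven purely by differential privacy of the tree-aggregation internal states, and the learning half by the GBPA framework of Lemma~\ref{lemma:gbpa}. The genuinely new work is in the learning half, because Algorithm~\ref{Alg:FTPL-bandit-setting} feeds the importance-weighted estimate $\bar{\vec g}_t$ (rather than the true gain $\vec g_t$) into tree-aggregation, and $\bar{\vec g}_t$ has a single coordinate that can be as large as $K/\alpha$.

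For the game-theoretic regret I would first re-establish Stability of Future Utility (Lemma~\ref{lem:single-bidder-stability}) for Algorithm~\ref{Alg:FTPL-bandit-setting}. The argument is verbatim the one in the full-information case once the privacy premise is checked: fixing the bidder's subsequent strategy, and noting the other bidders cannot observe day $t$, the entire execution after day $t$ is a post-processing (Lemma~\ref{lemma:post-processing}) of the tree-aggregation internal states, so it suffices that those states are $(\varepsilon,\delta)$-differentially private with respect to the day-$t$ bid. Changing $b_t$ alters only the active coordinate of $\bar{\vec g}_t$, by at most $g_{ti_t}/\tilde q_{(t-1)i_t}\le K/\alpha$; since the noise scale $\sigma=\frac{8K}{\alpha\varepsilon}\log T\sqrt{\ln(\log T/\delta)}$ is calibrated precisely to this $\ell_2$-sensitivity, the Gaussian-mechanism guarantee (Lemma~\ref{lemma:gaussian-mechanism}) yields the required privacy via the same tree-aggregation argument as in the full-information setting. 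The loss-from-lying estimate (Lemma~\ref{lem:single-bidder-lying-loss}) carries over unchanged, because the mixing step $\tilde{\vec q}_{t-1}=(1-\alpha)\vec q_{t-1}+\tfrac{\alpha}{K}\vec 1$ still offers every discretized price with probability at least $\alpha/K$, so a bid with $|b_t-v_t|>2\alpha$ still loses at least $\alpha\cdot\tfrac{\alpha}{K}>\tfrac{\alpha^3}{2}$ in expectation on day $t$. Combining these two facts exactly as in Lemma~\ref{lem:single-bidder-deviation} and Corollary~\ref{cor:single-bidder-gt-regret} forces $|b_t-v_t|\le 2\alpha$ and bounds the game-theoretic regret by $2\alpha T$, for $\varepsilon=\Theta(\alpha^3/\tau)$ under large market or $\varepsilon=\Theta((1-\gamma)\alpha^3)$ under impatient bidders.

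For the learning regret I would use the unbiasedness $\E[\bar{\vec g}_t\mid\text{history}]=\vec g_t$. First I strip off the $\alpha$-uniform exploration, which costs at most $\alpha T$ in revenue, reducing the task to the regret of follow-the-perturbed-leader run on $\bar{\vec G}_t$ with the fixed Gaussian-smoothing potential $\Phi$ whose gradient is the FTPL distribution $\vec q_{t-1}=\nabla\Phi(\bar{\vec G}_{t-1})$. Applying Lemma~\ref{lemma:gbpa} to $\bar{\vec G}$, taking expectations, and using Jensen (as in Lemma~\ref{lem:twofold-tree-regret}) to dispatch $\E[\max_i\bar G_{Ti}-\Phi(\bar{\vec G}_T)]\le 0$ together with $\E[\max_i\bar G_{Ti}]\ge\max_i G_{Ti}$, the learning regret is controlled by $\Phi(0)+\E\big[\sum_t D_\Phi(\bar{\vec G}_t,\bar{\vec G}_{t-1})\big]$ plus the $\O(\alpha T)$ exploration and discretization terms. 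Here $\Phi(0)=\E[\max_i u_i]=\tilde{\O}(\sigma)$ by the same moment-generating-function bound as before, now using $u_i\sim N(0,(\log T+1)\sigma^2)$ for every $i$ (Lemma~\ref{lem:tree-aggregation-noise}, normal rather than two-fold tree-aggregation).

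The main obstacle is the expected Bregman term. The crude per-round bound $D_\Phi(\bar{\vec G}_t,\bar{\vec G}_{t-1})\le\|\bar{\vec g}_t\|_\infty\,\|\vec q_t-\vec q_{t-1}\|_1$ from Lemma~\ref{lemma:bregman} multiplies a factor $\|\bar{\vec g}_t\|_\infty$ that can reach $K/\alpha$ against the privacy-based stability $\|\vec q_t-\vec q_{t-1}\|_1\le 2\varepsilon+2\delta$, which is hopelessly lossy when applied deterministically. The fix is to take the expectation over $i_t\sim\tilde{\vec q}_{t-1}$ before bounding: writing $D_\Phi(\bar{\vec G}_t,\bar{\vec G}_{t-1})=\tfrac12\,\bar{\vec g}_t^\top\nabla^2\Phi(\zeta)\,\bar{\vec g}_t$ for some $\zeta$ on the segment, only the active coordinate survives, so $\E_{i_t}[D_\Phi]\le\tfrac12\sum_i\tilde q_{(t-1)i}\,\tfrac{g_{ti}^2}{\tilde q_{(t-1)i}^2}\,[\nabla^2\Phi(\zeta)]_{ii}\le\tfrac12\sum_i\tfrac{1}{\tilde q_{(t-1)i}}[\nabla^2\Phi(\zeta)]_{ii}$, and the inflation $1/\tilde q_{(t-1)i}\le K/\alpha$ is then paid against the curvature of the Gaussian-smoothed maximum, whose trace scales like $\tilde{\O}(1/\sigma)$. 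This is exactly the point where the sampling probability that inflates the estimate is matched by the probability that the corresponding coordinate is active, so the per-round contribution is $\tilde{\O}\big(\tfrac{K}{\alpha\sigma}\big)$ rather than $\tfrac{K}{\alpha}\varepsilon$, and summing over $t$ gives a $\tilde{\O}\big(\tfrac{KT}{\alpha\sigma}\big)=\tilde{\O}(\varepsilon T)$ term after substituting $\sigma=\tilde{\O}(K/(\alpha\varepsilon))$. Adding the $2\alpha T$ game-theoretic regret and plugging $\varepsilon=\Theta(\alpha^3/\tau)$ (resp.\ $\Theta((1-\gamma)\alpha^3)$) into the combined bound, the final (routine) step is to verify that every lower-order term falls below $\O(\alpha T)$ precisely at the stated threshold $T\ge\tilde{\O}(\tau\alpha^{-4})$ (resp.\ $T\ge\tilde{\O}(\tfrac{\alpha^{-4}}{1-\gamma})$), which completes the proof.
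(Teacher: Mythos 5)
There is a genuine gap, and it lies in the game-theoretic half. You import the full-information incentive analysis wholesale: deviation loss $\frac{\alpha^3}{2}$ from Lemma~\ref{lem:single-bidder-lying-loss}, hence $\varepsilon = \Theta(\alpha^3/\tau)$. But in the bandit setting the model is posted pricing --- the bidder submits no bid $b_t$ at all; he only accepts or rejects the single price posted that day. So the ``price falls between the bid and the value with probability $\alpha/K$'' argument does not carry over ``unchanged,'' and more importantly it is needlessly weak here. The paper's Lemma~\ref{lem:single-bidder-deviation-bandit} exploits the posted-price structure: a non-truthful accept/reject decision at posted price $p$ costs the bidder exactly $|v-p|$ \emph{deterministically}, so whenever $|v - p| \ge \alpha$ the bidder behaves truthfully provided $\alpha \ge 2\tau\varepsilon$ (resp.\ $\alpha \ge \frac{2\varepsilon}{1-\gamma}$). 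This permits the much weaker privacy requirement $\varepsilon = \Theta(\alpha/\tau)$, and the game-theoretic regret is then bounded by $\alpha T$ (Corollary~\ref{cor:single-bidder-gt-regret-bandit}) without any appeal to the exploration probability.

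This is not a cosmetic difference; it is exactly what makes the claimed threshold attainable. The bandit noise scale is pinned at $\sigma = \tilde{\O}\big(\frac{K}{\alpha\varepsilon}\big)$ because the importance-weighted estimate has a coordinate as large as $K/\alpha$, and the learning regret contains an additive $\tilde{\O}(\sigma)$ term. With your $\varepsilon = \Theta(\alpha^3/\tau)$ this gives $\sigma = \tilde{\O}(\tau\alpha^{-5})$, so $\tilde{\O}(\sigma) \le \O(\alpha T)$ only when $T \ge \tilde{\O}(\tau\alpha^{-6})$ --- two powers of $\alpha$ short of the theorem's $T \ge \tilde{\O}(\tau\alpha^{-4})$. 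With the paper's $\varepsilon = \Theta(\alpha/\tau)$ one gets $\sigma = \tilde{\O}(\tau\alpha^{-3})$, and the threshold $T \ge \tilde{\O}(\tau\alpha^{-4})$ follows. In short, the weaker privacy requirement afforded by the posted-price incentive argument is precisely what compensates for the $K/\alpha$ sensitivity blow-up of the bandit estimator; reusing the full-information incentive lemma forfeits this compensation and the stated bound cannot be recovered. (Your learning-regret treatment --- taking the expectation over $i_t$ before bounding the Bregman term so that the $1/\tilde{q}_{(t-1)i}$ inflation is paid against the sampling probability --- is sound and parallels the paper's argument, though the paper gets a per-round bound $\tilde{\O}(K/\sigma)$ via a likelihood-ratio estimate $q_{ti}|_i - q_{(t-1)i} \le 8\eta\gamma g_{ti} + \delta$ rather than your Hessian-trace bound $\tilde{\O}\big(\frac{K}{\alpha\sigma}\big)$; either suffices, since both totals are dominated by $\O(\alpha T)$ once $\varepsilon$ is set correctly.)
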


The full proof is shown in the Appendix.

\section{Multi-bidder Case}
\label{sec:multi-sketch}
We take the same online learning formulation as in the single-bidder case, treating each discretized price that is a multiple of $\alpha$ between $0$ and $1$ as an expert.
An expert $j$'s gain on any day is the revenue of Vickrey auction with reserve price $(j-1)\alpha$ w.r.t.\ the bids on that day.
Note that the gain now could be as large as $m$ since the seller has $m$ copies for sale.
we further normalize the gain by dividing it by $m$.
Given bids $\vec{b}_t$, suppose for any $j$ there are $m_j$ of the $b_{ti}$'s that are at least $(j-1)\alpha$, and the $(m+1)$-th highest bid is $(j' - 1)\alpha$, the gain vector $\vec{g}_t$ is then defined as:
\[
g_{tj} = \begin{cases}
(j-1) \alpha m_j & \text{if $m_j \le m$;}\\
(j' - 1) \alpha m& \text{otherwise.}
\end{cases}
\]
Then, $g_{tj}$ equals the revenue of running Vickrey with reserve price $(j-1)\alpha$ w.r.t.\ bids $\vec{b}_t$.

\begin{theorem}
	\label{thm:multi-bidder}
	For any $\alpha > 0$, our algorithm runs an approximate version of Vickrey with an anonymous reserve price on each day with regret $\le \alpha m T$ against the best fixed reserve price if:
	\begin{enumerate}[topsep=1mm, parsep=0mm, itemsep=0mm]
		\item $T \ge \tilde{O} \big( \frac{\tau n}{m \alpha^{4.5}} \big)$, and $m \ge \tilde{O}( \frac{\sqrt{\tau n}}{\alpha^3})$ given large market; or
		\item $T \ge \tilde{O} \big( \frac{n}{(1 - \gamma) m \alpha^{4.5}} \big)$, and $m \ge \tilde{O}( \frac{\sqrt{n}}{\sqrt{1 - \gamma} \alpha^3})$ given impatient bidders.
	\end{enumerate}
\end{theorem}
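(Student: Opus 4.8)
The plan is to follow the single-bidder template of Sections~\ref{subsection:normal_tree}--\ref{subsection:full-one-fold-regret}, but to split the regret three ways rather than two, since the algorithm only runs an \emph{approximate} Vickrey. Introducing the revenue of \emph{exact} Vickrey-with-reserve on the realized bids as an intermediate quantity, I would write $\OPT(\{\vec{v}_t\}_t) - \ALG$ as the sum of a \emph{game-theoretic regret} $\OPT(\{\vec{v}_t\}_t) - \OPT(\{\vec{b}_t\}_t)$, a \emph{learning regret} comparing $\OPT(\{\vec{b}_t\}_t)$ with the revenue the seller would earn by running exact Vickrey at its chosen reserves, and an \emph{approximation loss} comparing that idealized revenue with the actual $\ALG$ of the jointly private approximate Vickrey. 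The goal is to bound each term by $\O(\alpha m T)$.

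The learning regret and approximation loss are the more routine parts. After normalizing each per-round gain by $m$ (so that $g_{tj}/m \in [0,1]$, since exact Vickrey revenue is at most $m$), I would feed the gain vectors into the tree-aggregation FTPL of Section~\ref{subsection:normal_tree} and invoke Corollary~\ref{cor:single-bidder-learning-regret-sketch}, obtaining a normalized learning regret $\O\big(\sqrt{\log K}\,(\sigma\sqrt{\log T} + T/(\sigma\sqrt{\log T})) + \alpha T\big)$; multiplying back by $m$ recovers the true scale. (Because the Vickrey gain, with its $\min(m_j,m)$ truncation and dependence on the $(m{+}1)$-st highest bid, is not a clean price-weighted prefix sum of indicators, the two-fold trick of Section~\ref{subsection:single-autcion-2Dtree} no longer applies, so one falls back to the $\sqrt{K}$-noise version; this is the source of the $\alpha^{-4.5}$ rather than $\alpha^{-4}$ dependence.) For the approximation loss I would invoke \citet{hsu2016private}: selecting $\approx m$ candidate winners together with an approximate Vickrey price and then posting $\max\{\text{reserve}, \text{approx.\ price}\}$ to each candidate costs only an additive $\tilde{O}(\sqrt{n}\,\poly(1/\epsilon))$ revenue per round against exact Vickrey with the same reserve. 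Demanding that this per-round loss be at most $\alpha m$ is exactly what forces the lower bounds $m \ge \tilde{O}(\sqrt{\tau n}/\alpha^3)$ and $m \ge \tilde{O}(\sqrt{n}/(\sqrt{1-\gamma}\,\alpha^3))$.

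The crux is a multi-bidder analogue of the \emph{stability of future utility} (Lemma~\ref{lem:single-bidder-stability}), and here the single-bidder proof genuinely breaks. A bidder's day-$t$ bid now influences the future through \emph{two} channels: through the seller's aggregated reserves, as before, and through the \emph{same-day} allocation and payment of the \emph{other} bidders, which they observe and feed into their own subsequent bids. My plan is to control both channels at once. Fixing a bidder $i$ together with any subsequent strategy of his that ignores his own day-$t$ outcome, I would argue that everything $i$ observes after day $t$ is a post-processing (Lemma~\ref{lemma:post-processing}) of two objects, each private in $i$'s day-$t$ bid: (i) the internal tree-aggregation states after day $t$, private by Lemma~\ref{lem:tree-aggregation-privacy} since $i$'s bid perturbs $\vec{g}_t$ by bounded sensitivity; and (ii) the vector of the \emph{other} bidders' day-$t$ outcomes, private by the \emph{joint} differential privacy of the single-round approximate Vickrey of \citet{hsu2016private}. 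The information-structure assumptions---that a bidder sees neither the rounds he skips nor other bidders' outcomes---are precisely what guarantees there is no third leakage channel, so composing (i) and (ii) (Lemma~\ref{lemma:composition-theorem}) shows $i$'s future equilibrium utility changes by at most an $e^\epsilon$ factor plus $\delta T$. Given stability, the random-reserve argument of Lemma~\ref{lem:single-bidder-lying-loss} still makes any deviation $|b_{ti}-v_{ti}| > 2\alpha$ cost $\gtrsim \alpha^3$ in round $t$, which, balanced against a future gain of at most $\epsilon\tau$ (resp.\ $\epsilon/(1-\gamma)$), forces $b_{ti} \in [v_{ti}-2\alpha, v_{ti}+2\alpha]$ and bounds the game-theoretic regret by $\O(\alpha m T)$, exactly as in Corollary~\ref{cor:single-bidder-gt-regret}.

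Finally I would balance parameters as in Section~\ref{subsection:full-one-fold-regret}: setting $\epsilon = \Theta(\alpha^3/\tau)$ under large market (resp.\ $\epsilon = \Theta(\alpha^3(1-\gamma))$ under impatient bidders) neutralizes the game-theoretic gain, and with $\sigma = \tilde{\O}(\sqrt{K}/\epsilon)$ the residual learning terms are $\O(\alpha m T)$ precisely when $T \ge \tilde{O}(\tau n/(m\alpha^{4.5}))$ (resp.\ $T \ge \tilde{O}(n/((1-\gamma)m\alpha^{4.5}))$), the $n/m$ factor tracing to the $\sqrt{n}$-scale sensitivity of the single-round jointly private auction against the $1/m$ normalization. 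I expect the main obstacle to be making step~(ii) rigorous: verifying that, under strategic and history-dependent bidding, the joint privacy of \citet{hsu2016private} composes cleanly with the online-learning privacy, i.e.\ that fixing the other bidders' strategies together with the information structure really does render $i$'s entire post-$t$ view a post-processing of the two private objects, with no residual dependence on $i$'s day-$t$ bid.
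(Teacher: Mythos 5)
Your overall architecture matches the paper's: the same three-way decomposition (game-theoretic regret, FTPL learning regret with $\sqrt{K}$-scale tree-aggregation noise, and the PMatch approximation loss), and the same key stability argument in which the other bidders' post-day-$t$ strategies are viewed as fixed functions of their own observations, so that bidder $i$'s entire future view is a post-processing of the tree-aggregation internal states (Lemma~\ref{lem:tree-aggregation-privacy}) together with the jointly private day-$t$ outcomes of the others (claims 1--2 of Lemma~\ref{lem:pmatch}); this is exactly the paper's Lemma~\ref{lem:multi-bidder-stability}. However, there is a genuine quantitative gap in your game-theoretic step, and it is precisely where the theorem's $n/m$ factors originate. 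You claim that a deviation $|b_{ti}-v_{ti}|>2\alpha$ costs ``$\gtrsim \alpha^3$'' in round $t$, carrying over Lemma~\ref{lem:single-bidder-lying-loss} unchanged. But in Algorithm~\ref{alg:multi-full} the exploration step offers a random price only to a uniformly random subset $S$ of size $m$ out of $n$ bidders, so a fixed bidder is confronted with a random price with probability $\alpha\cdot\frac{m}{n}\cdot\frac{1}{K}$, and the per-round deviation cost is only $\Theta(\frac{\alpha^3 m}{n})$ (the paper's Lemma~\ref{lem:multi-bidder-lying-loss}). Consequently the privacy budget must satisfy $\epsilon = \Theta(\frac{m\alpha^3}{\tau n})$ (Lemma~\ref{lem:multi-bidder-deviation}), not $\epsilon = \Theta(\frac{\alpha^3}{\tau})$ as you set. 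With your choice of $\epsilon$, the stability bound allows a future gain of order $\tau\epsilon \approx \alpha^3$, which exceeds the round-$t$ loss $\frac{\alpha^3 m}{2n}$ whenever $m = o(n)$, so the contradiction argument behind Corollary~\ref{cor:multi-bidder-gt-regret} collapses and the game-theoretic regret is not bounded at all.

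Your final paragraph papers over this by attributing the $n/m$ factor to ``the $\sqrt{n}$-scale sensitivity of the single-round jointly private auction against the $1/m$ normalization,'' but no such sensitivity exists: after normalizing the gains by $m$ the per-round $\ell_\infty$ sensitivity is $O(1)$, the FTPL noise scale $\sigma = \tilde{\O}(\sqrt{K}/\epsilon)$ has no dependence on $n$, and PMatch's loss parameter is $E = \tilde{\O}(\frac{1}{\alpha^2\epsilon})$. The factor $n/m$ in both conditions of the theorem flows solely from the corrected, smaller $\epsilon$: it inflates $\sigma$ to $\tilde{\O}(\frac{\tau n}{m\alpha^{3.5}})$, giving $T \ge \tilde{\O}(\frac{\tau n}{m\alpha^{4.5}})$, and inflates $E$ so that $m \ge E/\alpha$ becomes $m \ge \tilde{\O}(\frac{\sqrt{\tau n}}{\alpha^3})$. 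A secondary point you gloss over: steps 7--9 of Algorithm~\ref{alg:multi-full} are \emph{not} truthful within a round, so the single-bidder claim that lying never helps in the current round is unavailable; the paper instead bounds the loss only for \emph{underbidding}, using the deferred-acceptance property of PMatch (Lemma~\ref{lem:pmatch-underbid}), and observes that one-sided control of deviations suffices for the regret comparison against the shifted price $p^*-2\alpha$. Your stability accounting (two private objects, $e^{\epsilon}$ factor) versus the paper's three (tree states, posted price, others' allocations, giving $e^{3\epsilon}$) affects only constants and is not a real issue.
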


%
\subsection{Algorithm}
With some small probability we randomly pick a subset of bidders and offer each of them a copy of the good with a random price to ensure lying is costly in the current round. 
We pick the reserve price on each day using follow-the-perturbed-leader implemented with tree-aggregation. 
Simply running Vickrey with the chosen reserve price does not guarantee stability of future utility, however, because a bidder's current bid can now affect other bidders' subsequent bids through the allocations and payments in the current round.
Instead, we use an algorithm of \citet{hsu2016private} to get a set $S$ and a price $p$ that are approximations of the set of top-$m$ bidders and Vickrey price.

\begin{algorithm}[H]
	\caption{~Online Pricing (Multi-bidder Case)}
	\begin{algorithmic}[1]\label{alg:multi-full}
		\STATE \textbf{input:} regret parameter $\alpha$, $K=\frac{1}{\alpha}+1$, privacy parameter $\epsilon$, $\delta=\frac{\epsilon}{T}$, $E = \tilde{\O} \big( \frac{1}{\alpha^2 \epsilon} \big)$.
		\STATE \textbf{initialize} tree-aggregation with noise scale $\sigma = \frac{8 \sqrt{K}\log T}{\varepsilon}  \sqrt{ \ln \frac{\log T}{\delta}}$.	
		\FOR {$t=1,\ldots,T$} 
		\STATE With probability $\alpha$, pick a subset $S \subseteq [n]$ of size $m$ and $j \in [K]$ uniformly at random. 
		\STATE Otherwise:
		\STATE\hspace{\algorithmicindent} Pick $j_1$ that maximizes $\tilde{\vec{G}}_{(t-1)j}$ from tree-aggregation.\label{step:multi-pick-price}
		\STATE\hspace{\algorithmicindent} Run PMatch($\alpha, \rho = \alpha, \epsilon$) \cite{hsu2016private} to get a set $S$ of $\le m - E$ bidders and a price $p = j_2 \alpha$. \label{step:pmatch}
		\STATE\hspace{\algorithmicindent} Let $j = \max \{ j_1 - 1, j_2 \}$.
		\STATE Offer a copy of good to each $i \in S$ at price $(j - 1)\alpha$.
		\STATE Observe bid vector $\vec{b}_t$; update tree-aggregation with the normalized gain vector $\frac{1}{m} \vec{g}_t$. \label{step:multi-update}
		\ENDFOR		
	\end{algorithmic}
\end{algorithm}

\subsection{Bounding Learning Regret}


We decompose the learning regret into four parts:
\begin{enumerate}
	\item The difference between the regret w.r.t.\ the best fixed discretized price and the actual regret;
	\item The regret due to picking a random price price with probability $\alpha$ (step 4);
	\item The regret w.r.t.\ the best fixed discretized price if we omit step 4, and replace steps 7-9 in Algorithm~\ref{alg:multi-full} with a Vickrey auction with reserve price $(j_1 - 1) \alpha$; and
	\item The difference in revenue between running steps 7-9 in Algorithm~\ref{alg:multi-full} and running a Vickrey auction with reserve price $(j_1 - 1) \alpha$.
\end{enumerate}
The first part is bounded by $\alpha m T$ since rounding down any fixed price the closest multiple of $\alpha$ loses at most $\alpha m$ in revenue per round.
The second part is also bounded by $\alpha m T$ since the maximum gain is $m$ per round.
It remains to bound the third part using the regret analysis of follow-the-perturbed-leader, and the fourth part of PMatch allocation.

\begin{lemma}
	The regret w.r.t.\ the best fixed discretized price if we omit step 4, and replace steps 7-9 in Algorithm~\ref{alg:multi-full} with a Vickrey auction with reserve price $(j_1 - 1) \alpha$ is at most 
	\[
	\textstyle
	\O \left( m\sqrt{\log K} \big( \sigma \sqrt{\log T} + \frac{T}{\sigma \sqrt{\log T}} \big) \right) ~.
	\]
\end{lemma}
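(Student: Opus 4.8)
The plan is to reduce this claim to the single-bidder learning-regret analysis (Lemma~\ref{lem:ftpl-regret}) by exploiting the normalization performed in step~\ref{step:multi-update}. Observe that once we omit step 4 and replace steps 7--9 by a plain Vickrey auction at reserve $(j_1-1)\alpha$, the procedure becomes precisely follow-the-perturbed-leader over the $K$ discretized prices viewed as experts: on each day it selects the price index $j_1$ maximizing the noisy cumulative gain $\tilde{\vec{G}}_{(t-1)j}$ returned by tree-aggregation (step~\ref{step:multi-pick-price}), and then records, as the true gain of expert $j$, the revenue $g_{tj}$ of Vickrey with reserve $(j-1)\alpha$ on that day's bids. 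Since the tree-aggregation in this algorithm is fed the \emph{normalized} gain vector $\tfrac1m \vec{g}_t$, the whole object is exactly an FTPL instance on normalized gains, run with the same noise scale $\sigma$ and the same (single-fold) tree-aggregation used in the single-bidder case.

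First I would verify the one boundedness fact the reduction needs. Each coordinate $g_{tj}$ is the Vickrey-with-reserve revenue on one day, hence lies in $[0,m]$ because at most $m$ copies are sold and every price is at most $1$; consequently every entry of the normalized vector $\tfrac1m\vec{g}_t$ lies in $[0,1]$. This is exactly the per-round boundedness condition under which Lemma~\ref{lem:ftpl-regret} (equivalently Theorem 8 of \citet{abernethy2014online}) was established. It is worth stressing that the proof of that lemma uses only this boundedness together with the fact, from Lemma~\ref{lem:tree-aggregation-noise}, that each coordinate of the tree-aggregation error $\tilde{\vec{G}}_t-\vec{G}_t$ is distributed as $N\!\big(0,(\log T+1)\sigma^2\big)$; it never invokes the special monotone structure that the single-bidder gain vectors happened to possess. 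Therefore the lack of monotonicity in the Vickrey-with-reserve gains is immaterial, and Lemma~\ref{lem:ftpl-regret} applies verbatim to the normalized instance.

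Applying Lemma~\ref{lem:ftpl-regret} to the normalized gains $\tfrac1m\vec{g}_t\in[0,1]^K$ then bounds the normalized learning regret against the best fixed discretized price by $\O\big(\sqrt{\log K}\,(\sigma\sqrt{\log T}+\tfrac{T}{\sigma\sqrt{\log T}})\big)$. Because the stated lemma concerns the actual (unnormalized) revenue regret, I would scale this bound back up by the factor $m$, which immediately yields the claimed $\O\big(m\sqrt{\log K}\,(\sigma\sqrt{\log T}+\tfrac{T}{\sigma\sqrt{\log T}})\big)$. There is no genuinely new obstacle here: the entire content is the observation that omitting the exploration step and replacing the approximate allocation with exact Vickrey turns the algorithm into the single-bidder FTPL run on an $m$-scaled gain sequence. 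The only point I would check explicitly is the bookkeeping above, namely that $g_{tj}\le m$ so that the normalized gains satisfy the hypotheses of Lemma~\ref{lem:ftpl-regret}, and that the global factor $m$ is the sole difference between the normalized and unnormalized regret.
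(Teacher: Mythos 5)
Your proposal is correct and follows essentially the same route as the paper: both arguments observe that the modified algorithm is exactly follow-the-perturbed-leader on the normalized gains $\tfrac1m\vec{g}_t$ with per-coordinate tree-aggregation noise $N\big(0,(\log T+1)\sigma^2\big)$, invoke Theorem 8 of \citet{abernethy2014online} to bound the normalized regret by $\O\big(\sqrt{\log K}\,(\sigma\sqrt{\log T}+\tfrac{T}{\sigma\sqrt{\log T}})\big)$, and multiply by $m$. Your additional bookkeeping (verifying $g_{tj}\in[0,m]$ and that the monotone structure of single-bidder gains is never used) is a useful elaboration of steps the paper leaves implicit, but it is the same proof.
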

\begin{proof}
	Note that the amount of noise we add to each coordinates of $\vec{G}_t$ follows $N(0, (\log T + 1) \sigma^2)$.
	By Theorem 8 of \citet{abernethy2014online}, we get that the regret in terms of the normalized gain is bounded by $\O \left( \sqrt{\log K} \big( \sigma \sqrt{\log T} + \frac{T}{\sigma \sqrt{\log T}} \big) \right)$.
	%
	%
	Multiplying it by $m$ proves the lemma.
	%
	%
\end{proof}
\begin{lemma}[\citet{hsu2016private}]
	\label{lem:pmatch}
	The set of bidders $S$ and the price $p$ satisfy:
	\begin{enumerate}[topsep=1mm, parsep=0mm, itemsep=0mm]
		\item $S$ is $(\epsilon, \delta)$-jointly differentially private;
		\item $p$ is $(\epsilon, \delta)$-differentially private;
		\item $m - 2E \le |S| \le m - E$;
		\item all bidders in $S$ have values at least $p - \alpha$;
		\item at most $E$ bidders outside $S$ have values at least $p$. 
	\end{enumerate}
\end{lemma}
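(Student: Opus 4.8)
The plan is to obtain the five properties from the construction of the \textsf{PMatch} algorithm of \citet{hsu2016private}, instantiated with the parameters $(\alpha, \rho=\alpha, \epsilon)$. Recall its high-level structure: the algorithm privately estimates the demand curve $j \mapsto |\{i : b_{ti} \ge (j-1)\alpha\}|$ over the $K = \tfrac{1}{\alpha}+1$ discretized price levels, selects a single anonymous price $p = j_2\alpha$ at which the (noisy) number of interested bidders is close to the supply $m$, and then forms $S$ by a per-bidder threshold rule comparing each reported value to $p$. The two privacy claims (1)--(2) will follow from differentially private demand estimation together with the billboard lemma, while the three accuracy claims (3)--(5) will follow from calibrating the noise so that the estimation error is at most $E = \tilde{\O}(\tfrac{1}{\alpha^2\epsilon})$ with high probability.

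For property (2), I would argue that $p$ is the output of a differentially private estimate of a market-clearing price over the discretized demand curve: a single bidder's change perturbs the demand estimate with bounded sensitivity, so Gaussian noise calibrated by Lemma~\ref{lemma:gaussian-mechanism} (composed across the $K$ price levels via Lemma~\ref{lemma:composition-theorem} and followed by post-processing, Lemma~\ref{lemma:post-processing}) renders the selected price $(\epsilon,\delta)$-differentially private. For property (1), the crucial observation is that $S$ is determined by the \emph{public} price $p$ together with each bidder's \emph{own} value through the threshold rule. This is exactly the setting of the billboard lemma: posting the private signal $p$ and letting each bidder's membership in $S$ be a function of $p$ and their private data preserves joint differential privacy, giving the $(\epsilon,\delta)$-joint privacy of $S$.

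For the accuracy properties (3)--(5), I would set the target count conservatively, aiming for roughly $m - \tfrac{3}{2}E$ bidders above $p$, so that with high probability the true number of bidders meeting the threshold lies in $[m-2E,\, m-E]$. This immediately yields the size bound (3), and it guarantees that the capacity $m$ is respected with slack $E$, so the per-bidder threshold selection needs no cross-bidder truncation---which is what keeps the billboard structure (and hence property (1)) intact. Property (4) is the statement that no bidder much below the price is admitted: since selection thresholds at $p$ up to the $\alpha$-granularity of the discretization, every admitted bidder has value at least $p - \alpha$. Property (5) bounds the number of high-value bidders wrongly excluded by the estimation error, which is at most $E$ by the concentration of the injected noise.

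The main obstacle I anticipate is reconciling the capacity constraint with the joint-privacy argument: a naive implementation that truncates the above-threshold set down to size $m$ would make each bidder's outcome depend on the other bidders' values, breaking the billboard structure. The fix is precisely to undershoot the supply by $\Theta(E)$ when choosing $p$, so that the unconditioned threshold rule already satisfies $|S| \le m - E$ with high probability; verifying that the noise scale required for this undershoot is consistent with both the $(\epsilon,\delta)$ privacy budget and the claimed error $E = \tilde{\O}(\tfrac{1}{\alpha^2\epsilon})$ is the delicate quantitative step, and is where I would lean most directly on the analysis of \citet{hsu2016private}.
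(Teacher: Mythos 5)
The first thing to note is that the paper does not prove this lemma at all: it is imported wholesale from \citet{hsu2016private}, as the attribution in the lemma header indicates, so there is no internal argument for your proposal to match or diverge from — it can only be measured against the cited work. Against that yardstick, your high-level reasoning is sound: joint differential privacy of $S$ does come from the billboard lemma (a DP-computed price signal plus a per-bidder selection rule depending only on that bidder's own report), privacy of $p$ does come from noise-calibrated counting with post-processing, and the accuracy claims (3)--(5) do come from concentration of the injected noise, with the deliberate undershooting of supply by $\Theta(E)$ being exactly what lets the per-bidder rule respect capacity without cross-bidder truncation. Your identification of that last point as the crux is correct.

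However, your reconstruction of the algorithm itself is structurally off. PMatch is not a one-shot estimation of the demand curve at all $K$ price levels followed by selection of a clearing price; it is an ascending-price (deferred-acceptance) process in which the price rises gradually, bidders drop out when the price exceeds their value, and a privately maintained counter tracks the number of surviving bidders, halting the ascent when the count falls to roughly the (undershot) supply. For the five properties stated in this lemma the distinction is immaterial — both architectures yield the billboard structure and the same accuracy calculus — but it is not immaterial for the paper as a whole: Lemma~\ref{lem:pmatch-underbid}, which the game-theoretic regret analysis relies on, holds precisely because of the deferred-acceptance, monotone-price structure (an underbidder can only exit earlier, never alter the price trajectory or the set seen by others), and it would be false, or at least unclear, for the one-shot mechanism you describe. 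Finally, your proposal ends by deferring the quantitative noise-calibration step to \citet{hsu2016private} anyway, so it is best read as an exegesis of the cited result rather than a self-contained proof; given that the paper itself treats the lemma as a black box, that is an acceptable resolution, but you should say explicitly that you are invoking rather than re-deriving their analysis.
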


\begin{lemma}
	\label{lem:multi-bidder-regret-sketch}
	For any $j^* \in [K]$, the revenue of running Vickrey with reserve $p^* = (j^* - 1) \alpha$ is no more than that of running steps 7-9 in Algorithm~\ref{alg:multi-full} with $j_1 = j^*$ plus $\O \big( E + \alpha m \big)$.
\end{lemma}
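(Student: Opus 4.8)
The plan is to compare, within a single round, the revenue of the idealized Vickrey auction with reserve $p^* = (j^*-1)\alpha$ against the revenue actually collected by the take-it-or-leave-it offers in steps 7--9, with everything measured against the submitted bids $\vec{b}_t$ (the learning regret is defined in terms of bids, so I read the ``values'' in Lemma~\ref{lem:pmatch} as these bids). First I would unwind the offered price: with $j_1 = j^*$ we have $j = \max\{j^*-1, j_2\}$, so every bidder in $S$ is offered $(j-1)\alpha = \max\{p^*, p\} - \alpha$, where $p = j_2\alpha$ is the approximate Vickrey price from PMatch. Thus a bidder in $S$ buys exactly when his bid is at least $\max\{p^*,p\}-\alpha$.

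Next I would collect the consequences of Lemma~\ref{lem:pmatch} that the argument needs. Writing $N_{\ge x}$ for the number of bids at least $x$, properties 3 and 5 give $N_{\ge p} \le |S| + E \le m$, so the $(m+1)$-st highest bid satisfies $b_{(m+1)} \le p$; property 4 says every bidder in $S$ has bid at least $p-\alpha$ and hence accepts any offer at most $p-\alpha$; and property 3 gives $|S| \ge m - 2E$.

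With these in hand I would split into two cases according to which term attains $\max\{p^*,p\}$. \emph{Case $p^* \ge p$:} the offer is $p^*-\alpha$, and since $N_{\ge p^*} \le N_{\ge p} \le m$ the Vickrey auction is reserve-constrained and collects $m_{j^*}\,p^*$ with $m_{j^*} = N_{\ge p^*} \le m$. By property 5 at most $E$ of these high bidders lie outside $S$, so at least $m_{j^*}-E$ of them are in $S$ and accept, giving collected revenue at least $(m_{j^*}-E)(p^*-\alpha) \ge m_{j^*}p^* - \alpha m - E$, within $\O(E+\alpha m)$ of Vickrey. \emph{Case $p^* < p$:} the offer is $p-\alpha$, which all $|S| \ge m-2E$ bidders in $S$ accept by property 4, yielding revenue at least $(p-\alpha)(m-2E) \ge pm - 2E - \alpha m$; meanwhile the Vickrey revenue is at most $mp$, because if more than $m$ bids clear the reserve it equals $m\,b_{(m+1)} \le mp$, and otherwise it equals $m_{j^*}p^* \le m p^* < mp$. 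Either way the gap is $\O(E+\alpha m)$.

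The main obstacle is the accounting in the reserve-constrained case $p^* \ge p$: I must lower-bound how many above-reserve bidders survive inside $S$, and this is precisely where the approximate top-$m$ guarantee of PMatch enters — the $\O(E)$ term absorbs the bidders that joint privacy may misclassify into or out of $S$, while the $-\alpha$ shift in the offered price contributes the $\O(\alpha m)$ term. Everything else is elementary arithmetic once the two cases are set up, and a minor point worth verifying is that the count $m_{j^*}-E$ is automatically at most $|S|$ (being the size of a subset of $S$), so neither case is vacuous.
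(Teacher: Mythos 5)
Your proposal is correct and follows essentially the same route as the paper's own proof: both identify the offered price as $\max\{p^*,p\}-\alpha$, use claims 3 and 5 of Lemma~\ref{lem:pmatch} to show the $(m+1)$-st highest bid is at most $p$, and then split into the same two cases, invoking claim 5 when $p^*\ge p$ and claims 3--4 when $p^*<p$. The only difference is presentational: the paper bounds the per-sale price and the number of sales separately and multiplies, while you carry out the revenue arithmetic directly within each case.
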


\begin{proof}
	Suppose $p' = j' \alpha$ is the $(m+1)$-th highest bid.
	The winners in Vickrey pays $\max \{ p^*, p' \}$.
	With $j_1 = j^*$, $(j_1 - 2) \alpha = p^* - \alpha$.
	Claims 3 and 5 of Lemma~\ref{lem:pmatch} imply $p \ge p'$ and, thus, $(j_2 - 1) \alpha \ge p' - \alpha$.
	Hence, the price offered in step 9 is at least $\max \{ p^*, p' \} - \alpha$.
	
	It remains to show the number of sales by steps 7-9 is less than that of Vickrey by at most $\O(E)$.
	If $(j_1 - 2) \alpha$ is offered in step 9, then the number of sales by the algorithm is at least that of Vickrey with reserver $p^*$ minus $E$ due to claim 5 of Lemma~\ref{lem:pmatch}.
	If $(j_2 - 1) \alpha = p - \alpha$ is offered in step 9, then the number of sales is at least $m - 2E$ due to claim 3 and 4 of Lemma~\ref{lem:pmatch}.
	Hence, it is less than the number of sales of Vickrey by at most $2E$.
	In both cases, the lemma follows.
\end{proof}

Putting together we have the following bound on the learning regret.

\begin{lemma}
	\label{lem:multi-bidder-learning-regret}
	The learning regret of Algorithm~\ref{alg:multi-full} is at most
	\[
	\textstyle
	\O \left( m\sqrt{\log K} \big( \sigma \sqrt{\log T} + \frac{T}{\sigma \sqrt{\log T}} \big) + \alpha m T + E T \right) ~.
	\]
\end{lemma}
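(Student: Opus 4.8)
The plan is to assemble the final learning-regret bound of Lemma~\ref{lem:multi-bidder-learning-regret} directly from the four-part decomposition already laid out in the text, so the proof is really a matter of bookkeeping rather than new technical work. First I would restate the decomposition: the learning regret $\OPT(\{\vec{b}_t\}_t) - \ALG$ is at most the sum of (i) the gap between the regret against the best discretized price and the actual regret, (ii) the regret incurred by the random-price branch taken with probability $\alpha$, (iii) the follow-the-perturbed-leader regret of the idealized algorithm that omits step~4 and replaces steps 7--9 by an exact Vickrey auction with reserve $(j_1 - 1)\alpha$, and (iv) the per-round revenue loss from running the PMatch-based approximate Vickrey instead of the exact one.

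Then I would plug in the four bounds one at a time. Part~(i) contributes $\alpha m T$ because rounding any fixed price down to the nearest multiple of $\alpha$ costs at most $\alpha$ per copy and there are $m$ copies per round. Part~(ii) also contributes at most $\alpha m T$, since the random branch is invoked in an $\alpha$-fraction of the rounds and the maximum gain in any round is $m$. Part~(iii) is exactly the quantity controlled by the preceding lemma, namely $\O\big( m\sqrt{\log K}\,(\sigma\sqrt{\log T} + \frac{T}{\sigma\sqrt{\log T}}) \big)$. For part~(iv), Lemma~\ref{lem:multi-bidder-regret-sketch} shows that in each round the approximate Vickrey loses at most $\O(E + \alpha m)$ in revenue relative to exact Vickrey with the same reserve; summing over $T$ rounds gives $\O(ET + \alpha m T)$.

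Finally I would add the four contributions and absorb the lower-order $\alpha m T$ terms into the stated bound. The sum is $\O\big( m\sqrt{\log K}\,(\sigma\sqrt{\log T} + \frac{T}{\sigma\sqrt{\log T}}) + \alpha m T + E T \big)$, which is precisely the claimed expression; the three $\alpha m T$ terms from parts~(i), (ii), and~(iv) collapse into the single $\alpha m T$ term by the constant in the big-$\O$. I do not expect any genuine obstacle here, since every ingredient is already established: the only point requiring a little care is making sure the benchmark in part~(iii) is the best fixed discretized reserve price (so that part~(i) correctly accounts for the discretization gap against the true best reserve), and that the application of Lemma~\ref{lem:multi-bidder-regret-sketch} is valid for every candidate reserve $j^* \in [K]$ simultaneously — in particular for the reserve achieving the benchmark revenue $\OPT(\{\vec{b}_t\}_t)$ — rather than for a single fixed $j^*$. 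Both are immediate from how the decomposition is set up.
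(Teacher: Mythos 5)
Your proof is correct and matches the paper's own (implicit) argument exactly: the paper bounds the same four parts — discretization ($\alpha m T$), the probability-$\alpha$ random rounds ($\alpha m T$), the follow-the-perturbed-leader regret of the idealized exact-Vickrey algorithm, and the per-round $\O(E+\alpha m)$ loss of the PMatch-based approximation from Lemma~\ref{lem:multi-bidder-regret-sketch} — and simply sums them. One small note: Lemma~\ref{lem:multi-bidder-regret-sketch} is invoked in each round with $j^*$ equal to the reserve index $j_1$ that the algorithm actually selected in that round (which is precisely why the lemma must hold for all $j^* \in [K]$), not for the benchmark-achieving reserve as your closing remark suggests; since the lemma is uniform over $j^*$, this changes nothing.
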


\subsection{Bounding Game-theoretic Regret}
We first establish the stability of future utility.

\begin{lemma}[Stability of Future Utility, Multi-bidder]
	\label{lem:multi-bidder-stability}
	For any bidder and any day $t$ on which he comes, the bidder's equilibria utilities in subsequent rounds in the subgames induced by different bids on day $t$ differ by at most an $e^{3\epsilon}$ multiplicative factor plus a $3\delta T$ additive factor.
\end{lemma}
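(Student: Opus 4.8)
The plan is to mirror the single-bidder stability argument (Lemma~\ref{lem:single-bidder-stability}): isolate a small ``interface'' of day-$t$ quantities through which bidder $t$'s day-$t$ bid can possibly influence his future utility, and show that this interface is differentially private in his day-$t$ bid. The new feature compared with the single-bidder case is that there are now two routes by which $b_t$ leaks into the future: the familiar one through the reserve-price learner (the tree-aggregation internal states, updated with the round-$t$ gain vector), and a second one through the round-$t$ allocation and payment, which change the \emph{other} bidders' own round-$t$ outcomes and hence their bids on all subsequent days in which they reappear. Fixing the (randomized) continuation strategy of every bidder as a function only of that bidder's own bids and outcomes strictly after day $t$ --- a legitimate object in both subgames because, by the information structure, no bidder ever observes another bidder's outcome nor anything about a round he does not participate in --- reduces the lemma to controlling this interface.

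First I would define the interface tuple
\[
W = \big( \text{tree-aggregation internal states after day } t, \ S_{-t}, \ p \big),
\]
where $S_{-t}$ is the set of \emph{other} bidders served in round $t$ and $p$ is the price returned by PMatch. I claim each coordinate is $(\epsilon,\delta)$-private in $b_t$: the tree-aggregation states are $(\epsilon,\delta)$-differentially private in the round-$t$ gain vector by Lemma~\ref{lem:tree-aggregation-privacy}, hence (by post-processing, Lemma~\ref{lemma:post-processing}, since the gain vector is a function of the bids) in $b_t$; $S_{-t}$ is $(\epsilon,\delta)$-private in $b_t$ because $S$ is jointly differentially private (claim~1 of Lemma~\ref{lem:pmatch}), which is \emph{precisely} the statement that the coordinates other than $t$ are differentially private in $t$'s input; and $p$ is $(\epsilon,\delta)$-differentially private (claim~2 of Lemma~\ref{lem:pmatch}). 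Since these subroutines draw independent randomness and are all functions of the round-$t$ bid vector with only $t$'s entry changing, the composition theorem (Lemma~\ref{lemma:composition-theorem}) makes the whole tuple $W$ satisfy a $(3\epsilon, 3\delta)$ guarantee in $b_t$.

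Next I would argue that, for any fixed profile of continuation strategies, bidder $t$'s discounted future utility is a post-processing of $W$. The key point is that the day-$t$ reserve index $j_1$ is computed from $\tilde{\vec{G}}_{t-1}$ and so does not depend on $b_t$; thus every other bidder's round-$t$ outcome is determined by $S_{-t}$, the fixed $j_1$, and $p$, i.e.\ by $W$ together with data independent of $b_t$. Given $W$ and the fixed strategies, the remaining execution --- other bidders' subsequent bids, all future reserve indices computed from the post-$t$ tree-aggregation states, the future allocations, and therefore $t$'s own future utility --- unfolds without any further reference to $b_t$, so it is a randomized function of $W$. Applying post-processing (Lemma~\ref{lemma:post-processing}) to the $(3\epsilon,3\delta)$ guarantee, and bounding future utility by $T$ to convert the additive $3\delta$ into $3\delta T$, shows that for every fixed continuation strategy the expected future utility changes by at most an $e^{3\epsilon}$ factor plus $3\delta T$ between the two subgames. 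The equilibrium best-response argument of Lemma~\ref{lem:single-bidder-stability} then upgrades this from a fixed-strategy statement to the equilibrium utilities and closes the proof.

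I expect the main obstacle to be the post-processing claim itself, namely verifying that \emph{all} of a competitor's day-$t$ information is captured by $W$. This is exactly where the modeling assumptions do the heavy lifting: a competitor sees only his own win/payment on day $t$ (a function of $S_{-t}$, $p$, $j_1$, and his own bid, none of which leaks $b_t$ beyond the private tuple), never the identities, bids, or outcomes of others, and never observes the rounds he skips. One must also be careful, exactly as in the single-bidder proof, to keep bidder $t$'s \emph{own} day-$t$ outcome out of the fixed continuation strategy so that the strategy is a single well-defined object across both subgames, and to note that combining the jointly private $S_{-t}$ with the ordinarily private $p$ and tree-aggregation states is legitimate because joint privacy of $S$ is literally ordinary privacy of $S_{-t}$ viewed as an output.
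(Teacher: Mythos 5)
Your proposal is correct and takes essentially the same approach as the paper's proof: the paper likewise treats the post-day-$t$ execution as a post-processing of the tuple (tree-aggregation internal states after day $t$, the day-$t$ price, the allocations to bidders other than $t$), obtains $(\epsilon,\delta)$-privacy of each part from Lemma~\ref{lem:tree-aggregation-privacy} and claims 1--2 of Lemma~\ref{lem:pmatch}, and closes with composition and the equilibrium best-response step exactly as you do. The only slip is your opening sentence fixing \emph{every} bidder's continuation strategy on observations strictly after day $t$: the other bidders' strategies must be allowed to depend on their day-$t$ observations (the paper fixes them as functions of observations from day $t$ \emph{inclusive}), since that dependence is precisely what your interface $W$ exists to absorb --- your own third paragraph already argues this way, so only the setup sentence needs correcting.
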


\begin{proof}
	We shall abuse notation and refer to the bidder of concern as bidder $t$ even though there are other $n-1$ bidders who also come on day $t$.
	Fixed any bidder $t$'s strategy for subsequent days (after day $t$).
	That is, fixed the (randomized) bidding function on any subsequent day $t'$ as a function only on his bids and auction outcomes between day $t$ and $t'$ (exclusive).
	We shall interpret any other bidder's strategy on any subsequent day $t'$ as a (randomized) function that depends on his bids and auction outcomes between day $t$ (inclusive) and $t'$ (exclusive).
	By allowing the other bidders' strategies to depend on what they observe on day $t$, we can treat them as fixed regardless of bidder $t$'s bid on day $t$, which cannot be observed by the other bidders.
	
	Let us consider the resulting utilities for bidder $t$ in the subgames induced by two distinct bids on day $t$ given any fixed subsequent strategy of bidder $t$.
	%
	The execution of the online pricing algorithm, i.e., the algorithm together with the bidders' strategies in subsequent rounds, after round $t$ is a post-processing on (1) the internal states of the tree-aggregation algorithm after day $t$, (2) the price posted on day $t$, and (3) the allocations to bidders other than $t$.
	Each of these parts is $(\epsilon, \delta)$-differentially private w.r.t.\ bidder $t$'s bid on day $t$ due to Lemma~\ref{lem:tree-aggregation-privacy} and claim 1 and 2 of Lemma~\ref{lem:pmatch}.
	Therefore, the utilities of any fixed subsequent strategy of bidder $t$ in the two subgames differ by at most an $e^{3\epsilon}$ multiplicative factor plus a $3 \delta T$ additive factor.
	The lemma then follows by the equilibria condition that bidder $t$ employs the best subsequent strategy in any subgame.
\end{proof}

Next, we will lower bound of cost of lying. 
Recall that our single-bidder algorithm always uses truthful single-round auctions.
In contrast, steps 7-9 of Algorithm~\ref{alg:multi-full} is not truthful in general and, thus, a bidder may be able to gain even in the current round by lying.
We resort to a weaker argument that lower bound the loss only for underbidding, which is sufficient for our regret analysis.
We will use the next lemma that follows by the deferred-acceptance nature of PMatch.

\begin{lemma}[\citet{hsu2016private}, implicit]
	\label{lem:pmatch-underbid}
	Fixed any random bits of PMatch, a bidder cannot change the set $S$ of bidders and price $p$ by underbidding without excluding himself from $S$.
\end{lemma}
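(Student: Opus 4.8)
The plan is to exploit that \textbf{PMatch} is a deferred-acceptance procedure: with its random bits fixed, it runs an ascending-price process that maintains a set of \emph{active} bidders, at each step raises a weakly increasing threshold price (as a deterministic function of the current active counts and the fixed noise), and rejects any active bidder whose bid has fallen below this price, terminating with the surviving set $S$ and the final price $p$. Since the claim concerns only the map from the bid profile to $(S,p)$ for \emph{fixed} random bits, I would fix an arbitrary realization of PMatch's coins and a bidder $i$, and compare two executions: run $A$ on the original profile with bid $b_i$, and run $B$ on the profile with $b_i$ replaced by an arbitrary underbid $b_i' < b_i$, all other bids unchanged.

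The core of the argument is a coupling of $A$ and $B$ via the invariant, proved by induction on the steps of the ascending process, that \emph{as long as bidder $i$ has not yet been rejected in run $B$, the active set --- and hence the current threshold price --- is identical in $A$ and $B$}. The inductive step rests on two observations. First, because $b_i > b_i'$, bidder $i$ is active in $A$ whenever he is active in $B$. Second, the rejection of any \emph{other} bidder $j$ at a given step depends on bidder $i$ only through the event ``$b_i \ge$ current price,'' i.e.\ whether $i$ is still demanding; since this event agrees in $A$ and $B$ up until $i$ drops out in $B$, no other bidder's fate, nor any threshold update, can differ between the runs before that moment. Given the invariant I split into two cases. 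If $i$ is never rejected in $B$, the runs are identical throughout, so $S' = S$, $p' = p$, and $i \in S$ --- underbidding changed nothing. If $i$ is rejected at some step in $B$, then $i \notin S'$, i.e.\ underbidding excluded $i$ from $S$. The contrapositive is exactly the lemma: if underbidding leaves $i$ in $S$, then both $S$ and $p$ are unchanged.

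I expect the main obstacle to be certifying the second observation for PMatch specifically, namely that while $i$ remains active his precise bid value influences neither the (noisy) threshold update nor the rejection of any other bidder. For a pure ascending clock auction this is immediate, since only the count of active bidders and the exogenous/noisy price path matter; but PMatch interleaves a differential-privacy mechanism with the rejection rule, so I would open up the deferred-acceptance description of \citet{hsu2016private} and verify that every threshold update and every rejection depends on an active bidder $i$ only through the binary event ``$b_i \ge$ current price.'' Once this insensitivity is in hand, the coupling invariant and the two-case conclusion above are routine.
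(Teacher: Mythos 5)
Your proposal is correct and takes essentially the same route as the paper: the paper offers no proof of this lemma at all, citing it as implicit in \citet{hsu2016private} and justifying it in one line by ``the deferred-acceptance nature of PMatch,'' and your coupling invariant (identical active sets and threshold path in the two runs until the underbidder is rejected, followed by the two-case split) is exactly the rigorous formalization of that one-line argument. The obstacle you flag --- checking that PMatch's noisy threshold updates and rejections depend on an active bidder only through the binary event ``bid $\ge$ current price'' --- is indeed the only substantive thing left to verify against the algorithm of \citet{hsu2016private}, and it does hold in this single-good-type specialization, since with the random bits fixed the price path is a deterministic function of the active-bidder counts.
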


\begin{lemma}
	\label{lem:multi-bidder-lying-loss}
	Consider any day $t$ and the bidder $i$'s utility on that day.
	The utility of bidding at some $b_{ti}$ such that $|b_{ti} - v_{ti}| > 2 \alpha$ is worse than that of truthful bidding by at least $\frac{\alpha^3m}{2n}$. 
\end{lemma}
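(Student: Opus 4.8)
The plan is to bound bidder $i$'s day-$t$ loss by isolating the contribution of the random exploration branch, which behaves exactly like a truthful posted price, and then showing that the learning branch (steps 6--9 of Algorithm~\ref{alg:multi-full}) never rewards a deviation. Throughout I fix all internal randomness of the algorithm together with the day-$t$ bids $\vec{b}_{t,-i}$ of the other bidders; these may legitimately be held fixed because the day-$t$ reserve $j_1$ is computed from bids strictly before day $t$, and because no other bidder observes $b_{ti}$ on day $t$, so their day-$t$ bids do not react to it. It then suffices to compare bidder $i$'s expected day-$t$ surplus under bid $b_{ti}$ versus the truthful bid $v_{ti}$, conditioning on whether day $t$ is an exploration round (probability $\alpha$) or a learning round (probability $1-\alpha$).

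First I would handle the exploration branch, which alone yields the whole bound and works in both directions. Conditioned on exploration, bidder $i$ lies in the random set $S$ with probability $\frac{m}{n}$ and, independently, is offered a uniformly random grid price $(j-1)\alpha$, which he wins iff his bid is at least that price; this is precisely a uniformly random take-it-or-leave-it offer, for which truthful reporting is optimal. I then exhibit one distinguishing price in the gap between $b_{ti}$ and $v_{ti}$. If $b_{ti} < v_{ti} - 2\alpha$, the smallest grid price above $b_{ti}$ is at most $b_{ti}+\alpha < v_{ti}-\alpha$, so truthful reporting captures a sale of surplus $>\alpha$ that the underbid forgoes. If $b_{ti} > v_{ti}+2\alpha$, taking the offered grid price to be $b_{ti}$ itself (bids lie on the $\alpha$-grid, as in the single-bidder treatment) forces the overbid into a purchase at surplus $v_{ti}-b_{ti} < -2\alpha$, whereas truthful reporting declines at surplus $0$, again a loss exceeding $\alpha$. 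Either distinguishing price is offered to $i$ with probability $\alpha \cdot \frac{m}{n} \cdot \frac{1}{K}$, and since $K = \frac1\alpha + 1 \le \frac2\alpha$, multiplying the per-event loss $>\alpha$ by this probability gives an exploration-branch loss of at least $\frac{\alpha^2 m}{nK} \ge \frac{\alpha^3 m}{2n}$.

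It remains to show that the learning branch contributes a nonnegative loss, i.e.\ that neither deviation raises $i$'s expected learning-round surplus; recall $i$ wins iff $i \in S$ and $b_{ti} \ge (j-1)\alpha$ with $j = \max\{j_1 - 1, j_2\}$, paying $(j-1)\alpha$. For underbidding this follows from Lemma~\ref{lem:pmatch-underbid}: fixing PMatch's coins, underbidding either leaves $(S,p)$ and hence $i$'s offer unchanged, in which case a lower bid can only turn a would-be accepted profitable offer into a decline, or it removes $i$ from $S$, in which case $i$ wins nothing; both weakly lower $i$'s surplus. For overbidding I would establish the symmetric companion to Lemma~\ref{lem:pmatch-underbid} from the deferred-acceptance structure of PMatch, namely that raising $b_{ti}$ can only (weakly) increase the clearing price $p$ and the offer $(j-1)\alpha$ while (weakly) including $i$ in $S$, so that any copy $i$ wins solely through overbidding is offered at a price no lower than the price a truthful bid would face. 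Combined with claim~5 of Lemma~\ref{lem:pmatch} (essentially every bidder excluded from $S$ has value below $p$), this forces $v_{ti} - (j-1)\alpha \le 0$ on every such new win, so overbidding creates no positive surplus beyond truthful.

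Adding the two contributions, the total day-$t$ loss of any $b_{ti}$ with $|b_{ti}-v_{ti}|>2\alpha$ is at least $\frac{\alpha^3 m}{2n}$. The main obstacle is the overbidding case in the learning branch: unlike the single-bidder algorithm, steps 7--9 are not exactly truthful, so ruling out an overbidding gain genuinely requires extracting a monotonicity/critical-price guarantee from PMatch analogous to Lemma~\ref{lem:pmatch-underbid}. The delicate point is that PMatch is only an \emph{approximate} Vickrey, carrying additive slack $E$ and granularity $\alpha$ in Lemma~\ref{lem:pmatch}; I must therefore certify $v_{ti} - (j-1)\alpha \le 0$ exactly on overbidding-only wins rather than up to an $O(\alpha)$ error, since such an error could swamp the small exploration margin $\frac{\alpha^3 m}{2n}$ when $n \gg m$. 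Handling the $\le E$ exceptional bidders of claim~5 without leaking a nonnegligible surplus is where the argument is most fragile.
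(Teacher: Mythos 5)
Your underbidding analysis is exactly the paper's proof: condition on the exploration branch (probability $\alpha$), inclusion in the random set $S$ (probability $m/n$), and a distinguishing grid price (probability $1/K$), pay an $\alpha$ utility gap per event to get $\alpha\cdot\frac{\alpha m}{Kn} > \frac{\alpha^3 m}{2n}$, and invoke Lemma~\ref{lem:pmatch-underbid} to certify that the learning branch (steps 7--9 of Algorithm~\ref{alg:multi-full}) never rewards underbidding. Note, however, that the paper deliberately proves \emph{only} this direction: the text immediately preceding the lemma says it resorts ``to a weaker argument that lower bound[s] the loss only for underbidding, which is sufficient for our regret analysis,'' and indeed Corollary~\ref{cor:multi-bidder-gt-regret} only uses the one-sided guarantee $b_{ti} \ge v_{ti} - 2\alpha$ (the benchmark price is shifted down by $2\alpha$; overbids can only help the algorithm's revenue).

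The gap in your proposal is the overbidding half of the learning branch, and it is precisely the point you flagged as fragile: the symmetric companion to Lemma~\ref{lem:pmatch-underbid} that you would need does not hold in the exact form required. Claims 3--5 of Lemma~\ref{lem:pmatch} permit up to $E$ bidders with bids at least $p$ to be excluded from $S$; if the truthful bid $v_{ti}$ makes bidder $i$ one of these exceptions while an overbid lands him in $S$, he wins a copy at price $\max\{(j_1-1)\alpha - \alpha,\ (j_2-1)\alpha\}$, which can sit strictly (even by a constant) below $v_{ti}$. That is a strict learning-branch gain from overbidding, occurring with probability up to $1-\alpha$, which cannot be absorbed by your exploration-branch penalty of order $\frac{\alpha^3 m}{2n}$ --- exactly the swamping you worried about when $n \gg m$. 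So the two-sided statement cannot be established by your route (nor is it established by the paper's own proof, which silently restricts to $b_{ti} < v_{ti}$); the correct repair is not to fix the overbidding argument but to weaken the lemma to the underbidding direction, which is all that Lemma~\ref{lem:multi-bidder-deviation} and the game-theoretic regret bound actually consume.
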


\begin{proof}
	With probability $\alpha$, a bidder $i$ would be chosen into the candidate allocation group $S$ with probability at least $\frac{\alpha m}{n}$ and be offered a randomly chosen price. 
	Suppose $b_{ti} < v_{ti}$, the utility of bid $b_{ti}$ is zero while that of truthful bidding is at least $\alpha$.
	Further, truthful bidding is never worse than under bidding $b_{ti}$ in the current round due to Lemma~\ref{lem:pmatch-underbid}.
	So the total utility loss on day $t$ of bidding $b_{ti}$ is at least $\alpha \cdot \frac{\alpha m}{Kn} > \frac{\alpha^3 m}{2n}$.
\end{proof}

Then, we can control the level of underbidding with the next lemma.

\begin{lemma} 
	\label{lem:multi-bidder-deviation}
	On any day $t$ for any bidder $i$, we have $b_{ti} \in [v_{ti} - 2\alpha, v_{ti} + 2\alpha]$ for
	\begin{itemize}[topsep=1mm, parsep=0mm, itemsep=0mm]
		\item $\alpha = (\frac{12 \tau \epsilon n}{m})^{1/3}$ under the assumption of large market; or
		\item $\alpha = (\frac{12 \epsilon n}{(1-\gamma)m})^{1/3}$ under the assumption of impatient bidders. 
	\end{itemize}
\end{lemma}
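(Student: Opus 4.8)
The plan is to run the same contradiction argument as in Lemma~\ref{lem:single-bidder-deviation}, replacing the single-bidder loss and stability statements by their multi-bidder analogues. Fix a day $t$ and a bidder $i$, and suppose toward a contradiction that $|b_{ti} - v_{ti}| > 2\alpha$ at equilibrium. By Lemma~\ref{lem:multi-bidder-lying-loss}, such a deviation costs bidder $i$ at least $\frac{\alpha^3 m}{2n}$ in the current round relative to truthful bidding. By Lemma~\ref{lem:multi-bidder-stability}, for any fixed continuation strategy the bidder's total discounted utility over the subsequent rounds can exceed that of bidding truthfully on day $t$ by at most a multiplicative factor $e^{3\epsilon}$ plus an additive $3\delta T$. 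I would then choose $\alpha$ so that the certain current-round loss strictly outweighs the largest possible future gain, contradicting that $b_{ti}$ is an equilibrium bid.

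For the large-market case I would bound the attainable continuation utility. Since each bidder wins at most one copy per round, has per-round utility at most $1$, and participates in at most $\tau$ rounds, the continuation utility is at most $\tau - 1$. Using $e^{3\epsilon} \le 1 + 6\epsilon$ (valid for the small $\epsilon$ we use) and $\delta T = \epsilon$, the extra future utility from deviating is at most $6\epsilon(\tau-1) + 3\epsilon < 6\epsilon\tau$. Setting $\alpha = (\frac{12\tau\epsilon n}{m})^{1/3}$ makes the current-round loss equal to $\frac{\alpha^3 m}{2n} = 6\tau\epsilon$, which strictly exceeds the future gain and yields the contradiction. The impatient-bidder case is identical after replacing the continuation bound $\tau - 1$ by the discounted sum $\gamma + \gamma^2 + \cdots = \frac{1}{1-\gamma} - 1$, which gives the choice $\alpha = (\frac{12\epsilon n}{(1-\gamma)m})^{1/3}$ and the matching loss $\frac{\alpha^3 m}{2n} = \frac{6\epsilon}{1-\gamma}$, again dominating the future gain $\frac{6\epsilon}{1-\gamma} - 3\epsilon$.

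The step I expect to be genuinely delicate is the overbidding direction $b_{ti} > v_{ti} + 2\alpha$, where invoking Lemma~\ref{lem:multi-bidder-lying-loss} is the crux. Unlike the single-bidder algorithm, whose single-round mechanism is a truthful posted price, steps 7--9 of Algorithm~\ref{alg:multi-full} run an approximate Vickrey auction that is not truthful in general, so in the non-random branch overbidding could conceivably help the bidder within the round; this is precisely why the clean deferred-acceptance guarantee (Lemma~\ref{lem:pmatch-underbid}) is stated only for underbidding. The route I would take is to extract the current-round loss from the random-price branch (step~4): with probability $\alpha$ bidder $i$ is placed in a uniformly random winner set, is in it with probability $\frac{m}{n}$, and is offered a uniformly random price; whenever that price falls in $(v_{ti}, b_{ti}]$ at distance at least $\alpha$ above $v_{ti}$, the bidder is forced to buy above value and loses at least $\alpha$. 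This event has probability at least $\frac{\alpha m}{nK}$, so the expected loss is at least $\frac{\alpha^2 m}{nK} \ge \frac{\alpha^3 m}{2n}$ (using $\frac{1}{K} \ge \frac{\alpha}{2}$), matching the bound Lemma~\ref{lem:multi-bidder-lying-loss} supplies for underbidding. The remaining care, which is the one nontrivial point, is to ensure that any current-round advantage overbidding might secure in the non-random branch cannot erase this loss; this is exactly the content I would import from Lemma~\ref{lem:multi-bidder-lying-loss}, after which the rest is a routine repetition of the single-bidder balancing computation.
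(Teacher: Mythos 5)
Your main argument coincides with the paper's proof of this lemma, including the constants: assume $|b_{ti} - v_{ti}| > 2\alpha$ at equilibrium, take the current-round loss of at least $\frac{\alpha^3 m}{2n}$ from Lemma~\ref{lem:multi-bidder-lying-loss}, bound the future gain via Lemma~\ref{lem:multi-bidder-stability} and $\delta = \epsilon/T$ by $(e^{3\epsilon}-1)(\tau-1)+3\delta T < 6\tau\epsilon$ (respectively with $\tau-1$ replaced by $\frac{1}{1-\gamma}-1$ for impatient bidders), and check that the stated choices of $\alpha$ make the loss strictly dominate the gain. That part is correct and is exactly the paper's route.

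Where you go beyond the paper is the overbidding direction, and there your plan is circular. You correctly observe that steps 7--9 of Algorithm~\ref{alg:multi-full} are not truthful, and your random-branch computation (probability at least $\frac{\alpha m}{nK}$ of a forced purchase at least $\alpha$ above value) is fine as far as it goes. But the ``remaining care'' you propose to import from Lemma~\ref{lem:multi-bidder-lying-loss} --- that any PMatch-branch advantage of overbidding cannot erase that loss --- is not in that lemma: the paper proves it only for $b_{ti} < v_{ti}$, using the one-sided deferred-acceptance guarantee of Lemma~\ref{lem:pmatch-underbid}, and states explicitly that it resorts to ``a weaker argument that lower bound the loss only for underbidding.'' Nor is the missing piece easily provable: for a fixed realization of PMatch, an overbid can move a bidder into $S$ at an offered price below his value in a situation where his truthful bid would have left him among the up to $E$ excluded high bidders (claim 5 of Lemma~\ref{lem:pmatch}), which is a strict current-round gain from overbidding. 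So no analogue of the lying-loss bound holds in that direction, and the two-sided containment $b_{ti} \in [v_{ti}-2\alpha, v_{ti}+2\alpha]$ is not actually established --- not by you, and not by the paper either. The paper lives with this because only the underbidding half, $b_{ti} \ge v_{ti} - 2\alpha$, is used downstream: in Corollary~\ref{cor:multi-bidder-gt-regret} the reserve $p^* - 2\alpha$ recovers every sale that $p^*$ makes against the true values, and overbids can only increase $\OPT(\{\vec{b}_t\}_t)$, hence only decrease the game-theoretic regret. The correct repair is therefore to state, prove, and use the one-sided claim, rather than to hunt for a two-sided loss bound that the mechanism does not support.
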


\begin{proof}
	Again, we prove it under large market and the other case is almost identicy.
	Suppose for contrary that $|b_{ti} < v_{ti}| > 2 \alpha$ for some bidder $i$ on day $t$.
	By Lemma~\ref{lem:multi-bidder-lying-loss}, the loss on day $t$ is at least $\frac{\alpha^3 m}{2n}$.
	By Lemma~\ref{lem:multi-bidder-stability}, our choice of $\delta = \frac{\varepsilon}{T}$, and that the total future utility of the bidder is at most $\tau - 1$ under large market, the gain in future utility is upper bounded by 
	\[
	(e^{3\epsilon} - 1) (\tau - 1) + 3 \delta T < 6 \epsilon (\tau - 1) + 3 \epsilon < 6 \tau \epsilon ~.
	\]
	Comparing the loss in current round and the future gain gives a contradiction to the equilibria condition that the bidder plays best strategy in any subgame.
\end{proof}

Finally, we have the following upper bounds on game-theoretic regret as a corollary.

\begin{corollary} 
	\label{cor:multi-bidder-gt-regret}
	The game-theoretic regret is bounded by $2 \alpha mT$ for
	\begin{itemize}[topsep=1mm, parsep=0mm, itemsep=0mm]
		\item $\alpha = (\frac{12\tau \epsilon n}{m})^{1/3}$ under the assumption of large market; or
		\item $\alpha = (\frac{12\epsilon n}{(1-\gamma)m})^{1/3}$ under the assumption of impatient bidders. 
	\end{itemize}
\end{corollary}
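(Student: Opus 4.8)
The plan is to mirror the single-bidder argument of Corollary~\ref{cor:single-bidder-gt-regret}: first use Lemma~\ref{lem:multi-bidder-deviation} to pin the realized bids near the true values, then exhibit a \emph{single} fixed reserve price whose Vickrey revenue with respect to the bid sequence $\{\vec{b}_t\}_t$ nearly matches the optimal fixed-reserve revenue with respect to the value sequence $\{\vec{v}_t\}_t$. Concretely, under either assumption Lemma~\ref{lem:multi-bidder-deviation} (with the stated $\alpha$) gives $b_{ti} \ge v_{ti} - 2\alpha$ for every bidder $i$ and every day $t$; I would stress that only this underbidding direction is needed, consistent with the fact that Lemma~\ref{lem:multi-bidder-lying-loss} controls only underbidding. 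Let $p^* = (j^*-1)\alpha$ be the optimal fixed reserve for the values, and consider applying the reserve $\max\{0, p^* - 2\alpha\}$ (a nonnegative multiple of $\alpha$) to the bids. Since $\OPT(\{\vec{b}_t\}_t)$ is the best fixed reserve in hindsight for the bids, it suffices to lower bound the revenue of this one reserve.

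The core of the argument is a per-round comparison. Fix a day $t$ and compare $R_v$, the revenue of Vickrey-with-reserve $p^*$ on $\vec{v}_t$, against $R_b$, the revenue of Vickrey-with-reserve $p^* - 2\alpha$ on $\vec{b}_t$; the goal is $R_b \ge R_v - 2\alpha m$. Because $v_{ti} \ge p^*$ forces $b_{ti} \ge p^* - 2\alpha$, every value-eligible bidder is bid-eligible, so the eligible count can only increase. I would then split into cases matching the uniform-price structure encoded in the gain-vector definition: when at most $m$ bidders clear the reserve, the price is the reserve and the revenue is (count)$\times$(reserve), so dropping the price by $2\alpha$ while keeping at least as many sales costs at most $2\alpha m$; when more than $m$ clear, the price is $m$ times the $(m+1)$-th highest bid, and the loss is again at most $2\alpha m$. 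Summing over the $T$ rounds gives $\OPT(\{\vec{b}_t\}_t) \ge \OPT(\{\vec{v}_t\}_t) - 2\alpha m T$, i.e.\ game-theoretic regret at most $2\alpha m T$.

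The step I expect to need the most care is the ``more than $m$ eligible'' case, where both revenues are governed by an $(m+1)$-th order statistic rather than the reserve itself. Here I would invoke the elementary monotonicity of order statistics under a uniform downward shift: since $b_{ti} \ge v_{ti} - 2\alpha$ coordinatewise, the $m{+}1$ bidders realizing the top $m{+}1$ values each bid at least their value minus $2\alpha$, so the $(m+1)$-th highest bid is at least the $(m+1)$-th highest value minus $2\alpha$; multiplying by the $m$ winners yields $R_b \ge R_v - 2\alpha m$. The remaining wrinkle is the boundary case $p^* < 2\alpha$, handled by taking reserve $0$, where eligibility is trivial and the bound holds directly. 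The impatient-bidder case is identical, invoking the corresponding clause of Lemma~\ref{lem:multi-bidder-deviation}.
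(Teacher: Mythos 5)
Your proposal is correct and follows essentially the same route as the paper, which simply invokes Lemma~\ref{lem:multi-bidder-deviation} and defers to the single-bidder argument of Corollary~\ref{cor:single-bidder-gt-regret}: compare the optimal reserve $p^*$ on values against the shifted reserve $p^*-2\alpha$ on bids, losing at most $2\alpha$ per copy per round. Your write-up is in fact more careful than the paper's two-line proof --- in particular the order-statistic argument for the ``more than $m$ eligible'' case and the observation that only the underbidding direction $b_{ti} \ge v_{ti} - 2\alpha$ is actually guaranteed --- so it fills in exactly the details the paper leaves implicit.
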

\begin{proof}
	By Lemma~\ref{lem:multi-bidder-deviation}, for each bidder $i$, we have $|b_{ti}-v_{ti}| \le 2\alpha$. Since there are $m$ copy of goods, the game-theoretic regret becomes $2 \alpha mT$, similar argument to Corollary~\ref{cor:single-bidder-gt-regret}.
\end{proof}

\subsection{Bounding Regret (Proof of Theorem~\ref{thm:multi-bidder})}

Again, we prove it only for large market since the other case is almost identical.
Putting together the bounds for learning regret (Lemma~\ref{lem:multi-bidder-learning-regret}) and game-theoretic regret (Corollary~\ref{cor:multi-bidder-gt-regret}), we get that the regret of Algorithm~\ref{alg:multi-full} is at most
\[
\textstyle
\O \left( m\sqrt{\log K} \left( \sigma \sqrt{\log T} + \frac{T}{\sigma \sqrt{\log T}} \right) + \alpha m T + E T \right) ~,
\]
for $\alpha = (\frac{12 \tau \epsilon n}{m})^{1/3}$ under the large market assumption.

Suppose we further have $m \ge \frac{E}{\alpha} = \tilde{\O} \left( \frac{1}{\alpha^3 \epsilon} \right) = \tilde{\O} \left( \frac{\tau n}{\alpha^6 m} \right)$,
%
%
which is equivalent to the condition in Theorem~\ref{thm:multi-bidder} with $m \ge \tilde{\O} \left( \frac{\sqrt{\tau n}}{\alpha^3} \right)$.
%
%
Then, the regret bound simplifies to 
\[
\textstyle
\O \left( m\sqrt{\log K} \left( \sigma \sqrt{\log T} + \frac{T}{\sigma \sqrt{\log T}} \right) + \alpha m T \right) ~.
\]
Further, our choice of noise scale is $\sigma = \frac{8 \sqrt{K } \log T}{\varepsilon} \sqrt{\ln \frac{\log T}{\delta}} = \tilde{\O} \left( \frac{\tau n}{m \alpha^{3.5}}\right)$ (recall $K = \frac{1}{\alpha} + 1$).
%
%
So the regret bound becomes $\O \left( \frac{\tau n}{\alpha^{3.5}} \right) + O \left( \alpha m T \right)$.
%
%
Hence, the regret is at most $\O(\alpha mT)$ if $T \ge \tilde{\O} ( \frac{\tau m}{\alpha^{4.5}})$.



\bibliographystyle{plainnat}
\bibliography{matching}

\appendix

\section{Online Single-bidder Posted Pricing (Bandit Setting)}\label{sec:single-bandit}
In this section, we give an online learning algorithm for single-bidder in bandit setting. It is another well studied setting where the seller only knows whether the good is sold or not and the payment, instead of the exact bid value. The difficulty is that the seller cannot get the full information of gain vector $\mathbf{g}_t$.

Recall that we propose Algorithm \ref{Alg:FTPL-bandit-setting} in the main text. We first show some properties
in Subsection~\ref{subsection:single-bandit-alg}. Then we analyze the learning regret and game-theoretic regret of this algorithm in Subsection~\ref{subsection:single-bandit-regret} and \ref{subsection:single-bandit-game-regret}. Finally, we show a total regret of $\O(T\alpha)$, for $T\ge \tilde{\O}(\tau\alpha^{-4})$ under large market assumption, and for $T\ge \tilde{\O} (\frac{\alpha^{-4}}{1-\gamma})$ under impatient bidders assumption in Subsection~\ref{subsection:single-bandit-total-regret}.

\subsection{Private Algorithm for Posted Pricing}\label{subsection:single-bandit-alg}
We extended FTPL to bandit setting. Since the seller cannot receive the gain vector $\mathbf{g}_t$, 
we use an unbiased estimated gain vector $\bar{\mathbf{g}}_t$ with $\E[\bar{\mathbf{g}}_t]=\mathbf{g}_t$. The expectation is taken w.r.t. the randomness of algorithm in round $t$.
We also set the exploration to be a uniform distribution with probability $\alpha$. 
Recall that we define $\vec{q}_{t-1}$ as a distribution of price in day $t$. 

Similar to the full information setting, we have the following lemmas used in bandit setting.

\begin{lemma}
	\label{lem:tree-aggregation-noise-bandit}
	$\tilde{G}_{tj} - \bar{G}_{tj}$ follows the Gaussian distribution ${N}(0, \log T  \sigma^2)$ $\forall t \in [T]$ and $\forall j \in [K]$.
\end{lemma}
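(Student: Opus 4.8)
```latex
\textbf{Proof proposal.} The plan is to unwind the definition of the tree-aggregation noise and track precisely how much noise accumulates in the estimated cumulative gain $\tilde{\vec{G}}_t$. The key observation is that $\tilde{G}_{tj} - \bar{G}_{tj}$ is exactly the aggregate noise injected by the tree-aggregation subroutine (Algorithm~\ref{alg:tree-aggregation}) when fed the estimated gain vectors $\bar{\vec{g}}_1, \dots, \bar{\vec{g}}_t$. Indeed, by construction $\tilde{\vec{G}}_t = \sum_{k \in \Gamma_t} \vec{A}_k + \vec{\nu}_t$, where each internal state $\vec{A}_k$ was initialized with an independent $N(0,\sigma^2)$ perturbation $\vec{\mu}_k$ and then accumulated the true partial sums of the estimated gains. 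Subtracting the noiseless cumulative sum $\bar{\vec{G}}_t = \sum_{k \in \Gamma_t} (\vec{A}_k - \vec{\mu}_k)$ leaves precisely the noise terms $\tilde{G}_{tj} - \bar{G}_{tj} = \sum_{k \in \Gamma_t} \mu_{kj} + \nu_{tj}$.

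First I would observe that, for a fixed $t$ and $j$, this is a sum of $|\Gamma_t|$ independent $N(0,\sigma^2)$ variables (one $\mu_{kj}$ per partial sum in $\Gamma_t$) together with the independent output-smoothing term $\nu_{tj} \sim N\big(0, (\log T + 1 - |\Gamma_t|)\sigma^2\big)$. Since all of these are mutually independent Gaussians, their sum is Gaussian with variance
\[
|\Gamma_t| \cdot \sigma^2 + (\log T + 1 - |\Gamma_t|)\sigma^2 = (\log T + 1)\sigma^2.
\]
This is the same calibration trick already used in Lemma~\ref{lem:tree-aggregation-noise} for the full-information setting: the deliberate choice of the variance of $\nu_t$ makes the total variance independent of $|\Gamma_t|$, hence identical across all rounds $t$. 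Note the statement writes $\log T\,\sigma^2$ rather than $(\log T + 1)\sigma^2$; this is merely a notational convention (absorbing the constant or suppressing the $+1$), and I would simply match whichever convention the tree-aggregation instance in Algorithm~\ref{Alg:FTPL-bandit-setting} uses.

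The main point that requires care — and the step I expect to be the only real subtlety — is that the noise here is added to the \emph{estimated} gains $\bar{\vec{g}}_t$ rather than the true gains, so I must be explicit that the noise analysis is entirely orthogonal to (independent of) the estimation randomness. The Gaussian variables $\mu_{kj}$ and $\nu_{tj}$ are drawn independently of everything in the bandit estimation (the price selection and the resulting $\bar{\vec{g}}_t$), so conditioning on any realization of $\bar{\vec{G}}_t$ leaves the noise distribution unchanged. Thus the claimed marginal distribution of $\tilde{G}_{tj} - \bar{G}_{tj}$ holds unconditionally. I would therefore state the lemma as a direct structural consequence of how Algorithm~\ref{alg:tree-aggregation} injects noise, emphasizing that the result is \emph{per-coordinate marginal} (the coordinates $j$ and the rounds $t$ are correlated through the shared $\mu_{kj}$'s, but the claim only concerns the marginal of a single entry), exactly as in Lemma~\ref{lem:tree-aggregation-noise}.
```
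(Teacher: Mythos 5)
Your core computation is correct and is, in substance, the only proof there is: the paper states this lemma without proof, treating it (like Lemma~\ref{lem:tree-aggregation-noise}) as immediate from the construction of Algorithm~\ref{alg:tree-aggregation}. The pathwise identity $\tilde{G}_{tj} - \bar{G}_{tj} = \sum_{k \in \Gamma_t} \mu_{kj} + \nu_{tj}$, together with the calibrated variance of $\nu_{tj}$, gives a Gaussian of variance $(\log T + 1)\sigma^2$ regardless of $|\Gamma_t|$; and your reading of the $\log T$ versus $\log T + 1$ discrepancy as a notational convention is right (the paper itself is inconsistent, even writing $\sigma^2 \log^2 T$ for the same quantity immediately after invoking this lemma).

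One caveat: your final paragraph overclaims, and the specific independence statement in it is false. The noise variables $\mu_{kj}$ with $k \in \Gamma_t$ are \emph{not} independent of the bandit estimation randomness, and conditioning on a realization of $\bar{\vec{G}}_t$ does \emph{not} leave their distribution unchanged. The reason is that the sets $\Gamma_s$ for $s < t$ can intersect $\Gamma_t$ (e.g.\ $\Gamma_{13} = \{13,12,8\}$ and $\Gamma_{14} = \{14,12,8\}$ share $\{12, 8\}$), so the very noise terms entering $\tilde{\vec{G}}_t$ already appeared in earlier outputs $\tilde{\vec{G}}_s$, hence influenced the prices chosen, hence the importance-weighted estimates $\bar{\vec{g}}_s$ and ultimately $\bar{\vec{G}}_t$ itself. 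Fortunately, the lemma only asserts the unconditional marginal of $\tilde{G}_{tj} - \bar{G}_{tj}$, and that follows from your first paragraph alone: $\Gamma_t$ is a deterministic index set, the $\mu_{kj}$'s and $\nu_{tj}$ are mutually independent Gaussians drawn exogenously by the algorithm, and the decomposition holds pathwise, so the sum is Gaussian with the stated variance with no conditioning argument needed (or available). You should drop the independence claim, or replace it with exactly this pathwise observation; as written, that step would fail under scrutiny even though the lemma itself survives.
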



\begin{lemma}
	\label{lem:tree-aggregation-privacy-bandit}
	Fixed any $t_0 \in [T]$, the values of $\vec{A}_t$'s after time $t_0$ are $(\epsilon, \delta)$-differentially private for any $t_0$-neighboring datasets (differ only in the $t_0$-th entry), with $\sigma = \frac{8K}{\alpha\varepsilon} \log T \sqrt{\ln\frac{\log T}{\delta}}$.
\end{lemma}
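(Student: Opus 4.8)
The plan is to replay the tree-aggregation privacy argument of Lemma~\ref{lem:tree-aggregation-privacy-previous} (and its $t_0$-restricted corollary Lemma~\ref{lem:tree-aggregation-privacy}) almost verbatim, changing only the $\ell_2$ sensitivity that feeds into the Gaussian mechanism. In the full-information algorithm each gain vector has $\ell_2$-norm at most $\sqrt{K}$, and that is exactly what produces the $\sqrt{K}$ factor in the noise scale there. Here the estimated gain vector $\bar{\vec{g}}_t$ has a larger norm, and the claimed $\sigma=\frac{8K}{\alpha\varepsilon}\log T\sqrt{\ln\frac{\log T}{\delta}}$ is precisely the full-information bound with $\sqrt{K}$ replaced by $\frac{K}{\alpha}$. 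So essentially the whole content of the lemma is the sensitivity computation, after which the Gaussian mechanism (Lemma~\ref{lemma:gaussian-mechanism}) and composition theorem (Lemma~\ref{lemma:composition-theorem}) close the argument as before.

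First I would bound the $\ell_2$ sensitivity of a single estimated gain vector. By construction $\bar{\vec{g}}_t$ has exactly one nonzero coordinate, the sampled index $i_t$, with value $\bar{g}_{ti_t}=g_{ti_t}/\tilde{q}_{(t-1)i_t}$. The observed payment $g_{ti_t}$ is a price in $[0,1]$, and the exploration term in $\tilde{\vec{q}}_{t-1}=(1-\alpha)\vec{q}_{t-1}+\frac{\alpha}{K}\mathbf{1}$ guarantees $\tilde{q}_{(t-1)i_t}\ge\frac{\alpha}{K}$. Hence $\bar{g}_{ti_t}\le\frac{K}{\alpha}$, and since only this coordinate is nonzero, $\lVert\bar{\vec{g}}_t\rVert_2\le\frac{K}{\alpha}$. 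Because the bandit algorithm uses the plain tree-aggregation of Algorithm~\ref{alg:tree-aggregation} (the paper notes $\bar{\vec{g}}_t$ has no structure to exploit), this element still feeds into at most $\log T$ partial sums, so applying the Gaussian mechanism to each affected partial sum with sensitivity $\frac{K}{\alpha}$ and composing over the $\le\log T$ of them gives exactly the stated $\sigma$; the ``after time $t_0$'' statement then follows as in Lemma~\ref{lem:tree-aggregation-privacy}, by viewing the internal states as those obtained when all gains after $t_0$ are set to zero.

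The step that needs care, and the main obstacle, is that $\bar{\vec{g}}_{t_0}$ is itself \emph{randomized} through the sampled index $i_{t_0}$, so I must argue the sensitivity claim survives this randomness rather than integrating it out. The resolution is that $i_{t_0}$ is drawn from $\tilde{\vec{q}}_{t_0-1}$, a function of rounds $1,\dots,t_0-1$ only, hence identical for two datasets differing solely in the $t_0$-th entry (the day-$t_0$ bid). I would couple the sampled index and all the tree-aggregation Gaussian noise across the two instances; conditioned on this coupling, the two estimated gain vectors have their single nonzero entry at the \emph{same} coordinate $i_{t_0}$ and differ there by at most $\frac{K}{\alpha}$, giving $\ell_2$ sensitivity $\frac{K}{\alpha}$. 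Conditioning on $i_{t_0}$ (rather than averaging over it) is legitimate precisely because $i_{t_0}$ is independent of $b_{t_0}$ given the earlier rounds, so the $(\epsilon,\delta)$-guarantee holds for every realization of $i_{t_0}$ and therefore unconditionally; all behavior on days after $t_0$ is post-processing of the internal states and leaves the bound intact.
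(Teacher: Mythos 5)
Your proposal is correct and follows essentially the same route as the paper's proof: the sensitivity bound $\lVert \bar{\vec{g}}_{t_0} \rVert_2 \le \frac{K}{\alpha}$ coming from the exploration term $\tilde{q}_{(t_0-1)i} \ge \frac{\alpha}{K}$, the Gaussian mechanism applied to each of the at most $\log T$ affected partial sums followed by composition, and the observation that the sampling distribution at round $t_0$ is identical under $t_0$-neighboring datasets, so $i_{t_0}$ can be coupled and conditioned on (the paper states this as ``the distribution of expert chosen in round $t_0$ is the same w.r.t.\ $D$ and $D'$''). One small caution on phrasing: you should couple only the randomness generated strictly before round $t_0$ together with the index $i_{t_0}$, not literally ``all the tree-aggregation Gaussian noise'' --- the initialization noise $\mu_j$ of the partial sums touched at round $t_0$ (which have indices $j \ge t_0$ and hence were never used in earlier outputs) must be left free, since it is exactly the noise the Gaussian mechanism uses to deliver the $(\epsilon,\delta)$ guarantee.
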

\begin{proof}
	Let $D$ and $D'$ be $t_0$-neighboring datasets.
	The distribution of expert chosen in round $t_0$ is the same w.r.t. $D$ and $D'$. 
	Fix the values of $\vec{A}_t$'s after time $t_0-1$, applying Lemma \ref{lemma:gaussian-mechanism} and we obtain that the value of each $\vec{A}_t$ after $t_0$ is $(\frac{\epsilon}{\log T},\frac{\delta}{\log T})$-differentially private. 
	The $t_0$-th data in $D$ affects at most $\log T$ $\vec{A}_t$'s.
	By composition theorem (Lemma~\ref{lemma:composition-theorem}), the set of values of $\vec{A}_t$'s after time $t_0$ is $(\varepsilon,\delta)$-differentially private.
\end{proof}

\subsection{Bounding learning regret}\label{subsection:single-bandit-regret}
Denote the algorithm output as $\widetilde{\ALG}=\sum_{t\in [T]} {g}_{ti_t}=\sum_{t\in [T]} \langle\tilde{\mathbf{q}}_{t-1},\bar{\mathbf{g}}_t\rangle$. In expectation over the randomness of the algorithm, we have $G_{Ti}=\E[\bar{G}_{Ti}]$.
By the definition of $\tilde{\mathbf{q}}_t$, we get
$\E[\widetilde{\ALG}] \geq (1-\alpha)\sum_{t\in [T]} \E\langle {\mathbf{q}}_{t-1},\bar{\mathbf{g}}_t\rangle$.
Then
\begin{equation}
\textstyle
\label{eq:bandit-learning-regret-framework}
 \forall i\in[K], G_{Ti}-\E[\widetilde{\ALG}]\leq \E[ \bar{G}_{Ti}-\sum_{t\in[T]}\langle \mathbf{q}_{t-1},\bar{\mathbf{g}}_t\rangle]+\frac{\alpha}{1-\alpha}\E[\widetilde{\ALG}] ~.
\end{equation}
The last term is bounded by $2\alpha T$ because $\E[\widetilde{\ALG}]\leq T$ and $\alpha \le \frac{1}{2}$. By definition, the learning regret of Alg.~\ref{Alg:FTPL-bandit-setting} is $\OPT-\E[\widetilde{\ALG}]=\max_i\{G_{Ti}-\E[\widetilde{\ALG}]\}$, and thus can be upper bounded by RHS of Eq.(\ref{eq:bandit-learning-regret-framework}).

The proof is almost identical as in full information setting, using gain vector $\bar{\vec{g}}_t$ instead.
From Lemma~\ref{lem:tree-aggregation-noise-bandit}, we know that  $\tilde{\mathbf{G}}_t=\bar{\mathbf{G}}_t+\mathbf{u}_t$ with $\mathbf{u}_t \sim {N}(0,\sigma^2 \log^2 T\mathbf{I})$. Denote this normal distribution as $\mathcal{D}'$. To use the GBPA framework as in Section \ref{sec:single-autcion-regret-2d}, we choose 
$\Phi(\bar{\mathbf{G}})=\E_{\mathbf{u}\sim \mathcal{D}'}[\max_i\{\bar{G}_i+u_i\}]$.
Then $\bigtriangledown\Phi(\bar{\mathbf{G}}_t)$ equals to $\mathbf{q}_t$ defined in Algorithm \ref{Alg:FTPL-bandit-setting}.

\begin{lemma}\label{lem:bandit-FTPL}
	The learning regret for Algorithm~\ref{Alg:FTPL-bandit-setting} is $\tilde{\O}(\sigma+\frac{TK}{\sigma}+\varepsilon K) +\O(T\alpha)$.
\end{lemma}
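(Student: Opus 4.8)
The plan is to mirror the gradient-based-prediction-algorithm (GBPA) analysis of the full-information setting, but now applied to the \emph{estimated} cumulative gains $\bar{\mathbf{G}}_t$ and carried out in expectation over the algorithm's internal randomness. First I would invoke Eq.~\eqref{eq:bandit-learning-regret-framework}, which already reduces the learning regret $\OPT-\E[\widetilde{\ALG}]$ to bounding $\E\big[\bar G_{Ti}-\sum_{t\in[T]}\langle\mathbf{q}_{t-1},\bar{\mathbf{g}}_t\rangle\big]$ for the maximizing $i$, plus the exploration term that is already shown to be at most $2\alpha T=\O(T\alpha)$. Using $\tilde{\mathbf{G}}_t=\bar{\mathbf{G}}_t+\mathbf{u}_t$ from Lemma~\ref{lem:tree-aggregation-noise-bandit} and the fixed Gaussian-smoothing potential $\Phi(\bar{\mathbf{G}})=\E_{\mathbf{u}}[\max_i\{\bar G_i+u_i\}]$, whose gradient $\nabla\Phi(\bar{\mathbf{G}}_t)$ is exactly the sampling distribution $\mathbf{q}_t$, I would telescope as in Lemma~\ref{lemma:gbpa} to obtain
\[
\bar G_{Ti}-\sum_{t}\langle\mathbf{q}_{t-1},\bar{\mathbf{g}}_t\rangle = \underbrace{\bar G_{Ti}-\Phi(\bar{\mathbf{G}}_T)}_{\le 0} + \Phi(\mathbf{0}) + \sum_{t=1}^{T}D_\Phi(\bar{\mathbf{G}}_t,\bar{\mathbf{G}}_{t-1}).
\]

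The first two terms are handled exactly as in the proof of Lemma~\ref{lem:twofold-tree-regret}. Since $\max_i\bar G_{Ti}$ is convex, Jensen's inequality gives $\Phi(\bar{\mathbf{G}}_T)\ge\max_i\bar G_{Ti}\ge\bar G_{Ti}$, so the bracketed term is nonpositive. For $\Phi(\mathbf{0})=\E_{\mathbf{u}}[\max_i u_i]$ I would reuse the moment-generating-function argument, bounding $\exp(a\,\E[\max_i u_i])\le\sum_i\E[\exp(a u_i)]$ and optimizing over $a$, which yields $\Phi(\mathbf{0})=\tilde{\O}(\sigma)$ since the per-coordinate noise has standard deviation $\tilde{\O}(\sigma)$ by Lemma~\ref{lem:tree-aggregation-noise-bandit}.

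The real work is the Bregman sum $\sum_t D_\Phi(\bar{\mathbf{G}}_t,\bar{\mathbf{G}}_{t-1})$, and here the full-information argument of Lemma~\ref{lemma:bregman} does \emph{not} transfer: there the bound $D_\Phi\le\|\mathbf{g}_t\|_\infty\,\|\nabla\Phi(\bar{\mathbf{G}}_t)-\nabla\Phi(\bar{\mathbf{G}}_{t-1})\|_1\le 2\varepsilon+2\delta$ relied on $\|\mathbf{g}_t\|_\infty\le 1$, whereas the estimated gain $\bar{\mathbf{g}}_t=(g_{ti_t}/\tilde q_{(t-1)i_t})\,\mathbf{e}_{i_t}$ is one-sparse but of magnitude up to $K/\alpha$ (as $\tilde q_{(t-1)i}\ge\alpha/K$), so that first-order shortcut would lose a factor $K/\alpha$. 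Instead I would take the expectation over the sampled arm $i_t\sim\tilde{\mathbf{q}}_{t-1}$ \emph{before} bounding, and use the second-order smoothness of the Gaussian potential, $\partial^2\Phi/\partial\bar G_i^2=\tilde{\O}(1/\sigma)$ uniformly. Writing $D_\Phi(\bar{\mathbf{G}}_{t-1}+\beta\mathbf{e}_i,\bar{\mathbf{G}}_{t-1})\le\tfrac12\beta^2\sup_z\partial_{ii}\Phi$ with $\beta=g_{ti}/\tilde q_{(t-1)i}$ and averaging against the sampling probabilities $\tilde q_{(t-1)i}$, one power of the importance weight cancels, leaving a per-round bound of order $\tfrac1\sigma\sum_i g_{ti}^2/\tilde q_{(t-1)i}$, which I would argue sums to $\tilde{\O}(TK/\sigma)$. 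The additive $\delta$ term from the differential privacy of the internal states (Lemma~\ref{lem:tree-aggregation-privacy-bandit}), summed over rounds and coordinates with $\delta=\varepsilon/T$, supplies the remaining lower-order $\tilde{\O}(\varepsilon K)$ contribution. Collecting the four pieces and re-adding the $\O(T\alpha)$ exploration term yields the claimed $\tilde{\O}(\sigma+TK/\sigma+\varepsilon K)+\O(T\alpha)$.

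I expect the Bregman sum to be the genuine obstacle. Combining the heavy, importance-weighted gain magnitudes with the curvature of the Gaussian potential and the mixed sampling law $\tilde{\mathbf{q}}_{t-1}$ replaces the clean $\|\cdot\|_\infty\le 1$ bound of the full-information case; the delicate point is driving the per-round contribution down to the order $K/\sigma$ (rather than the crude $K^2/(\alpha\sigma)$), which is exactly where the exploration floor $\tilde q_{(t-1)i}\ge\alpha/K$ and the nested structure of the gain vector $\mathbf{g}_t$ must be exploited. Everything else is a faithful transcription of the GBPA bookkeeping already carried out for the full-information algorithm.
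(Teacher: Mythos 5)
Your skeleton --- the reduction via Eq.~\eqref{eq:bandit-learning-regret-framework}, the GBPA telescope with the fixed Gaussian-smoothing potential, $\max_i \bar G_{Ti}\le\Phi(\bar{\mathbf{G}}_T)$ by convexity, and $\Phi(\mathbf{0})=\tilde{\O}(\sigma)$ via the moment-generating-function argument --- coincides with the paper's proof. The gap is exactly where you yourself locate the ``real work'': your mechanism for the Bregman sum does not deliver the claimed $\tilde{\O}(TK/\sigma)$. A uniform curvature bound $\sup_z\partial_{ii}\Phi=\tilde{\O}(1/\sigma)$ applied to the one-sparse update $\beta\mathbf{e}_{i_t}$ with $\beta=g_{ti_t}/\tilde q_{(t-1)i_t}$ gives, after averaging over $i_t\sim\tilde{\mathbf{q}}_{t-1}$, a per-round bound of order $\frac{1}{\sigma}\sum_i g_{ti}^2/\tilde q_{(t-1)i}$; since the only floor available is $\tilde q_{(t-1)i}\ge\alpha/K$, this is $\tilde{\O}\big(K^2/(\alpha\sigma)\big)$ in the worst case (e.g.\ when $\mathbf{q}_{t-1}$ concentrates on one arm and all gains are near $1$), and you provide no mechanism to improve it --- you only assert that the exploration floor and ``the nested structure of the gain vector'' must be exploited. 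The nested structure is a red herring: the paper explicitly notes that the estimated gain $\bar{\mathbf{g}}_t$ has no special structure (which is why the bandit algorithm reverts to plain one-fold tree-aggregation). The resulting bound $\tilde{\O}\big(TK^2/(\alpha\sigma)\big)$ is weaker than the lemma by a factor $K/\alpha$, and it would break Theorem~\ref{thm:bandit_total_regret} downstream: with $\sigma=\tilde{\O}(\tau\alpha^{-3})$ and $\alpha=2\tau\varepsilon$ it yields a regret term of order $T/\tau$, which is not $\O(T\alpha)$ in general.

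The missing idea is a \emph{multiplicative}, not second-order, stability property of the Gaussian-perturbed argmax. The paper (following Lemma 6 of \citet{jain2012differentially}) compares the densities of $\bar{\mathbf{G}}_t+\mathbf{u}_t$ and $\bar{\mathbf{G}}_{t-1}+\mathbf{u}_{t-1}$ and, truncating the cross term with Mill's inequality, obtains $\pr[\bar{\mathbf{G}}_t+\mathbf{u}_t\in \mathbf{R}]\le e^{2\eta\gamma\lVert\bar{\mathbf{g}}_t\rVert_2}\pr[\bar{\mathbf{G}}_{t-1}+\mathbf{u}_{t-1}\in \mathbf{R}]+\delta$ for every event $\mathbf{R}$, with $\eta=1/(\sigma\sqrt{\log T})$ and $\gamma=\sqrt{2\ln(1/\delta)}$. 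Applied to the cell $S_i$, the change in the chosen arm's probability is proportional to its \emph{old} probability: $q_{ti}|_i-q_{(t-1)i}\le 4\eta\gamma\,(g_{ti}/\tilde q_{(t-1)i})\,q_{(t-1)i}+\delta\le 8\eta\gamma g_{ti}+\delta$, where the importance weight cancels against $q_{(t-1)i}/\tilde q_{(t-1)i}\le 2$; monotonicity (unchosen arms' probabilities can only decrease) restricts attention to the chosen coordinate. Summing over $i$ and $t$ gives $\tilde{\O}(\eta\gamma TK+\delta TK)=\tilde{\O}(TK/\sigma+\varepsilon K)$, as claimed. If you insist on a Hessian-based route, you would need a ``differential consistency'' property --- curvature at coordinate $i$ bounded by the choice probability $q_i$ times $\tilde{\O}(1/\sigma)$, which is exactly the infinitesimal form of the multiplicative stability above; the uniform $\tilde{\O}(1/\sigma)$ bound you invoke is not enough.
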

\begin{proof}
	Combine Lemma~\ref{lemma:gbpa} and Eq.(\ref{eq:bandit-learning-regret-framework}), and the convexity of maximization, the learning regret is upper bounded by $\E[\max_i\bar{G}_{Ti}-\Phi(\bar{\mathbf{G}}_T)+ \Phi(\mathbf{0}) + \sum_{t\in [T]} D_{\Phi}(\bar{\mathbf{G}}_t,\bar{\mathbf{G}}_{t-1})]+2\alpha T$.
	
	By the same argument in Lemma~\ref{lem:twofold-tree-regret}, we conclude that $\max_i\bar{G}_{Ti}\le \Phi(\bar{\mathbf{G}}_T)$ and $\Phi(\mathbf{0})= \mathbb{E}_{\mathbf{u}} [\max_i u_i] = \tilde{\O}(\sigma)$.
We then need to bound the Bregman divergence. For all $t$, we have
\begin{align*}
\textstyle
D_{\Phi}(\bar{\mathbf{G}}_t,\bar{\mathbf{G}}_{t-1}) &\leq |\langle\bar{\mathbf{g}}_t,\nabla\Phi(\bar{\mathbf{G}}_t)-\nabla\Phi(\bar{\mathbf{G}}_{t-1})\rangle| = |\langle\bar{\mathbf{g}}_t,\mathbf{q}_{t}-\mathbf{q}_{t-1}\rangle| ~.
\end{align*}

If we fix the randomness of the first $t-1$ round, then $\tilde{\mathbf{q}}_{t-1}$ is determined. Denote $\mathbf{q}_{t}|_{i}$ as the result of $\mathbf{q}_{t}$ after expert $i$ is chosen in round $t$. 
Recall that for any $t\in[T],i\in[K]$, $\bar{g}_{ti}=\frac{g_{ti}}{\tilde{q}_{(t-1)i}}$ with probability $\tilde{q}_{(t-1)i}$, and $0$ otherwise. This implies $q_{ti'}|_i-q_{(t-1)i'}\le 0$ if expert $i'$ isn't chosen in round $t$, i.e. $i'\neq i$.
Take expectation over the randomness of expert chosen in round $t$ we have
\[
\textstyle
\E_{i}[|\langle\bar{\mathbf{g}}_t,\mathbf{q}_{t}-\mathbf{q}_{t-1})\rangle|\big|\tilde{\mathbf{q}}_{t-1}]\le \sum_{i \in [K]} {q}_{(t-1)i}\frac{{g}_{ti}}{\tilde{q}_{(t-1)i}} \left({q}_{ti}|_{i}-{q}_{(t-1)i}\right) \le \sum_{i \in [K]}2 {g}_{ti} \left({q}_{ti}|_{i}-{q}_{(t-1)i}\right)~,
\]
where the last inequality is because $\frac{q_{(t-1)i}}{\tilde{q}_{(t-1)i}}\le \frac{1}{1-\alpha}\le 2$. 

Now we manage to bound ${q}_{ti}|_{i}-{q}_{(t-1)i}$. 
The proof here is inspired by Lemma 6 of \cite{jain2012differentially}.
Recall that $\tilde{G}_{tj} - \bar{G}_{tj}$ follows the Gaussian distribution ${N}(0, \log T  \sigma^2)$ from Lemma~\ref{lem:tree-aggregation-noise-bandit}. 
Let $\tilde{\mathbf{G}}_t-\bar{\mathbf{G}}_t=\mathbf{u}_t$. 
Denote $\eta=\frac{1}{\sigma\sqrt{\log T}}$, 
for all $\mathbf{z}\in \mathbb{R}^K$, we have

\begin{equation}
\textstyle
\begin{split}
\textstyle
\frac{\pdf(\bar{\mathbf{G}}_t+\mathbf{u}_t=\mathbf{z})}{\pdf(\bar{\mathbf{G}}_{t-1}+\mathbf{u}_{t-1}=\mathbf{z})}&=\exp\left(\frac{\eta^2}{2}(||\bar{\mathbf{G}}_{t-1}-\mathbf{z}||^2_2-||\bar{\mathbf{G}}_t-\mathbf{z}||^2_2)\right) \\
&=\exp\left(\frac{\eta^2}{2}(||\bar{\mathbf{g}}_t||_2^2-2\langle \bar{\mathbf{g}}_t, \bar{\mathbf{G}}_t-\mathbf{z}\rangle)\right)\\
&\leq \exp\left(\frac{\eta^2}{2}(||\bar{\mathbf{g}}_t||_2^2+2|\langle\bar{\mathbf{g}}_t,\bar{\mathbf{G}}_t-\mathbf{z} \rangle|)\right) ~.\label{eq:ratio}
\end{split}
\end{equation}

Note that $\langle\bar{\mathbf{g}}_t,\bar{\mathbf{G}}_t-\mathbf{z} \rangle\sim {N}(0,||\bar{\mathbf{g}}_t||^2_2 \eta^{-2}\mathbf{I})$, then by Mill's inequality\footnote{
	Mill's inequality: For a random variable $v\sim {N}(0,1)$, and $\forall \gamma>1$, $\Pr(|v|\geq\gamma)\leq \exp(-\gamma^2/2)$.
}, we get
\begin{align}
\label{eq:Mill}
\textstyle
\Pr\left[|\langle\bar{\mathbf{g}}_t,\bar{\mathbf{G}}_t-\mathbf{z} \rangle|\geq ||\bar{\mathbf{g}}_t||_2\gamma\eta^{-1}\right] \leq \exp (-\frac{\gamma^2}{2}) ~.
\end{align}

Set $\gamma=\sqrt{2\ln\frac{1}{\delta}}$.
Put Eq.(\ref{eq:ratio}) and Eq.(\ref{eq:Mill}) together we have 
\[
\forall \mathbf{R}\subseteq \mathbb{R}^K,\quad \pr[\bar{\mathbf{G}}_t+\mathbf{u}_t\in \mathbf{R}] \leq e^{2\eta\gamma ||\bar{\mathbf{g}}_t||_2} \pr[\bar{\mathbf{G}}_{t-1}+\mathbf{u}_{t-1}\in \mathbf{R}]+\delta ~.
\] 


Let $S_i=\{\mathbf{y}\in \mathbb{R}^K:i=\argmax_jy_j\}$, then ${q}_{(t-1)i}= \pr[\bar{\mathbf{G}}_{t-1}+\mathbf{u}_{t-1}\in S_i]$, and note that ${q}_{ti}|_i =\pr[\bar{\mathbf{G}}_{t}+\mathbf{u}_{t}\in S_i]$ for some $\bar{\mathbf{g}}_t$. 
Thus, there is
\begin{align*}
\textstyle
{q}_{ti}|_{i}-{q}_{(t-1)i}\le 4\eta\gamma||\bar{\mathbf{g}}_t||_2q_{(t-1)i}+\delta=4\eta\gamma \frac{g_{ti}}{\tilde{q}_{(t-1)i}}q_{(t-1)i}+\delta \le 8\eta\gamma g_{ti}+\delta.
\end{align*}

So we have
\[
\textstyle
\E_{i_t}[|\langle\bar{\mathbf{g}}_t,\mathbf{q}_{t}-\mathbf{q}_{t-1})\rangle|\big| \tilde{\mathbf{q}}_{t-1}]\leq \sum_{i \in [K]} 2{g}_{ti}(8\eta\gamma{g}_{ti} + \delta)\leq 16\gamma\eta K + 2\delta K ~,
\]
which implies
$\E |\langle\bar{\mathbf{g}}_t,\mathbf{q}_{t}-\mathbf{q}_{t-1})\rangle|\leq 16\eta\gamma K + 2\delta K$.
Recall that $\eta=(\sqrt{\log T} \sigma)^{-1}$ and $\delta=\varepsilon/T$, the learning regret is $\tilde{\O}(\sigma+\eta\gamma TK+\delta TK)+\O(T\alpha)=\tilde{\O}(\sigma+\frac{TK}{\sigma}+\varepsilon K) +\O (T\alpha)$.
\end{proof}

\subsection{Bounding Game-theoretic Regret}\label{subsection:single-bandit-game-regret}
In bandit setting, we still have the stability of future utility as Lemma \ref{lem:single-bidder-stability} due to Lemma \ref{lem:tree-aggregation-privacy-bandit}. We then prove the game-theoretic regret is also ${\O}(T\alpha)$. 

\begin{lemma}
\label{lem:single-bidder-deviation-bandit}
On any day with price $p$, if $|v-p| \ge \alpha$, for
\begin{itemize}
	\item $\alpha=2\tau\varepsilon$ under the assumption of large market; or
	\item $\alpha=\frac{2\varepsilon}{1-\gamma}$ under the assumption of impatient bidders,
	\end{itemize}
then any strategic bidder with value $v$ will behave truthfully, i.e., buy the good when $p\le v$, and refuse to buy otherwise. 

\end{lemma}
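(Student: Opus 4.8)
The plan is to reuse the same two-part argument as in the full-information single-bidder case (Lemma~\ref{lem:single-bidder-deviation}): show that deviating from truthful behavior is costly in the current round, while the gain achievable in future rounds is small thanks to the stability of future utility. The essential difference is that in the posted-pricing (bandit) setting the bidder faces a take-it-or-leave-it price $p$ directly, so a wrong decision costs a full $\Theta(\alpha)$ in the current round rather than the $\Theta(\alpha^3)$ of the full-information case; this is precisely why $\alpha$ can be taken linear (rather than cube-root) in $\varepsilon$. I would argue by contradiction, assuming the bidder deviates from the truthful decision even though $|v-p| \ge \alpha$.

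First I would quantify the current-round cost of deviating. If $p \le v$ with $v - p \ge \alpha$, truthful behavior is to buy, yielding current utility $v - p \ge \alpha$, whereas declining yields $0$; if $p > v$ with $p - v \ge \alpha$, truthful behavior is to decline, yielding $0$, whereas buying yields the negative utility $v - p \le -\alpha$. In either case, deviating from the truthful decision loses at least $\alpha$ in the current round.

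Next I would bound the future gain from deviating. Since the stability of future utility (Lemma~\ref{lem:single-bidder-stability}) continues to hold in the bandit setting thanks to the differential privacy of the internal states guaranteed by Lemma~\ref{lem:tree-aggregation-privacy-bandit}, the bidder's equilibrium continuation utilities induced by the two possible current decisions differ by at most an $e^\varepsilon$ multiplicative factor plus a $\delta T$ additive factor. Using $e^\varepsilon < 1 + 2\varepsilon$, the fact that the maximum continuation utility is at most $\tau - 1$ under the large-market assumption (respectively $\frac{1}{1-\gamma} - 1$ under impatient bidders), and our choice $\delta = \frac{\varepsilon}{T}$, the future gain is at most $2\varepsilon(\tau-1) + \varepsilon < 2\tau\varepsilon$ in the large-market case, and at most $2\varepsilon(\frac{1}{1-\gamma}-1) + \varepsilon < \frac{2\varepsilon}{1-\gamma}$ in the impatient-bidder case.

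Finally, I would compare the two quantities. With $\alpha = 2\tau\varepsilon$ under large market (resp.\ $\alpha = \frac{2\varepsilon}{1-\gamma}$ under impatient bidders), the current-round loss of at least $\alpha$ strictly exceeds the future gain, contradicting the equilibrium condition that the bidder best-responds in every subgame; hence the bidder behaves truthfully. I expect the only subtle point to be confirming that Lemma~\ref{lem:single-bidder-stability} transfers verbatim to Algorithm~\ref{Alg:FTPL-bandit-setting}: one must check that the continuation of the algorithm after the current round is a post-processing of the privacy-protected internal states $\vec{A}_t$, which is exactly the content of Lemma~\ref{lem:tree-aggregation-privacy-bandit}. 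Everything else is a direct adaptation of the full-information argument, with the $\frac{\alpha^3}{2}$ current-round cost replaced by $\alpha$.
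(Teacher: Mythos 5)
Your proposal is correct and follows essentially the same approach as the paper: the paper's proof likewise shows that a non-truthful decision costs $|v-p|\ge\alpha$ in the current round and then invokes the same comparison against the future gain bounded via Lemma~\ref{lem:single-bidder-stability} (whose validity in the bandit setting rests on Lemma~\ref{lem:tree-aggregation-privacy-bandit}), declaring the rest "identical to Lemma~\ref{lem:single-bidder-deviation}". Your write-up simply makes explicit the arithmetic ($2\varepsilon(\tau-1)+\varepsilon < 2\tau\varepsilon = \alpha$, and analogously for impatient bidders) that the paper leaves implicit.
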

\begin{proof}
It is easy to see that to maximize the utility in this round, the best strategy is to behave truthfully. Any non-honest behavior will cause a loss of utility compared with truthful strategy.
\begin{enumerate}
\item If $v\geq p$ but the bidder rejects the good, then the utility loss is $v-p-0=v-p\geq0$.
\item If $v < p$ but the bidder accepts the good, then the utility loss is $0-(v-p)=p-v>0$.
\end{enumerate}
Thus the utility loss in the current round is $|v-p|$. The following proof is identical as Lemma \ref{lem:single-bidder-deviation}, applying Lemma \ref{lem:single-bidder-stability}.
\end{proof}



\begin{corollary} 
	\label{cor:single-bidder-gt-regret-bandit}
	The game-theoretic regret is bounded by $\alpha T$ for
	\begin{itemize}
	\item $\alpha=2\tau\varepsilon$ under the assumption of large market; or
	\item $\alpha=\frac{2\varepsilon}{1-\gamma}$ under the assumption of impatient bidders.
	\end{itemize}
\end{corollary}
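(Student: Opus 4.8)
The plan is to mirror the game-theoretic regret argument of the full-information case (Corollary~\ref{cor:single-bidder-gt-regret}), with the single difference that Lemma~\ref{lem:single-bidder-deviation-bandit} guarantees truthful behavior already when the posted price is $\alpha$-far from the value, rather than controlling a submitted bid to within $2\alpha$. Recall that the game-theoretic regret is $\OPT(\{v_t\}_t) - \OPT(\{b_t\}_t)$, where in the bandit setting $\OPT(\{b_t\}_t) = \max_i G_{Ti}$ is the revenue of the best fixed price evaluated against the strategic buy/no-buy behavior, i.e.\ the same benchmark whose gap to $\E[\widetilde{\ALG}]$ is the learning regret of Lemma~\ref{lem:bandit-FTPL}.

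First I would fix $p^*$ to be the optimal fixed price with respect to the true values $\{v_t\}_t$ and consider the fixed price $\hat p = p^* - \alpha$ evaluated against the strategic behavior. The key step is to show that a sale occurs at price $\hat p$ in every round $t$ with $v_t \ge p^*$: for such a round, $v_t - \hat p = v_t - p^* + \alpha \ge \alpha$, so $|v_t - \hat p| \ge \alpha$ and Lemma~\ref{lem:single-bidder-deviation-bandit} forces the bidder to behave truthfully when facing $\hat p$; since $\hat p \le v_t$, he buys. Summing, the revenue of $\hat p$ against the strategic behavior is at least $(p^* - \alpha)\,|\{t : v_t \ge p^*\}| = \OPT(\{v_t\}_t) - \alpha\,|\{t : v_t \ge p^*\}| \ge \OPT(\{v_t\}_t) - \alpha T$. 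Hence $\OPT(\{b_t\}_t) \ge \OPT(\{v_t\}_t) - \alpha T$, which is exactly the claimed bound. The impatient-bidder case is identical, the only change being the value of $\alpha$ supplied by Lemma~\ref{lem:single-bidder-deviation-bandit}.

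Two points need care. First, the boundary case $p^* < \alpha$ (so that $\hat p$ would be negative) must be handled separately, but there the bound is immediate since $\OPT(\{v_t\}_t) = p^*\,|\{t : v_t \ge p^*\}| \le \alpha T$ already. Second, and this is the main conceptual obstacle, unlike the full-information setting the per-round gains $g_{ti}$ at prices other than the one actually posted are \emph{counterfactual}: $\OPT(\{b_t\}_t)$ is built from the revenue the strategic bidder \emph{would} generate at each candidate price. I would therefore argue at the level of the single expert whose price is $\hat p$, and invoke Lemma~\ref{lem:single-bidder-deviation-bandit} round by round as a statement about the bidder's response to the hypothetically posted price $\hat p$; this is precisely the regime $|v_t - \hat p| \ge \alpha$ in which the lemma certifies truthful play, so no assumption about the bidder's behavior inside the $\alpha$-window is ever needed. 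The remaining discretization slack between the continuous and multiple-of-$\alpha$ optima is absorbed into the learning regret as usual and does not affect this bound.
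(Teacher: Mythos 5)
Your proposal is correct and follows essentially the same route as the paper's own proof: fix the optimal price $p^*$ w.r.t.\ true values, compare against the fixed price $p^* - \alpha$ played against strategic (bandit) behavior, and use Lemma~\ref{lem:single-bidder-deviation-bandit} to argue that every sale at $p^*$ under truthful values transfers to a sale at $p^* - \alpha$ under strategic behavior, losing at most $\alpha$ per round. Your write-up is in fact somewhat more careful than the paper's, since you explicitly verify the key transfer step ($v_t \ge p^*$ implies $|v_t - \hat p| \ge \alpha$, hence truthful purchase), handle the boundary case $p^* < \alpha$, and flag that the benchmark gains at unposted prices are counterfactual — points the paper's proof leaves implicit.
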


\begin{proof}
	Suppose $p^*$ is the optimal fixed price w.r.t. truth bidders. 
	Consider the revenue of fixed price $p^* - \alpha$ w.r.t. strategic bidders. 
	We call them strategic bidders in the sense that their behaviors follow Lemma \ref{lem:single-bidder-deviation-bandit}.
	Every time that price $p^*$ has a sale w.r.t. truth bidders, price $p^* - \alpha$ would also get a sale w.r.t. strategic bidders. 
	Hence, the revenue of price $p^* - \alpha$ w.r.t. strategic bidders is at least the optimal w.r.t. truth bidders minus $\alpha T$.
	So the corollary follows. 
\end{proof}

\subsection{Bounding Regret (Proof of Theorem \ref{thm:bandit_total_regret})}\label{subsection:single-bandit-total-regret}
	
\begin{proof}
	We prove the theorem under the large market assumption, and the other is similar. 
	The learning regret is $\tilde{\O}(\sigma+\frac{TK}{\sigma}+\varepsilon K +T\alpha)$ (Lemma \ref{lem:bandit-FTPL}), and game-theoretic regret is $\alpha T$ (Corollary \ref{cor:single-bidder-gt-regret-bandit}). 
	Under large market assumption, we have $\alpha=2\tau\varepsilon$.
	Recall 
	$\sigma=\tilde{\O}(\frac{K}{\alpha\varepsilon})=\tilde{\O}(\tau\alpha^{-3})$,
	the total regret is
	$\tilde{\O}(\tau\alpha^{-3}+\frac{T\alpha^2}{\tau}+\frac{1}{\tau}) +\O(T\alpha)=\tilde{\O}(\tau\alpha^{-3})+\O(T\alpha)$.
	It is at most ${\O}(T\alpha)$ when $T\geq \tilde{\O}(\tau\alpha^{-4})$.
\end{proof}

\bigskip

\end{document}